\documentclass[10pt]{article} 

\usepackage{jheppub} 
\usepackage[numbers,square,sort&compress]{natbib}

\usepackage[utf8]{inputenc}
\usepackage[english]{babel}
\usepackage{microtype} 
\usepackage{tikzsymbols}
\usetikzlibrary{external}
\usepackage{svg}
\svgpath{{figures/}}
\usepackage{amsmath,amssymb,amsfonts,amsthm,mathtools,bm}

\newtheorem{remark}{Remark}
\newtheorem{theorem}{Theorem}
\newtheorem*{theorem*}{Theorem}

\newtheorem{corollary}{Corollary}
\newtheorem{definition}{Definition}
\newtheorem{lemma}{Lemma}

\def\E{\mathbb{E}}

\newcommand{\SOMMA}[2]{\sum\limits_{#1}^{#2}}

\allowdisplaybreaks 

\usepackage[ruled,vlined]{algorithm2e} 
\SetKwProg{Fn}{Function}{:}{end}
\SetKw{KwBreak}{break}
\usepackage{algpseudocode}            

\usepackage{graphicx}  
\graphicspath{{Images/}}
\usepackage{xcolor}    
\usepackage{float}     
\usepackage{booktabs}
\usepackage{longtable}
\usepackage{enumitem}
\usepackage{tikz}
\usepackage{pgfplots}
\usepackage{svg}
\usepackage{subcaption}

\usetikzlibrary{
    positioning,
    fit,
    calc,
    decorations.pathreplacing,
    arrows.meta,
    decorations.text,
    backgrounds,
    shapes.symbols,
    decorations.pathmorphing,
    shapes.geometric,
    shadows,
    }

\usepackage{cleveref} 
\usepackage{csquotes} 

\let\oldbibliography\thebibliography
\renewcommand{\thebibliography}[1]{%
  \oldbibliography{#1}%
  \setlength{\itemsep}{0pt}
  \setlength{\parskip}{0pt}
  \linespread{0.9}\selectfont
  \scriptsize
}

\usepackage{setspace}
\title{Classifications in modular restricted Boltzmann machines}

 \author[a,b,c]{Elena Agliari}%
 \author[a,b]{, Andrea Lepre}%
 \author[a]{, Edoardo Roscani}
 \affiliation[a]{Dipartimento di Matematica, Sapienza Università di Roma, Italy}
 \affiliation[b]{GNFM, Istituto Nazionale di Alta Matematica Francesco Severi (INdAM), Roma, Italy}
 \affiliation[c]{CNR-Nanotec, Unità di Lecce, Lecce, Italy}

\abstract{We consider a modular associative neural network made of $L$ Hopfield models (HMs), coupled so that intra-module interactions are Hebbian and inter-module interactions are anti-Hebbian; this competitive coupling is known to endow the network with pattern-disentanglement capabilities. The integral representation of this system coincides with an assembly of $L$ restricted Boltzmann machines (RBMs) whose hidden layers are coupled, thereby extending the HM-RBM duality to the modular setting. We then train this modular RBM, via one-step contrastive divergence, to perform a classification task in which a query encoded on the visible layers is mapped onto an $L$-tuple of labels read off the hidden layers. When the query is composed of $L$ patterns that are mutually orthogonal on average, we prove that the RBM weights obtained as empirical means over the training dataset, as suggested by the HM-RBM equivalence, constitute a fixed point of the learning dynamics, and we derive an explicit, non-asymptotic bound on the residual drift. We then turn to a harder classification task in which each module is queried with a mixture of $L$ patterns and we show numerically that the same setting for the RBM weights still provides an effective set-up, letting the network jointly classify and disentangle the mixture.}

\begin{document}

\maketitle

\section{Introduction}

Among the various approaches to the theory of learning, those grounded in statistical mechanics occupy a special place: by casting a neural network as a disordered spin system, its ability to retrieve, generalise, or classify can be studied with the same toolbox developed for spin glasses, often yielding closed-form, testable predictions rather than purely empirical ones, see e.g., \cite{Lenka-2017}. The paradigm of this programme is the Hopfield model (HM) of associative memory~\cite{hopfield1982}, in which a set of patterns is stored through a Hebbian prescription and retrieved as the minima of a quenched-disorder energy landscape~\cite{amitgutfreundsompolinsky1985}. A closely related, and equally paradigmatic, architecture is the restricted Boltzmann machine (RBM)~\cite{ackleyhintonsejnowski1985}, a bipartite (in the simplest version) energy-based model in which a hidden representation is learnt from data by minimising a Kullback-Leibler divergence via, e.g., the contrastive divergence~\cite{hinton2002cd,10.1093/oso/9780198530237.001.0001}. Remarkably, the two systems are, under mild conditions, one and the same object~\cite{BARRA20121,Agliari_2012PRL, Mezard-PRE2017, TubianaMonasson-PRL2017, DecelleFurtlehner-2021, Roudi-EPL2025}. The connection can be made stronger by equipping the Hopfield side with a ``supervised Hebbian rule'' \cite{Alemanno_2023}: assuming that the network has access only to a noisy sample of the patterns to be retrieved and that their label is known (whence the adjective supervised), class-wise averages can replace the ground-truth patterns in Hebb's rule, then, the resulting weights in the equivalent RBM can be shown to constitute an asymptotically stationary point of the contrastive-divergence dynamics \cite{LEONELLI2021314}. In general cases, this set-up still constitutes an efficient initialization for training \cite{SmartZilman-ICLR2021}.

Real data, however, are rarely single, isolated objects: a musical chord is a superposition of notes, an image a composition of several visual concepts, a document a mixture of topics. Reproducing this compositional structure at the level of an associative network requires several HMs that cooperate, or compete, so as to specialise on the individual constituents of the signal. Recently, a modular architecture in which $L$ HMs interact through intra-module Hebbian couplings and inter-module anti-Hebbian, competitive couplings was introduced and shown to spontaneously accomplish pattern disentanglement \cite{AGLIARI2026108181,agliari2025multi,agliari2025generalized}: presented with a mixture of $L$ patterns, replicated identically across the $L$ modules, the network relaxes to a configuration in which each module retrieves a single, distinct constituent of the mixture. The model was subsequently extended to the case in which the patterns to be disentangled are not directly accessible but only known through a finite sample of noisy examples; there, the disentanglement region was shown to gradually shrink as the quality of the dataset degrades \cite{AGLIARI2026131134}. 

In this work, we turn to the equivalent representation of these modular networks, namely, an assembly of RBMs coupled through their hidden neurons, and investigate its learning dynamics. In particular, we show that the weight configuration suggested by the correspondence with the modular associative model goes beyond the formal equivalence between the two descriptions, being an interpretable, attracting point of a genuine learning dynamics. 

To reach this goal we proceed as follows. In Section~\ref{sec:Model}, we review the HM-RBM equivalence in the modular setting and we specialize on a classification task, in which a query encoded by the visible layers is to be mapped onto an $L$-tuple of labels encoded by the hidden layers; next we specify two datasets of increasing difficulty on which the machine can be trained. In the former, each module is presented with a noisy example of a distinct ground-truth; in the latter, every module is presented with the same noisy mixture of all $L$ ground-truths, so that classification must be accompanied by disentanglement.
Then, in Section~\ref{sec:Learning_Phase} we derive the contrastive-divergence learning rule for the intra-module weights of this modular RBM and, restricting to the simpler of the two datasets, we prove a fixed-point theorem: the class-wise empirical-mean weights suggested by the associative-network side of the duality are, on average, stationary under one step of contrastive divergence, with a bounded residual drift. This result generalises and makes more accurate the asymptotic single-module statement of~\cite{LEONELLI2021314} to any number of coupled modules. Finally, in Section~\ref{sec:Numerical} we verify this bound numerically and we check that the planted weights do solve the classification task they were designed for. We then go beyond what the theorem guarantees: running the learning rule from
weights that carry no information whatsoever about the archetypes, either identically zero or Gaussian with increasing variance, we find that it converges to the empirical-mean configuration in every case, so that the latter is not merely
stationary but an attractor of contrastive divergence, with a basin of attraction wide enough to accommodate initialisations whose norm is comparable to the signal itself. Also, we move to the harder dataset and we show numerically that, although an exact analytical result for the empirical-mean weights is unavailable, they remain an effective set-up, letting the modular RBM jointly classify and disentangle the mixture across the region of parameter space predicted by~\cite{AGLIARI2026131134}. Proofs and auxiliary technical results are collected in Appendices~\ref{App:A} and~\ref{App:B}.

\section{The Model} \label{sec:Model}
Let us consider $L$ modules of $N$ binary neurons each, with neuron state denoted by
$\bm \sigma^a = (\sigma^a_1, \dots, \sigma^a_N) \in \{-1, + 1\}^N$ with $a=1, ...,L$, and $K$ binary patterns denoted by $\bm \xi^\mu = (\xi_1^\mu, \dots, \xi^\mu_N) \in \{-1,+ 1\}^N$, with $\mu = 1, \dots, K$, whose entries are sampled from the probability distribution
\begin{equation}\label{eq:pattern_prob}
    \mathcal{P}(\xi_i^\mu) = \dfrac{1}{2}\delta(\xi_i^\mu - 1) + \dfrac{1}{2}\delta(\xi_i^\mu + 1). 
\end{equation} 
Then, any global configuration $\boldsymbol \sigma = (\bm \sigma^1, ..., \bm \sigma^L) \in \{-1, +1 \}^{N\times L}$ is associated to a ``cost'' or ``energy'' given by 
\begin{equation} \label{eq:LAM_hamiltonian}
    \mathcal{H}^{\mathrm{HM}}(\bm\sigma;\bm g, \bm\xi):= - \dfrac{1}{2N}\SOMMA{\mu=1}K\SOMMA{a,b=1}L g_{ab}\SOMMA{i,j=1}N \sigma_i^a\xi_i^\mu\xi_j^\mu\sigma_j^b,
\end{equation}
where both inter- and intra-module couplings contribute and
where $\bm g$ is a $L \times L$ matrix that tunes their relevant magnitudes. Specifically, here we focus on the structure
\begin{align} \label{eq:q}
        g_{ab} := \begin{cases}
             \ \ \ 1 \  \text{if} \ \ a=b \\
            -\lambda \ \text{if} \ \ a \not = b
        \end{cases}
    \end{align}
        with $\lambda \in [0, (L-1)^{-1})$ ensuring that $\bm g$ is positive defined; see Fig. \ref{fig:LAM_Model} for a schematic representation of the network.
    \begin{figure}[t!]
      \centering
      \includegraphics[width=0.6\textwidth]{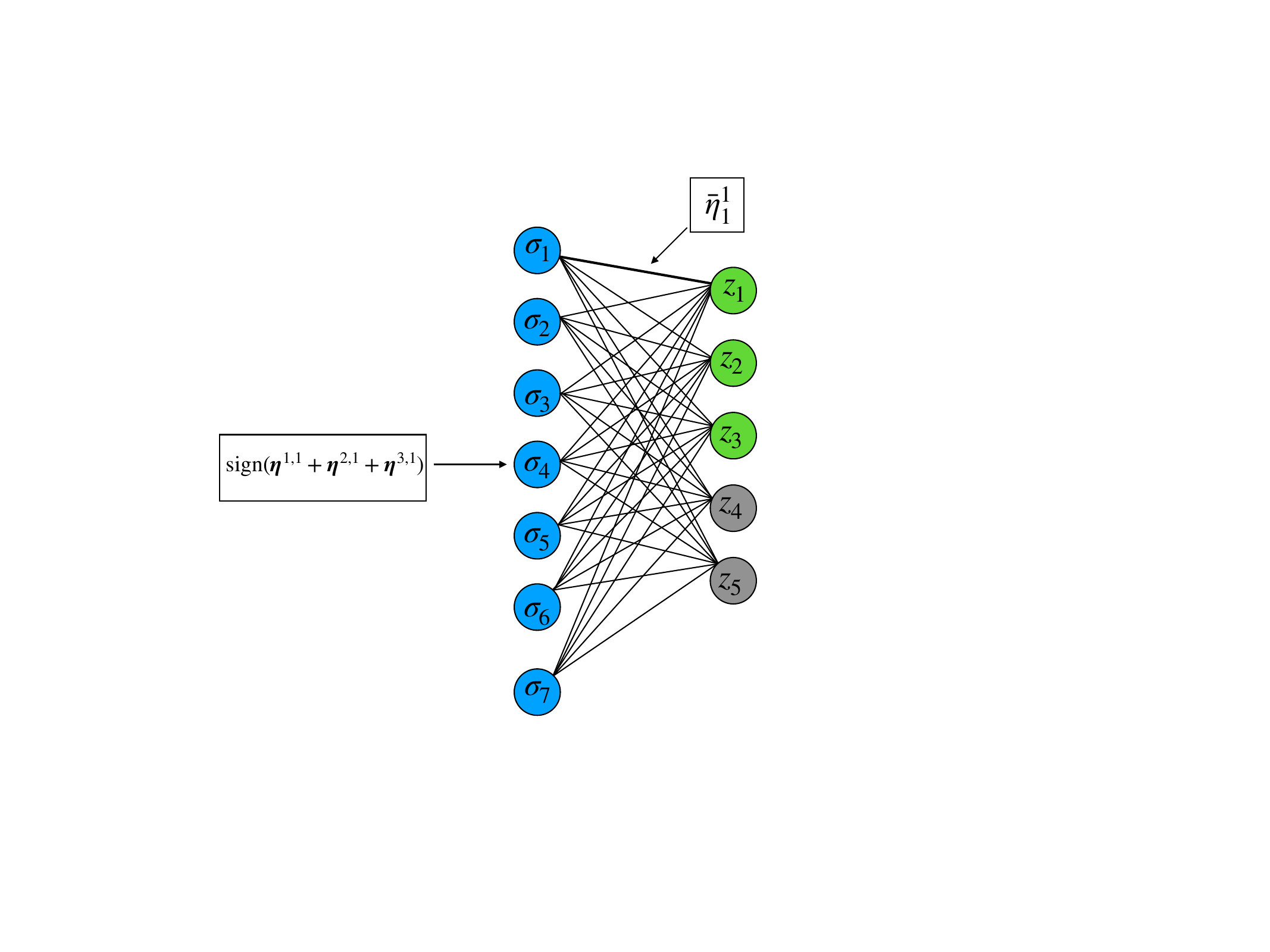}
      \caption{Example of modular associative model with $L = 3$ modules and $N = 6$ neurons per module. 
In each module $a=1, ...,L$, the neurons 
$\bm \sigma^a = (\sigma^a_1, \ldots, \sigma^a_N) \in \{-1,+1\}^N$ are fully-connected, with intra-module Hebbian interactions built on a set of $K$ patterns $\{\bm \xi^\mu\}_{\mu=1,...,K}$ (or on a sample of corrupted copies thereof). 
The $L$ modules are coupled to one another through inter-module inhibitory connections 
of strength $-\lambda$, which penalise configurations in which different modules 
retrieve the same pattern, thereby promoting the simultaneous retrieval of 
$L$ distinct patterns across the $L$ modules.}
      \label{fig:LAM_Model}
    \end{figure}
%
%
    %
    Moreover, we adopt a Boltzmann-Gibbs distribution that assigns to each configuration $\bm \sigma$ a probability
    \begin{equation}
         \mathcal{P}^{\mathrm{HM}}(\bm\sigma ; \beta, \bm g, \bm\xi) := \dfrac{e^{-\beta\mathcal{H}^{\mathrm{HM}}(\bm\sigma;\bm\xi,\bm g)}}{\mathcal{Z}^{\mathrm{HM}}(\beta, \bm g, \bm\xi)}
    \end{equation}
    with the partition function $\mathcal{Z}^{\mathrm{HM}}$  defined as
    \begin{align}
    \mathcal{Z}^{\mathrm{HM}}(\beta, \bm g, \bm\xi) := \SOMMA{\{\bm\sigma^1\}}{2^N} \cdots \SOMMA{\{\bm\sigma^L\}}{2^N} \exp\biggl\{\dfrac{\beta}{2N}\SOMMA{a,b=1}{L}g_{ab}\SOMMA{\mu,i,j=1}{K,N,N}\xi_i^\mu\xi_j^\mu\sigma_i^a\sigma_j^b\biggr\},
\end{align}
and $\beta := 1/T$ tuning the probability width (for $\beta \to 0$ the system will assign random activations to neurons, while for $\beta \to \infty$ the 
system drives neurons deterministically towards the energy minima). 

Remarkably, as a consequence of eqs.~\eqref{eq:LAM_hamiltonian}-\eqref{eq:q}, configurations where all modules are aligned with the same pattern (e.g., $\boldsymbol \sigma = (\boldsymbol \xi^1, ..., \boldsymbol \xi^1)$) are relatively unlikely, whereas configurations in which different modules align with distinct patterns (e.g., $\boldsymbol \sigma = (\boldsymbol \xi^1, ..., \boldsymbol \xi^L)$) are comparatively favored. This property underlies the ability of the system to perform disentanglement tasks \cite{AGLIARI2026108181}: when initialized with a mixture state (e.g., for each module $\boldsymbol \sigma^a = \textrm{sign}(\sum_{\mu=1}^L \boldsymbol \xi^{\mu})$), under suitable conditions over $\lambda$ and $\beta$, Gibbs dynamics drives the system to spontaneously relax towards a state in which each individual module retrieves a distinct component of the composite signal (e.g., $\boldsymbol \sigma^a =  \boldsymbol \xi^{a}$ for $a=1, ...,L$). 

Moreover, the model has been recently generalized in \cite{AGLIARI2026131134} to face scenarios where the patterns $\boldsymbol \xi$ to be reconstructed are not directly accessible, instead, $M$ corrupted copies denoted as $\boldsymbol \eta$ are available. In this context, the latter are referred to as \textit{examples} and meant as ``experimental measurements'' of the original patterns $\boldsymbol \xi$, referred to as ground-truths or \textit{archetypes}. To retain a mathematical control, examples are sampled as
\begin{equation} \label{eq:esempi}
    \eta_i^{\mu, m} := \xi_i^\mu \, \chi_i^{\mu, m}, \quad i = 1, \ldots, N, \quad \mu = 1, \ldots, K, \quad m = 1, \ldots, M,
\end{equation}
where $\chi_i^{\mu,m}$ is a random variable drawn from the distribution:
\begin{equation}
    \mathcal{P}(\chi_i^{\mu,m})= \frac{1+r}{2} \delta(\chi_i^{\mu,m} - 1) + \frac{1-r}{2}\delta(\chi_i^{\mu,m} + 1), \quad \forall \ i,\mu,m, \\
\end{equation}
with $r \in (0,1]$ acting as the dataset quality parameter.
The informativeness of the accessible set $\boldsymbol \eta$ about the target set $\boldsymbol \xi$ is quantified by the parameter 
\begin{equation}
\label{eq:rho_def}
    \rho:=\dfrac{1- r^2}{Mr^2},
\end{equation}
also referred to as dataset \emph{entropy} \cite{Alemanno_2023}. Assuming that for each example the label $\mu$ is revealed (i.e., in a supervised setting), we can replace the unknown archetypes with their estimators 
\begin{equation}\label{eq:etabar_def}
    \bar {\boldsymbol \eta}^\mu := \dfrac{1}{r\,M}\SOMMA{m=1}{M} {\boldsymbol \eta}^{\mu,m},  \quad \mu = 1, \ldots, K,
\end{equation}
whose entries satisfy\footnote{The division by $r$ is what makes $\bar{\bm\eta}^\mu$ an \emph{unbiased} estimator of the archetype it stands for. Had we kept the plain sample mean, its expectation would have been $r\,\bm\xi^\mu$, and the systematic shrinkage by $r$ would have propagated into every quantity built on the weights. It is worth comparing this choice with the one adopted in \cite{AGLIARI2026131134}, where the couplings are built on $\tilde{\bm\eta}^\mu/\sqrt{\mathcal R}$, with $\tilde{\bm\eta}^\mu:=\frac1M\sum_m \bm\eta^{\mu,m}$ the biased estimator and with $\mathcal R:=r^2+(1-r^2)/M$ the related second moment. Each normalization is well suited to the framework in which it is employed. Dividing by $\sqrt{\mathcal R}$ normalizes the \emph{coupling}: since $\frac1N\|\tilde{\bm\eta}^\mu\|^2\to\mathcal R$, the resulting weights have unit norm, so that the scale of the energy, and with it the meaning of $\beta$, is the same at every $\rho$, which is a natural convention when constructing a phase diagram. Dividing by $r$ normalizes instead the \emph{estimator}, which is the natural convention here, where the weights are the object being learnt and are compared, through the learning rule, with a target living on a prescribed scale.}
\begin{equation}\label{eq:etabar_moments}
    \E_{\bm \chi}\bigl[\bar\eta^\mu_i\bigr]=\xi^\mu_i,
    \qquad
    \E_{\bm \chi}\bigl[\bigl(\bar\eta^\mu_i-\xi^\mu_i\bigr)^2\bigr]=\rho,
\end{equation}
where the average being over the corruption variables $\bm \chi$ of \eqref{eq:esempi} at fixed archetypes, and the dataset entropy is exactly the mean-square error of the estimator, thus, $\bar{\bm\eta}^\mu$ is an accurate proxy of $\bm\xi^\mu$ as soon as $\rho\ll1$, that is $Mr^2\gg1$. We can then revise the Hamiltonian \eqref{eq:LAM_hamiltonian} as 
\begin{equation}\label{eq:Hamiltonian_Sup}
    \mathcal{H}^{\mathrm{HM}}(\bm\sigma;\bm g, \bar{\bm\eta}):= - \dfrac{1}{2N}\SOMMA{\mu=1}K\SOMMA{a,b=1}L g_{ab}\SOMMA{i,j=1}N \sigma_i^a\bar \eta_i^\mu\bar \eta_j^\mu\sigma_j^b.
\end{equation}
As shown in \cite{AGLIARI2026131134}, there still exists a region in the parameter space $(\lambda, \beta)$ where this system is able to disentangle a composite signal and reconstruct the constituting archetypes; however, this region progressively shrinks as $\rho$ gets larger.

\begin{remark}
    The quality parameter $r$, which enters the definition \eqref{eq:etabar_def} is a property of the dataset and it is estimable from the dataset alone, with no access to the archetypes. Indeed, denoting by $\tilde\eta^\mu_i:=\frac1M\sum_m\eta^{\mu,m}_i$ the biased estimator, one has $\E[(\tilde\eta^\mu_i)^2]=r^2+(1-r^2)/M$, so that
\begin{equation}\label{eq:r_hat}
    \hat r^{\,2}:=\frac{M\,\widehat{\mathcal R}-1}{M-1},
    \qquad
    \widehat{\mathcal R}:=\frac{1}{KN}\SOMMA{\mu=1}{K}\SOMMA{i=1}{N}\bigl(\tilde\eta^\mu_i\bigr)^2 ,
\end{equation}
is an unbiased estimator of $r^2$ built exclusively on the observed examples, and, being an average of $KN$ bounded and weakly dependent terms, it concentrates around its mean at the rate $(KN)^{-1/2}$. 
\end{remark}

\begin{figure}[t!]
    \centering
    \includegraphics[width=0.65\textwidth]{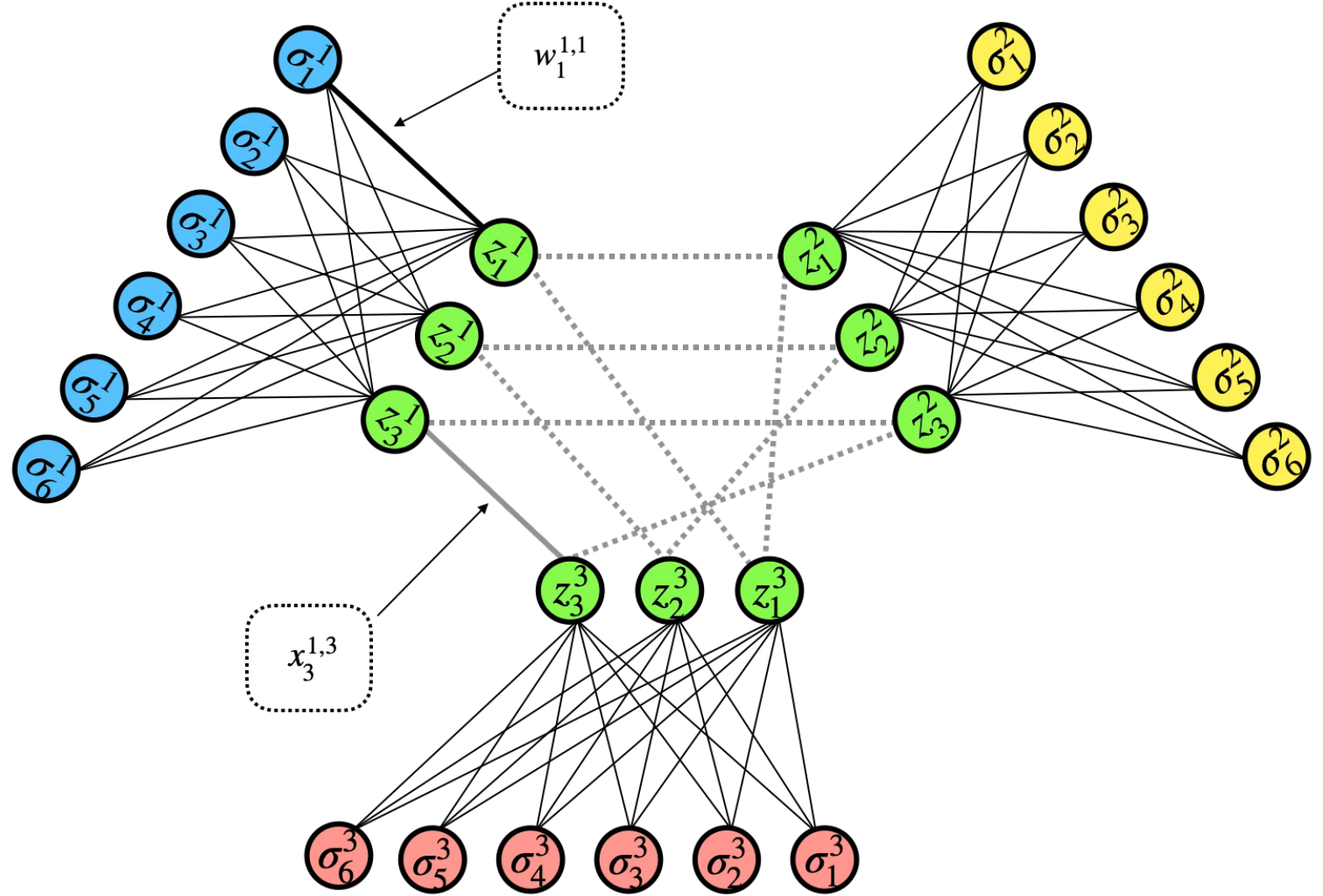}
    \caption{Architecture of the modular RBM equivalent to the modular HM in Fig.~\ref{fig:LAM_Model}. It exhibits $L=3$ modules,
    $K=3$ hidden neurons and $N=6$ visible neurons per module.
    In fact, each module $a \in \{1,2,3\}$ consists of a visible layer
    $\boldsymbol{\sigma}^{a} = (\sigma^{a}_{1},\ldots,\sigma^{a}_{N}) \in \{-1,+1\}^{N}$
     and a hidden layer
    $z^{a} = (z^{a}_{1},\ldots,z^{a}_{K}) \in \mathbb{R}^{K}$.
    Black edges represent the intra-module weights $w^{\mu,a}_{i}$, coupling the
    $i$-th visible neuron of module $a$ to the $\mu$-th hidden neuron of the same module.
    Dashed edges denote the inter-module couplings $x^{ab}_{\mu}$, connecting
    hidden neurons with the same class index $\mu$ across different modules.}
    \label{fig:Modular_RBM}
\end{figure}

\subsection{Equivalence with restricted Boltzmann machines}

As shown in \cite{AGLIARI2026108181}, the modular Hebbian network described by \eqref{eq:LAM_hamiltonian} exhibits an equivalent representation in terms of a combination of $L$ interacting RBMs, see Fig. \ref{fig:Modular_RBM}. More specifically, each RBM consists of a Bernoulli--Gaussian bipartite network with a layer (also referred to as ``visible'') containing $N$ binary neurons denoted as $\boldsymbol \sigma^a \in \{-1, +1 \}^N$, and with a layer (also referred to as ``hidden'') containing $K$ real-valued neurons $\boldsymbol z^a \in \mathbb{R}^K$, endowed with a Gaussian prior, for $a=1, ...,L$. Interactions occur both between neurons in different layers, through the weight matrix $\boldsymbol w^{a} \in \mathbb R^{N\times L}$ ($a = 1, \dots, L$), and between real-valued neurons belonging to different modules, through the coupling matrix $\boldsymbol x^{a} \in \mathbb R ^{K \times L}$ ($a = 1, \dots, L$), as described by the following cost function:
\begin{equation} \label{eq:Hamiltonian_RBMM}
    \mathcal{H}^{\text{RBM}}(\bm \sigma, \bm z| \bm w) := -\dfrac{1}{\sqrt{N}}\SOMMA{a=1}{L}\SOMMA{\mu=1}{K}\SOMMA{i=1}{N}w_i^{\mu,a}\sigma_i^az_\mu^a + \dfrac{1}{2}\SOMMA{\mu=1}{K}\SOMMA{a,b=1}{L,L}z_\mu^a x^{ab}_{\mu}z_\mu^b.
\end{equation}

The equivalence is proven by showing that, when the RBM weights coincide with the stored patterns, the partition functions of the two models coincide. In fact, setting $w_i^{\mu,a} = \xi_i^{\mu}$ and $x_{\mu}^{ab} = (g^{-1})_{ab}$ for any $i, \mu, a, b$ and applying a Hubbard-Stratonovich transformation
\begin{align}
\mathcal{Z}^{\text{HM}}(\beta, \boldsymbol{g}, \boldsymbol{\xi}) 
&= \sum_{\{\bm \sigma^1\}} \cdots \sum_{\{\bm \sigma^L\}} 
\exp\left(\frac{\beta}{2N} \sum_{a,b=1}^{L,L} g_{ab} \sum_{\mu,i,j=1}^{K,N,N} 
\xi_i^{\mu} \xi_j^{\mu} \sigma_i^{a} \sigma_j^{b}\right) \notag \\
&= \sum_{\bm\sigma^1... \bm\sigma^L}\int_{\mathbb{R}^{LK}} \prod_{\mu=1}^{K} \dfrac{\beta^{L/2}}{(2\pi)^{L/2}\sqrt{\det \bm g}}\, d^L\bm z_\mu\;\exp\left(-\dfrac{\beta}{2}\SOMMA{\mu,a,b}{K,L,L}z_\mu^a (g)^{-1}_{ab}z_\mu^b + \dfrac{\beta}{\sqrt{N}}\SOMMA{\mu,a,i}{K,L,N}\xi_i^\mu \sigma_i^a z_\mu^a\right)  \notag \\
&= \mathcal{Z}^{\text{RBM}}(\beta, \boldsymbol{g}, \boldsymbol{\xi}). \notag
\end{align}
This result constitutes a straightforward generalization of the equivalence between the HM and a single RBM (see, e.g., \cite{BARRA20121,Agliari_2012PRL}), which is recovered as $L=1$ or as $\lambda =0$ (in that case we recover $L$ independent HMs/RBMs). Remarkably, the equivalence between the two partition functions implies that the thermodynamics of the two models coincide, since 
the partition function plays as the generating function of thermodynamic observables (e.g., free energy, internal energy, entropy, magnetization).
\begin{figure}[t!]
    \centering
    \includegraphics[width=0.8\linewidth]{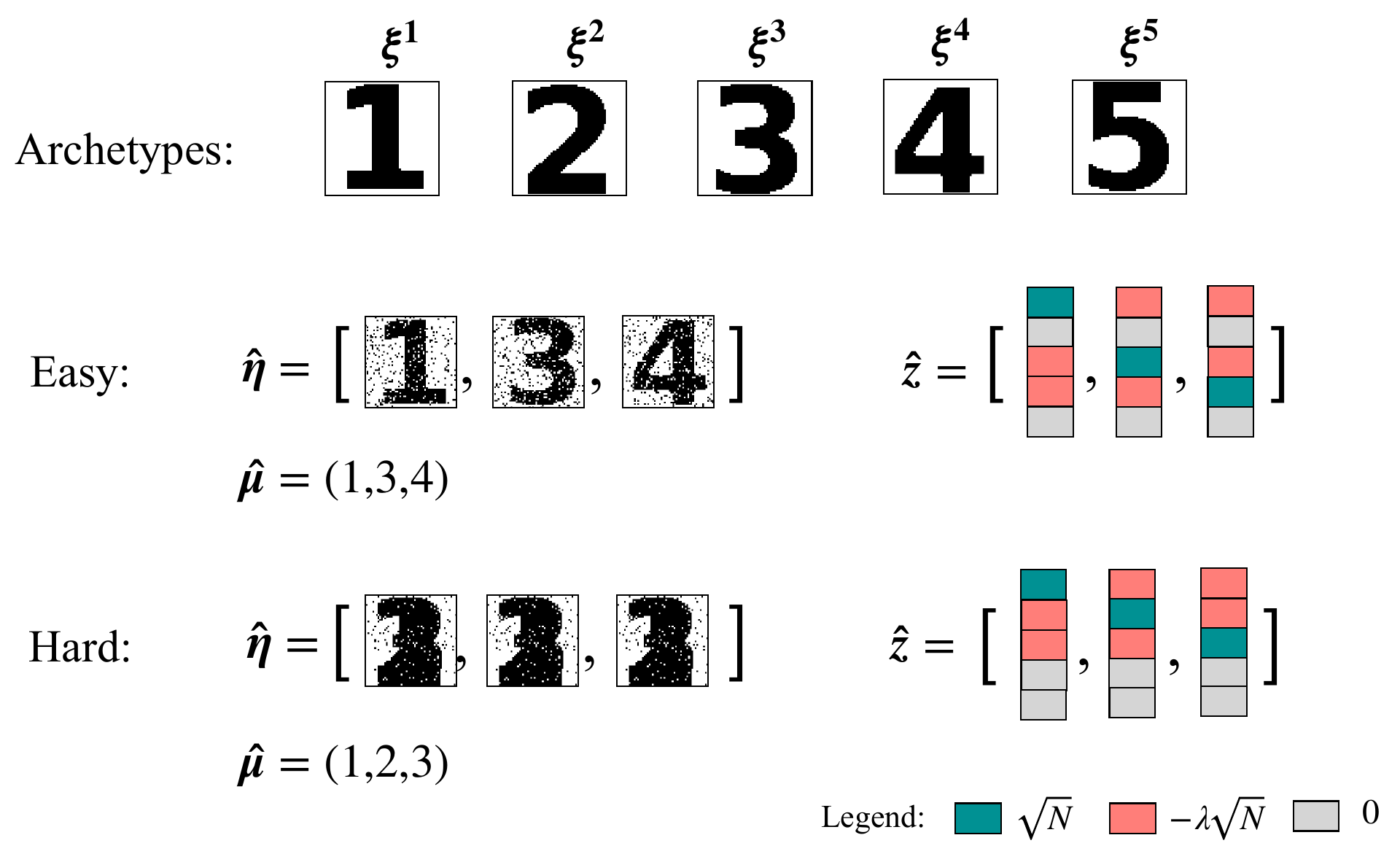}
    \caption{Schematic representation of the two supervised datasets introduced in Sec.~\ref{sec:Model}, drawn for $K=5$ archetypes and $L=3$ modules. The top row shows the archetypes $\bm\xi^1,\dots,\bm\xi^5$, here, for illustrative purposes, taken as five digits. A training item is a pair $(\hat{\bm\eta},\hat{\bm z})$, where $\hat{\bm\eta}$ collects the $L$ vectors presented to the $L$ visible layers, while $\hat{\bm z}$ collects the $L$ label codes \eqref{eq:clamp}. In Case 2a, simple classification, the $L$ visible layers receive examples of $L$ distinct archetypes, here $\hat{\bm\mu}=(1,3,4)$, and each module has to classify the archetype it is shown. In Case 2b, hard classification, the $L$ visible layers all receive the same vector, the mixture of $L$ distinct examples, here $\hat{\bm\mu}=(1,2,3)$, so that a single mixture is presented $L$ times and the modules have to disentangle it, each classifying a different constituent. The label codes are built in the same way in the two cases, as recalled in the legend.}
    \label{fig:schematic_dataset}
\end{figure}
However, this setup provides no information about the machine's training process. 
To fill this gap we need to specify the kind of task we would like to train the machine for, and accordingly revise the framework. Here, we focus on a classification task such that, given an item (query) represented by a binary vector $\boldsymbol{\hat \eta} \in \{-1, +1\}^{N\times L}$, this is associated to a $L$-tuple label $\hat {\bm\mu} = (\mu_1, ..., \mu_L) \in \{1, ..., K\}^L$. The former will be straightforwardly encoded by the visible layers, and the latter by the hidden layers. In particular, we choose the following mapping\footnote{The factor $\sqrt N$ and the negative entries in
\eqref{eq:clamp} correspond to the value the hidden layer takes when
the machine classifies correctly, as follows from the conditional law discussed in Section~\ref{sec:Learning_Phase}.} 
$\hat {\bm\mu} \to \boldsymbol{\hat z} \in \mathbb R^{K \times L}$: 
\begin{equation}
\label{eq:clamp}
    \hat z^a_\mu = \sqrt{N}\biggl(\delta_{\mu,\mu_a} - \lambda \sum_{c \not =a}\delta_{\mu,\mu_c}\biggr) 
    , \quad \forall \mu, a.
\end{equation}
In words, when clamped $\hat{\boldsymbol z}^a$, the $a$-th layer highlights the single label $\mu_a$ by the unique positive entry $\hat z^a_{\mu_a}$, the remaining hidden neurons of the layer taking a negative or null value.

Thus, the training dataset is given by 
\begin{equation}\label{eq:dataset_S}
    \mathcal{S} := \bigl\{(\hat{\bm\eta}^{\,p},\,\hat{\bm z}^{\,p})\bigr\}_{p=1,\dots,P},
\end{equation} 
with $P$ the number of items in the dataset. In the following, we will specialize $\hat {\bm z}^p$ and $\hat{\bm \eta}^{p}$ as follows (see also Fig.~\ref{fig:schematic_dataset}): 
\begin{itemize}
\item[1.] 
For each $p=1, ...,P$, $\hat {\bm \mu}^p =(\hat \mu_1^p, ...,\hat \mu_L^p)$ is a tuple of $L$ distinct labels, and $\hat {\bm z}^p$ follows from \eqref{eq:clamp};
\item[2a.] [Simple classification]
$\hat{\bm \eta}^{p} = (\bm \eta^{\hat \mu_1^p}, ..., \bm \eta^{\hat \mu_L^p})$, namely each query is a sequence of $L$ examples corresponding to $L$ distinct archetypes $\bm \xi^{\hat \mu_1^p}, ..., \bm \xi^{\hat \mu_L^p}$, and defined according to \eqref{eq:esempi};
\item[2b.] [Hard classification] $\hat{\bm \eta}^{p} = (\bm \eta^{\hat {\bm\mu}^p}, ..., \bm \eta^{\hat {\bm\mu}^p})$, where $\bm \eta^{\hat {\bm\mu}^p}= \textrm{sign}(\sum_{\ell=1}^L \bm \eta^{\hat {\mu}^p_{\ell}})$, namely each query is a sequence of repeated spurious configurations combining $L$ examples, defined according to \eqref{eq:esempi}.
\end{itemize}

The case 2a will be addressed analytically and, in particular, in the next section we will prove that, under suitable conditions and by posing $\bm x = \bm g^{-1}$, the learning rules for $\bm w$ meant to train the machine to accomplish the classification task, display as fixed points $\bm w = \bar {\bm \eta}$. As a result, 
\begin{align}
\mathcal{Z}^{\text{RBM}}(\beta, \boldsymbol{g}, \boldsymbol{\bar \eta})
&= \sum_{\bm\sigma^1... \bm\sigma^L}\int_{\mathbb{R}^{LK}} \prod_{\mu=1}^{K} \dfrac{\beta^{L/2}}{(2\pi)^{L/2}\sqrt{\det \bm g}}\, d^L\bm z_\mu\;\exp\biggl\{-\dfrac{\beta}{2}\SOMMA{\mu,a,b}{K,L,L}z_\mu^a (g)^{-1}_{ab}z_\mu^b \notag + \dfrac{\beta}{\sqrt{N}}\SOMMA{\mu,a,i}{K,L,N}\bar \eta_i^\mu \sigma_i^a z_\mu^a\biggr\} \\
&= \sum_{\{\sigma^1\}} \cdots \sum_{\{\sigma^L\}} 
\exp\bigl\{\frac{\beta}{2N} \sum_{a,b=1}^{L,L} g_{ab} \sum_{\mu,i,j=1}^{K,N,N} 
\bar \eta_i^{\mu} \bar \eta_j^{\mu} \sigma_i^{a} \sigma_j^{b}\bigr\} \notag \\
&= \mathcal{Z}^{\text{HM}}(\beta, \boldsymbol{g}, \boldsymbol{\bar \eta}),
\end{align}
that is precisely the partition function of the supervised modular networks \eqref{eq:Hamiltonian_Sup}. 

This equivalence suggests that the setting
\begin{equation} \label{eq:setting}
    w_i^{\mu,a} = \bar \eta_{i}^\mu, \quad i=1, \dots, N, \quad \mu=1, \dots, K \quad a=1, \dots, L
\end{equation}
could also be a meaningful one for Case 2b, as corroborated numerically in Sec.~\ref{sec:Dinsentanglement}.

\section{Analytical Results} \label{sec:Learning_Phase}
Let us consider the cost function of the modular RBM~\eqref{eq:Hamiltonian_RBMM}
and set the inter-module couplings to $\bm x=\bm g^{-1}$, while the intra-module
weights $\bm w$ are treated as the only learnable parameters \footnote{This choice follows from the equivalence established in Section~\ref{sec:Model}: the structure of the matrix $\bm g$ defines the architectural fingerprint by penalising shared archetype retrieval across modules. 
Furthermore, the homogeneous off-diagonal entries mirror the model symmetry, which assigns identical relevance to all modules, while the magnitude $\lambda$ can ultimately be absorbed into $\bm w$.}.
The question we address in this section is whether a
standard contrastive-divergence rule, acting on $\bm w$ alone, drives the machine
to the weight configuration $\bm w=\bar{\bm\eta}$ that makes it coincide with the
supervised modular network of eq.~\eqref{eq:Hamiltonian_Sup}.

\subsection{Learning rules} \label{sec:learning_rules}
Let us recall, from Sec.~\ref{sec:Model}, the supervised training set 
$\mathcal{S} = \bigl\{(\hat{\bm\eta}^{\,p},\,\hat{\bm z}^{\,p})\bigr\}_{p=1,\dots,P}$,
where each item $\hat{\bm\eta}^{\,p}=(\bm\eta^{\hat\mu_1^p},\dots,\bm\eta^{\hat\mu_L^p})$
is a sequence of $L$ examples of $L$ distinct archetypes (Case~2a), and the label
$\hat{\bm z}^{\,p}$ is given by Eq.~\eqref{eq:clamp}. 
The goal of learning is to capture the distribution underlying the data in $\mathcal{S}$. Assuming the items of $\mathcal{S}$ to be i.i.d., we estimate this distribution through the empirical measure, i.e., the uniform measure over the dataset,
\begin{equation}
    \label{eq:Q_def}
    \mathcal Q(\bm \sigma, \bm z) := \dfrac1P\SOMMA{p=1}{P}
    \delta\bigl(\bm\sigma-\hat{\bm\eta}^{\,p}\bigr)\,
    \delta\bigl(\bm z-\hat{\bm z}^{\,p}\bigr),
\end{equation}
the two deltas being understood on $\{-1,+1\}^{N\times L}$ and on $\mathbb R^{K\times L}$ respectively, and \eqref{eq:Q_def} reducing at $L=1$ to the target law adopted for a single machine in \cite{Alemanno_2023}.

A natural loss function to be minimised during the update of
the model's parameters is the Kullback--Leibler divergence
\begin{equation}\label{eq:KL_def}
    D_{KL}(\mathcal Q \,\|\, \mathcal P) := \sum_{\{\boldsymbol{\sigma}\}} \int d\bm z\;
    \mathcal Q(\boldsymbol{\sigma}, \bm z)\,
    \ln \frac{\mathcal Q(\boldsymbol{\sigma}, \bm z)}{\mathcal P(\boldsymbol{\sigma}, \bm z)} .
\end{equation}
The weights are then updated by gradient descent and we get \cite{10.1093/oso/9780198530237.001.0001}
\begin{equation}\label{eq:KL_grad}
    \frac{\partial D_{KL}(\mathcal Q \,\|\, \mathcal P)}{\partial w_i^{\mu,a}}
    = \beta \left[
    \mathbb E_{\mathcal{Q}}\left( \frac{\partial \mathcal H^{\text{RBM}}(\boldsymbol{\sigma},
    \boldsymbol{z} \mid \boldsymbol{w})}{\partial w_i^{\mu,a}}\right)
    -
    \mathbb E_{\mathcal{P}} \left(\frac{\partial \mathcal H^{\text{RBM}}(\boldsymbol{\sigma},
    \boldsymbol{z} \mid \boldsymbol{w})}{\partial w_i^{\mu,a}}\right)
    \right],
\end{equation}
where $\E_{\mathcal{Q}}$ is the average over the empirical
distribution $\mathcal Q$ and $\E_{\mathcal{P}}$ is the average
over the equilibrium distribution $\mathcal P$ of the modular RBM at temperature
$\beta^{-1}$. Since the cost function~\eqref{eq:Hamiltonian_RBMM} is linear in the weights,
\begin{equation}\label{eq:dH_dw}
    \frac{\partial \mathcal H^{\text{RBM}}(\boldsymbol{\sigma}, \boldsymbol{z}
    \mid \boldsymbol{w})}{\partial w_i^{\mu,a}}
    = -\frac{1}{\sqrt{N}}\, \sigma_i^a z_\mu^a ,
\end{equation}
so that, denoting by $\epsilon$ the learning rate, the gradient-descent learning
rule~\eqref{eq:KL_grad} reads as
\begin{equation}\label{eq:learning_rule_w}
    \Delta w_i^{\mu,a} = \dfrac{\epsilon \beta}{\sqrt N} \left(
    \E_{\mathcal{Q}} (\sigma_i^a z_\mu^a)
    - \E_{\mathcal{P}} (\sigma_i^a z_\mu^a )
    \right), \quad \forall i, \mu, a.
\end{equation}
As standard, since the equilibrium measure $\mathcal P$ is intractable for large sizes, we resort instead to Markov Chain Monte Carlo sampling via alternating Gibbs sweeps. Specifically, running the chain requires the two conditional distributions, derived in
Appendix~\ref{App:A}, which read
\begin{align}
     \mathcal{P}(\bm z\mid \bm \sigma)
     &= \prod_{\mu=1}^K\dfrac{\beta^{L/2}}{(2\pi)^{L/2}\sqrt{\det\bm g}}
        \exp\!\Big[-\tfrac{1}{2}\big(\bm z_\mu - \bm g\,\bm u_\mu(\bm\sigma)\big)^{\!T}
        (\beta^{-1}\bm g)^{-1}\big(\bm z_\mu - \bm g\,\bm u_\mu(\bm\sigma)\big)\Big],
     \label{eq:Pzgivensigma}\\[4pt]
     \mathcal{P}(\bm \sigma\mid \bm z)
     &= \prod_{i=1}^N\prod_{a=1}^L
        \dfrac{\exp\!\big(\beta \sigma_i^a h_i^a(\bm z)\big)}{2\cosh\!\big(\beta h_i^a(\bm z)\big)},
     \label{eq:Psigmagivenz}
\end{align}
where we introduced the two local fields
\begin{equation}\label{eq:u_and_h}
    u^a_\mu (\bm \sigma)= \dfrac{1}{\sqrt{N}} \SOMMA{i=1}{N} w_i^{\mu,a} \sigma_i^a,
    \qquad
    h_i^a (\bm z) = \dfrac{1}{\sqrt{N}} \SOMMA{\mu=1}{K}w_i^{\mu,a}z_{\mu}^a .
\end{equation}
As expected, setting $L=1$ recovers the conditionals of the single
RBM~\cite{BARRA20121,LEONELLI2021314}.

\begin{remark}
\label{rem:clamp}
Equation \eqref{eq:Pzgivensigma} explains the encoding \eqref{eq:clamp} of the
label. Writing the local field \eqref{eq:u_and_h} in terms of the generalised
Mattis magnetisation
\begin{equation}
    \label{eq:Mattis_magnetisation}
    m_\mu^a(\bm\sigma|\bm w):=\dfrac1N\sum_{i=1}^N w_i^{\mu,a}\sigma_i^a
\end{equation}
one has $u_\mu^a(\bm\sigma)=\sqrt N\,m_\mu^a(\bm\sigma| \bm w)$; since
\eqref{eq:Pzgivensigma} is Gaussian with mean $\bm g \bm u_{\mu}(\bm \sigma)$, a
sampled hidden configuration reads
\begin{equation}\label{eq:z_scale}
    z_\mu^a=\sqrt N\,\bigl(\bm g\,\bm m_\mu(\bm\sigma| \bm w)\bigr)^a+O_{\mathcal P}(\beta^{-1/2}) .
\end{equation}
Hidden activations thus inherit the scale of the Mattis magnetisation: an $O(1)$
label would force $m_\mu^a = O(N^{-1/2})$, i.e.\ no macroscopic overlap and, through
\eqref{eq:u_and_h}, a vanishing local field $h_i^a$. Labels of order
$\sqrt N$ are therefore required. Their {shape} follows by evaluating
\eqref{eq:z_scale} at the dual configuration of Section~\ref{sec:Model} (weights set
to the archetypes, each module aligned with its own, so that $m_\mu^b=\delta_{\mu,\mu_b}$),
which returns precisely \eqref{eq:clamp}: module $a$ must excite its own target
hidden neuron while suppressing, at $-\lambda$, the neurons claimed by the other
modules, precisely what the inter-module inhibition $\bm x = \bm g^{-1}$ demands.
\end{remark}

In practice, both the \emph{positive phase} $\E_{\mathcal{Q}} (\sigma_i^a z_\mu^a)$ and \emph{the negative phase} $\E_{\mathcal{P}} (\sigma_i^a z_\mu^a )$
appearing in \eqref{eq:learning_rule_w} are obtained as approximations, as detailed below, see also Algorithm~\ref{alg:cd1}.

Assuming that no module is more likely than another to carry a given label (so that the position a label occupies inside a tuple is uninformative) and that the noise affecting examples is site and label independent, we can outline a ``\emph{mini-batch}'' $\mathcal B_\mu$ containing $|\mathcal B_\mu|=B$ items of the training dataset, such that $\mu\in\hat{\bm\mu}^{\,p}$. Since the labels within an item are distinct, each of them contributes exactly one column carrying the label $\mu$, that is one example $\bm\eta^{\mu,m}$ of the archetype $\bm\xi^\mu$. Thus, we define the co-called ``cue''
\begin{equation}\label{eq:majority_cue}
    s^{\mu}_i(\mathcal B_\mu) := \mathrm{sign}\Bigl(\SOMMA{m\in\mathcal B_{\mu}}{}{\eta}^{\mu,m}_i  \Bigr)\in\{-1,+1\};
    \qquad i=1,\dots,N.
\end{equation}
this operation is implementable due to the layer-wise supervised nature of the dataset. Note also that $\bm s^{\mu}$ depends on the archetype alone and not on the tuple in which it is presented, so that the same cue serves whichever companions $\mu$ happens to be drawn with. Then, we can estimate the positive phase as
\begin{equation} \label{eq:clamped}
	\widehat{\E}_{\mathcal Q}\bigl(\sigma_i^a z_\mu^a\bigr)
	= \
\begin{cases}
\ \ \ s_i^{\mu_a} \sqrt{N}, & \textrm{if} ~ \mu=\mu_a,\\
-\lambda\, s_i^{\mu_a} \sqrt{N}, & \textrm{if} ~ \mu \neq \mu_a, ~ \mu = \mu_b,\\
\ \ \ 0, & \textrm{otherwise},
\end{cases}
\end{equation}
for $a,b=1,\dots,L$.

As for the negative phase, this is estimated by a single contrastive-divergence step (CD-1): starting from a clamped data
configuration, the
visible layer is reconstructed from~\eqref{eq:Psigmagivenz}, and the hidden layer
is sampled from ~\eqref{eq:Pzgivensigma}. The clamped configuration is taken according to \eqref{eq:clamped}, namely
\begin{equation} \label{eq:initial}
\sigma_i^a = s_i^{\mu_a}, \qquad
z_{\mu}^a=
\begin{cases}
\ \ \ \sqrt{N}, & \textrm{if} ~ \mu=\mu_a,\\
-\lambda \sqrt{N}, & \textrm{if} ~ \mu \neq \mu_a, ~ \mu = \mu_b,\\
\ \ \ 0, & \textrm{otherwise}.
\end{cases}
\end{equation}

\subsection{Fixed point theorem} \label{sec:fixed_point_thm}
We now ask whether the setting \eqref{eq:setting}, $\bm w=\bar{\bm\eta}$, is
stationary under the update \eqref{eq:learning_rule_w}. 
Before stating the result, it is worth recalling that, despite being defined in terms of the averages $\E_{\mathcal{Q}}$ and $\E_{\mathcal{P}}$,  $\Delta \bm w$ remains a random variable, since a single CD-1 step involves several independent sources of randomness. Four of them are internal to the algorithm: the item drawn from $\mathcal S$, which fixes the $L$-tuple $\hat{\bm\mu}$ of labels to be classified, the mini-batches $\bm{\mathcal B}=(\mathcal B_{\mu_1},\dots,\mathcal B_{\mu_L})$, one per module, from which the positive phase is built, the uniform variables $\bm \omega$ used to sample the reconstructed visible layer, and the Gaussian noise $\bm \zeta$ of the negative phase, see Algorithm \ref{alg:cd1}. The fifth is the generation of the training set itself that is the corruption variables $\bm \chi$ of eq. \eqref{eq:esempi}, which enter the update through the empirical means $\bar{\bm\eta}$. Accordingly we write 
\begin{equation}\label{eq:averages}
    \E_{\mathrm{alg}}[\,\cdot\,] := \E\bigl[\,\cdot \mid \mathcal S, \bm\xi\,\bigr],
    \qquad
    \E_{\mathcal S}[\,\cdot\,] \ \ \text{(average over $\bm\chi$, at fixed $\bm\xi$)},
    \qquad
    \E_{\bm\xi}[\,\cdot\,] \ \ \text{(average over the archetypes)},
\end{equation}
and we reserve the plain symbol
\begin{equation}\label{eq:E_def}
    \E[\,\cdot\,] := \E_{\mathcal S}\,\E_{\mathrm{alg}}[\,\cdot\,]
\end{equation}
 for the average of the update with respect to the law of the data at fixed archetypes. The archetypes are always kept quenched: they are latent variables of the process generating the dataset and, we stress, they are never accessed by the algorithm, which only uses the examples and their labels. The fixed-point property we are defining hereafter is a property of the average \eqref{eq:E_def}, that is, it states that the learning rule does not carry any drift.
\begin{algorithm}[t]
\DontPrintSemicolon
\caption{Supervised CD-1 step for the Modular RBM}
\label{alg:cd1}
\KwIn{training set $\mathcal S=\bigl\{(\hat{\bm\eta}^{\,p},\hat{\bm z}^{\,p})\bigr\}_{p=1,\dots,P}$; coupling $\bm g$; inverse temperature $\beta$;
mini-batch size $B$ odd; learning rate $\epsilon$ (with $\epsilon\beta\ll1$)}
\KwOut{weight update $\Delta w^{\mu,a}_i$}
\BlankLine
\tcc{Initialisation}
$w^{\mu,a}_i \leftarrow \bar\eta^{\mu}_i=\tfrac{1}{rM}\sum_{m=1}^{M}\eta^{\mu,m}_i$
\tcp*{shared across $a$; or $w^{\mu,a}_i\sim\mathcal N(0,1)$}
Draw an item $p$ from $\mathcal S$ (eq.~\ref{eq:Q_def}) and read its label tuple
$\hat{\bm\mu}=\hat{\bm\mu}^{\,p}=(\mu_1,\dots,\mu_L)$\;
\BlankLine
\tcc{Positive phase: both layers are clamped to the datum}
\For{$a=1,\dots,L$}{
  Draw a mini-batch $\mathcal B_{\mu_a}$ of $B$ items of $\mathcal S$ bearing $\mu_a$, $B$ odd
  \tcp*{one column of label $\mu_a$ per item}
  $(\sigma^a_i)' \leftarrow s^{\mu_a}_i(\mathcal B_{\mu_a})=\mathrm{sign}\bigl(\sum_{m\in\mathcal B_{\mu_a}}\eta^{\mu_a,m}_i\bigr)$
  \tcp*{majority-vote cue, eq.~\eqref{eq:majority_cue}}
  $\hat z^{a}_\mu \leftarrow \sqrt N\bigl(\delta_{\mu\mu_a}
     -\lambda\sum_{c\neq a}\delta_{\mu\mu_c}\bigr)$
  \tcp*{supervised clamp, eq.~\eqref{eq:clamp}}
}
\BlankLine
\tcc{Reconstruction: sample visible, $\mathcal P(\bm\sigma\mid\bm z)$ (eq.~\ref{eq:Psigmagivenz})}
$h^a_i \leftarrow \dfrac{1}{\sqrt N}\sum_{\mu=1}^{K}w^{\mu,a}_i\hat z^a_\mu
   = \sum_{b=1}^{L} g_{ab}\,w^{\mu_b,a}_i$
\tcp*{only the $L$ labelled patterns contribute}
$(\sigma_i^a)'' \leftarrow \mathrm{sign}\!\big([\,1+e^{-2\beta\,h^a_i}\,]^{-1}
   - \omega_i^a\big)$,\quad $\omega_i^a\sim\mathcal U(0,1)$\;
\BlankLine
\tcc{Negative phase: resample hidden from the reconstruction, $\mathcal P(\bm z\mid\bm\sigma)$ (eq.~\ref{eq:Pzgivensigma})}
$(z_\mu^a)'' \leftarrow \displaystyle\sum_{b=1}^{L} g_{ab}\,u^b_\mu(\bm\sigma'') + (\zeta_\mu^a)''$,\quad
   $\bm\zeta_\mu\sim\mathcal N(\bm0,\beta^{-1}\bm g)$\;
\BlankLine
\tcc{Weight update}
$\Delta w_i^{\mu,a} \leftarrow \dfrac{\epsilon\beta}{\sqrt N}
   \big[(\sigma_i^a)'\,\hat z_\mu^a - (\sigma_i^a)''(z_\mu^a)''\big]$\;
\end{algorithm}

\begin{definition}[Mean fixed point of CD-1]
    \label{def:fixed_point}
    Consider Algorithm~\ref{alg:cd1}, in which the visible and the hidden layers of module $a$ are clamped as specified in \eqref{eq:initial}. Denote by $\Delta w^{\mu,a}_{i}$ the resulting update. A weight configuration $\bm w^{\star}$ is a \emph{fixed point in mean} of CD-1 if, setting $\bm w = \bm w^{\star}$,
    \[
    \E\bigl[\Delta w^{\mu,a}_{i}\bigr] = 0
    \qquad \text{for every } i = 1,\dots,N, \quad \mu = 1,\dots,K, \quad a = 1,\dots,L,
    \]
    the expectation being the one defined in \eqref{eq:E_def}.
\end{definition}
 We now state the fixed-point Theorem, while we refer to the Appendix \ref{app:fixed_point} for the full proof.

\begin{theorem}
    \label{thm:fixed_point}
    Consider the modular RBM trained by the supervised CD-1 step of Algorithm~\ref{alg:cd1}, with weights planted at $\bm w = \bar{\bm\eta}$, the visible layers $\boldsymbol \sigma^a$ clamped to a cue of the archetype $\mu_a$ built by majority vote over a ``batch'' of size $B = |\mathcal B_{\mu_a}|$, and the hidden layers clamped according to \eqref{eq:clamp}. Let $\rho$ be the dataset entropy \eqref{eq:rho_def} and set
    \[
      c_1:=\frac{(1-r^2)\bigl[1-\lambda(L-1)\bigr]^2}{8\bigl[1+\lambda(L-1)\bigr]^2},
      \qquad
      C:=\frac{20\bigl[1+\lambda(L-1)\bigr]}{r},
      \qquad
      C':=1+\lambda(L-1).
    \]
    Then, for every $N\ge4$, every odd $ B\ge1$ such that each label is assigned to at least $B$ items of $\mathcal{S}$, every $\beta>0$ and every $\gamma>0$, with probability at least $1-2KL\,e^{-N\gamma^{2}/2}$ over the realisation of the archetypes,
    \begin{equation}\label{eq:thm_bound}
      \max_{i,\mu,a}\bigl|\E[\Delta w^{\mu,a}_i]\bigr|
      \;\le\;
      \epsilon\beta\left[
      C\Bigl(
      \underbrace{e^{-\beta(1-\lambda(L-1))}}_{\text{thermal noise}}
      +\underbrace{L\,e^{-c_{1}/\rho}}_{\text{dataset noise}}
      \Bigr)
      \;+\;\underbrace{2e^{-Br^{2}/2}}_{\text{mini-batch noise}}
      \;+\;\underbrace{C'\gamma}_{\text{finite size}}\right] .
    \end{equation}
\end{theorem}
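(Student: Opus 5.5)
We compute $\E[\Delta w^{\mu,a}_i]$ explicitly by averaging the CD-1 update in Algorithm~\ref{alg:cd1} layer by layer. Then we bound the positive and negative phases separately against a common target, namely $\sqrt N$ times the ``ideal'' value $\xi_i^{\mu_a}(\delta_{\mu\mu_a}-\lambda\sum_{c\ne a}\delta_{\mu\mu_c})$.

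\emph{Positive phase.} By \eqref{eq:clamped} it equals $\sqrt N\,s_i^{\mu_a}$ times the clamp coefficient. Averaging over the mini-batch and $\bm\chi$ gives $\E[s_i^{\mu_a}]=\xi_i^{\mu_a}(1-2\Pr[\text{majority wrong}])$. Hoeffding on $B$ i.i.d.\ variables of mean $r$ gives $\Pr[\cdot]\le e^{-Br^2/2}$. Multiplying by the prefactor $\epsilon\beta/\sqrt N$, the positive phase differs from $\epsilon\beta\,\xi_i^{\mu_a}(\cdot)$ by at most $2\epsilon\beta e^{-Br^2/2}$ (times a coefficient of size at most one for the clamp entries). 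This is the mini-batch term.

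\emph{Negative phase.} Conditioning on $\bm\sigma''$ and averaging the Gaussian $\bm\zeta$ (which has zero mean and is independent of $\sigma''$) gives
\[
\E[\sigma_i''^a z''^a_\mu]=\E\Bigl[\sigma_i''^a\sum_b g_{ab}\tfrac1{\sqrt N}\sum_j \bar\eta_j^{\mu}\sigma_j''^b\Bigr].
\]
We split off the $j=i$ term, which is $O(1/\sqrt N)$ after the prefactor. The $j\neq i$ terms factorize given $\bm z$, because the reconstruction is a product measure \eqref{eq:Psigmagivenz} with $\E[\sigma''^b_j]=\tanh(\beta h^b_j)$ and $h^b_j=\sum_c g_{bc}\bar\eta_j^{\mu_c}$. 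This leaves
\[
\tfrac1N\sum_j \bar\eta_j^\mu\tanh(\beta h_j^b)\,\tanh(\beta h_i^a).
\]

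\emph{Sign analysis of $\tanh$.} We compare $\tanh(\beta h^b_j)$ with $\xi_j^{\mu_b}$. When $\bar\eta=\xi$, one has $\xi_j^{\mu_b}h^b_j\ge 1-\lambda(L-1)$, so $|\tanh-\mathrm{sign}|\le 2e^{-2\beta(1-\lambda(L-1))}$; this produces the thermal term. In general, $\xi_j^{\mu_b}h_j^b$ falls below $(1-\lambda(L-1))/2$ only if some $|\bar\eta^{\mu_c}_j-\xi^{\mu_c}_j|$ exceeds roughly $(1-\lambda(L-1))/(2(1+\lambda(L-1)))$. $\bar\eta$ is an average of $M$ bounded variables of range $2/r$ and variance $\rho$. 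A Hoeffding/Bernstein bound, with a union bound over the $L$ labels, therefore gives probability at most $L e^{-c_1/\rho}$, with $c_1$ as stated; the $(1-r^2)$ and the $8$ come from tuning Hoeffding against $\rho=(1-r^2)/(Mr^2)$. The factors $1/r$ and $1+\lambda(L-1)$ in $C$ arise from bounding $|\bar\eta|\le 1/r$ and $|h|\le (1+\lambda(L-1))/r$ when multiplying the error events by the worst-case size of the integrand. Taking $\E_{\mathcal S}$ then replaces $\bar\eta^\mu_j$ by $\xi^\mu_j$ up to these controlled errors, since the error events are independent of the mean.

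\emph{Finite size.} After replacing every $\tanh$ by the corresponding archetype sign, the negative phase reduces to
\[
\xi_i^{\mu_a}\sum_b g_{ab}\,\tfrac1N\sum_j\xi_j^\mu\xi_j^{\mu_b}.
\]
For $\mu=\mu_b$ the overlaps $\frac1N\sum_j\xi^\mu_j\xi^{\mu_b}_j$ equal $\delta_{\mu\mu_b}$. For the off-diagonal pairs, Hoeffding gives $|\cdot|\le\gamma$ for each of the at most $KL$ relevant pairs with probability $1-2KLe^{-N\gamma^2/2}$ over $\bm\xi$; the $j=i$ correction can be absorbed for $N\ge4$. Summing with weights $|g_{ab}|$ yields $C'\gamma$. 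The diagonal part reproduces exactly the positive-phase target, so the two phases cancel and only the four error terms remain, giving \eqref{eq:thm_bound}.

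\emph{Main obstacle.} The delicate step is the dataset-noise term. The $\tanh$ factors at sites $i$ and $j$ and the weight $\bar\eta_j^\mu$ all depend on the same $\bm\chi$, so $\E_{\mathcal S}$ does not factorize. I would handle this by working on the good event, where all $\tanh$ factors have the right sign up to $e^{-\beta(\cdot)}$ corrections, and using $\E[\bar\eta_j^\mu\mid\text{good}]\approx\xi_j^\mu$ together with crude bounds on the complement. Getting the constants exactly as stated may require some slack.
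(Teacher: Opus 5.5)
Your outline follows the same structure as the paper's proof:
\begin{itemize}
\item the majority-vote bound $1-r_B\le 2e^{-Br^2/2}$ for the positive phase;
\item the Gaussian noise removed by independence;
\item factorisation of the reconstruction given the clamp;
\item the margin argument at $\theta=1/2$ with Hoeffding on $\Delta^\mu_j=\bar\eta^\mu_j-\xi^\mu_j$, which does give $c_1$ and, after the union bound, $2L\,e^{-c_1/\rho}$;
\item overlap concentration over $\bm\xi$ for the $C'\gamma$ term.
\end{itemize}
However, two steps do not go through as written.

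\emph{First: the independence claim.} The justification that ``the error events are independent of the mean'' is false whenever $\mu\in\{\mu_1,\dots,\mu_L\}$. The bad event at site $j$ is a function of $\bar\eta^{\mu_1}_j,\dots,\bar\eta^{\mu_L}_j$, hence of $\bar\eta^\mu_j$ itself. Your worry about site $i$ is unfounded, on the other hand: the corruption is site-independent, so $t^a_i$ is independent of $(\bar\eta^\mu_j,t^b_j)$ for $j\neq i$.

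Your good-event fallback can be made rigorous via $\E_{\mathcal S}[\bar\eta^\mu_j\mathbf 1_G]=\xi^\mu_j-\E_{\mathcal S}[\bar\eta^\mu_j\mathbf 1_{G^c}]$ and $|\bar\eta^\mu_j|\le r^{-1}$. The paper, however, needs no conditioning at all. Write $t^b_j=\xi^{\mu_b}_j+e^b_j$ and expand. The leading term $\bar\eta^\mu_j\xi^{\mu_a}_i\xi^{\mu_b}_j$ is linear in $\bar\eta$ with coefficients fixed by $\bm\xi$, so $\E_{\mathcal S}$ acts on it exactly and returns $\xi^{\mu_a}_i\sum_b g_{ab}c_{\mu,\mu_b}$. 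Every remaining term carries a factor $e$ and is bounded by $r^{-1}\E_{\mathcal S}|e|\le r^{-1}(\varepsilon_\beta+2\pi_\theta)$ (Lemma~\ref{lem:recon}), summed with weights $\sum_b|g_{ab}|=1+\Lambda$. This bookkeeping is what produces $5(1+\Lambda)/r$ and hence $C=20(1+\Lambda)/r$, so no slack is needed. Note that the factor $1+\Lambda$ comes from $\sum_b|g_{ab}|$, not from a bound on $|h|$.

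\emph{Second: the diagonal term.} Only the $(j,b)=(i,a)$ term fails to factorise, and it cannot be ``absorbed for $N\ge4$'' into the finite-size channel. That channel is already saturated, with the exact constant $C'=1+\Lambda$, by $|c_{\mu,\mu_b}-\delta_{\mu,\mu_b}|<\gamma$. A bare $O(1/(rN))$ remainder would therefore be an extra term, absent from \eqref{eq:thm_bound} and surviving $\beta\to\infty$, $\rho\to0$, $B\to\infty$. You need this term to be small in the thermal/dataset sense, and there are two ways to get that:
\begin{itemize}
\item Use $\E_{\mathrm{alg}}[(\sigma^a_i)''(\sigma^b_j)'']=t^a_it^b_j+(1-(t^a_i)^2)\delta_{ij}\delta_{ab}$ together with $|1-(t^a_i)^2|\le4|e^a_i|$. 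The contribution is then at most $4(\varepsilon_\beta+2\pi_\theta)/(rN)$, which goes into the $C$ channel; this is where $N\ge4$ really enters.
\item Keep the term separate and observe that it cancels exactly the $j=i$, $b=a$ term of your target, since $((\sigma^a_i)'')^2=1=(\xi^{\mu_a}_i)^2$ and $\E_{\mathcal S}[\bar\eta^\mu_i]=\xi^\mu_i$.
\end{itemize}
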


\begin{corollary}
    \label{cor:fixed_point}
    Choosing $\gamma=\sqrt{2\ln(2KLN)/N}$ in \eqref{eq:thm_bound}, the bound
    \[
      \max_{i,\mu,a}\bigl|\E[\Delta w^{\mu,a}_i]\bigr|
      \;\le\;
      \epsilon\beta\left[C\Bigl(e^{-\beta(1-\lambda(L-1))}+L\,e^{-c_{1}/\rho}\Bigr)
      +2e^{-Br^{2}/2}
      +C'\sqrt{\frac{2\ln(2KLN)}{N}}\right]
    \]
    holds with probability at least $1-N^{-1}$ over the archetypes. Consequently
    $\E[\Delta w^{\mu,a}_i]\to0$ as $\beta\to\infty$, $M\to\infty$, $B\to\infty$ and $N\to\infty$ with $\ln K = o(N)$, in any order, so that $\bar{\bm\eta}$ is a fixed point in mean of supervised CD-1 in the sense of Definition~\ref{def:fixed_point}. Averaging over the archetypes rather than conditioning on them gives, for every fixed $i,\mu,a$, the same bound with the last term replaced by $C'/\sqrt N$.
\end{corollary}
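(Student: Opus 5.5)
The statement to prove is the Corollary, which follows from Theorem~\ref{thm:fixed_point} by a choice of $\gamma$, a limiting argument, and a separate averaged version of the finite-size term. I plan to treat these in that order.

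\emph{Probability estimate.} I would first substitute $\gamma=\sqrt{2\ln(2KLN)/N}$ into \eqref{eq:thm_bound}. Then $N\gamma^2/2=\ln(2KLN)$, so the failure probability is
\[
2KL\,e^{-N\gamma^2/2}=\frac{2KL}{2KLN}=N^{-1}.
\]
Hence the displayed bound holds with probability at least $1-N^{-1}$ over the archetypes. Theorem~\ref{thm:fixed_point} is valid for every $\gamma>0$, every $N\ge4$ and every odd $B$, so no extra hypotheses are needed here.

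\emph{Limits.} Next I would check that each term on the right-hand side vanishes in its own limit, and does so uniformly in the other parameters.
\begin{itemize}
\item Thermal term: $e^{-\beta(1-\lambda(L-1))}\to0$ as $\beta\to\infty$, because $\lambda<(L-1)^{-1}$.
\item Dataset term: $\rho=(1-r^2)/(Mr^2)\to0$ as $M\to\infty$ at fixed $r$, so $L e^{-c_1/\rho}\to0$. If $r=1$, then $c_1=0$, but also $\rho=0$; I would read this term as $0$, since then $\bar{\bm\eta}=\bm\xi$ exactly. I would note this degenerate case explicitly.
\item Mini-batch term: $2e^{-Br^2/2}\to0$ as $B\to\infty$, provided the dataset supplies enough items per label.
\item Finite-size term: $\sqrt{2\ln(2KLN)/N}\to0$ when $\ln K=o(N)$ and $N\to\infty$.
\end{itemize}
The constants $C$ and $C'$ depend only on $r,\lambda,L$, so the bound is a sum of terms each controlled by a single parameter. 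The limits therefore commute, and the "any order" claim follows.

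\emph{Almost-sure upgrade.} To make the fixed-point statement in the sense of Definition~\ref{def:fixed_point} almost sure rather than merely in probability, I would take $N\to\infty$ along a sequence with $\sum N^{-1}$ finite, or use $N^{-2}$ by doubling $\gamma^2$, and apply Borel--Cantelli. I would mention this only briefly.

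\emph{Averaged version: main obstacle.} The hardest part is the final claim, which averages over the archetypes instead of conditioning on them. The Theorem gives only a high-probability statement, so I cannot deduce this directly; I would have to revisit where $\gamma$ arises. I expect it comes from concentration of overlaps of the form $N^{-1}\sum_j \xi_j^\mu\xi_j^\nu$ with $\mu\neq\nu$, which enter the fields $u^b_\mu$ and produce a drift of size at most $C'$ times $\max_\nu|N^{-1}\sum_j\xi^\mu_j\xi^\nu_j|$. For fixed $i,\mu,a$ no maximum over the indices is needed. I would bound the relevant quantity by $C'\,\E_{\bm\xi}|N^{-1}\sum_j \xi^\mu_j\xi^\nu_j|$. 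By Cauchy--Schwarz this is at most
\[
C'\Bigl(\E_{\bm\xi}\bigl[(N^{-1}\textstyle\sum_j\xi_j^\mu\xi_j^\nu)^2\bigr]\Bigr)^{1/2}=C'\,N^{-1/2},
\]
using independence and the $\pm1$ symmetry of the entries, so only one overlap needs controlling.

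The delicate point is that the other terms in the Theorem's proof are derived conditionally on a good event. I would need the conditional estimates to hold for every archetype realisation, with only the crosstalk term random, so that taking the expectation is legitimate. If this holds, the conclusion follows by linearity. If it does not, I would instead split on the good event, pay $|\Delta w|\le \epsilon\beta\cdot O(C')$ on its complement, choose $\gamma\sim N^{-1/2}$, and integrate the tail, $\int_0^\infty 2KLe^{-N\gamma^2/2}d\gamma$. This fallback gives $C'/\sqrt N$ only up to a constant, and without the $KL$ factor only if the union bound is dropped for the fixed index.
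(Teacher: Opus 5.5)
Your route is the paper's. Substituting $\gamma$ gives $2KL\,e^{-N\gamma^2/2}=N^{-1}$. Each channel then vanishes in its own limit, with constants depending only on $r,\lambda,L$. The averaged claim comes from returning to the $\gamma$-free bound \eqref{eq:B_pf_theta} and averaging the overlap term with $\E_{\bm\xi}|c_{\mu,\nu}|\le(\E_{\bm\xi}c_{\mu,\nu}^2)^{1/2}=N^{-1/2}$.

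The point you leave open resolves in your favour, so the fallback is unnecessary. Every term of \eqref{eq:B_pf_theta} other than the overlap one holds for every realisation of $\bm\xi$. The events $\mathcal E_i$, the bounds on $\pi_\theta$ and on $\E_{\mathcal S}|e^a_i|$, and the factor $r_B$ all refer to the dataset and the algorithm at fixed archetypes. The signal bound $1-\Lambda$ is a worst case over $\bm\xi$.

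The step that does not work as written is ``only one overlap needs controlling''. At fixed $(i,\mu,a)$ the finite-size term involves all the overlaps $c_{\mu,\mu_b}$, $b=1,\dots,L$, with the presented labels. In \eqref{eq:B_pf_theta} it appears as $(1+\Lambda)\max_b|c_{\mu,\mu_b}-\delta_{\mu,\mu_b}|$. The expectation of that maximum is not bounded by $N^{-1/2}$. For two independent overlaps it is asymptotically $(2/\sqrt\pi)\,N^{-1/2}\approx1.13\,N^{-1/2}$, and it grows with $L$. So you cannot pass from $\E_{\bm\xi}\bigl[C'\max_\nu|c_{\mu,\nu}|\bigr]$ to $C'\,\E_{\bm\xi}|c_{\mu,\nu}|$.

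Average before maximising instead. Keep the sum $\sum_b|g_{ab}|\,|c_{\mu,\mu_b}-\delta_{\mu,\mu_b}|$ that the triangle inequality produces in \eqref{eq:B_pf_I}, before it is bounded by the maximum. By linearity its expectation is at most $\sum_b|g_{ab}|\,N^{-1/2}=(1+\Lambda)N^{-1/2}=C'/\sqrt N$, the diagonal term vanishing. The paper's one-line argument, which takes $\E_{\bm\xi}$ of \eqref{eq:B_pf_theta} directly, glosses over the same point.

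The Borel--Cantelli upgrade and the $r=1$ remark are extras the statement does not need.
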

The bound \eqref{eq:thm_bound} splits the residual drift into four independent contributions, whose detailed derivations is given in Appendix~\ref{App:B}.
\newline
The first contribution represents the probability that the reconstruction step flips a visible spin under the conditional distribution \eqref{eq:Psigmagivenz}. It is controlled by the margin of the local field \eqref{eq:u_and_h}, which quantifies how strongly the field points towards the target archetype. Inter-module inhibition plays a crucial role here: on sites where different modules claim overlapping archetypes, the local field is reduced from $1$ to $1-\lambda(L-1)$. The margin therefore remains strictly positive when $\lambda(L-1)<1$, a condition that simultaneously ensures the positive definiteness of $\bm g$ and the stationarity of the learning rule.
\newline
The second contribution is the accuracy of the weights. It is the probability that the empirical estimate of an archetype is so noisy at a given site that the field there points the wrong way. It depends on the data only through the entropy $\rho$, and it is exponentially small as soon as $Mr^2\gg1$, that is as soon as the examples are numerous enough for their average to resolve the archetype.
\newline
The third contribution accounts for the accuracy of the batch presented to the visible layer. Since the input is constructed via a majority vote over $B$ noisy examples (see Eq. \eqref{eq:initial}), it generally deviates from the exact archetype. This residual mismatch is exponentially small in $Br^2$, becoming negligible for a sufficiently large batch size.
\newline
The fourth contribution is the overlap that two distinct archetypes have by accident at finite $N$. It is a property of the ground-truths that happen to be drawn rather than of the algorithm: the typical overlap between two of them is of order $N^{-1/2}$, which is the rate at fixed $(i,\mu,a)$, while holding uniformly over all the $KLN$ triples costs a further logarithm, whence the $\sqrt{2\ln(2KLN)/N}$ of Corollary~\ref{cor:fixed_point}. It is the only contribution that survives the limits $\beta\to\infty$, $M\to\infty$ and $B\to\infty$.

\begin{remark}
The theorem holds for every $L$, including $L=1$, i.e.\ a single RBM, and it generalises the result of~\cite{LEONELLI2021314} in three respects. First, it extends the analysis from a single machine to $L$ coupled modules, the inhibition entering through the combination $\lambda(L-1)$ discussed above. Second, it upgrades the statement from a purely asymptotic one to a quantitative bound \eqref{eq:thm_bound}. Third, and independently of $L$, it uses the supervised hypothesis in full: the label is not an $O(1)$ perturbation of a sampled hidden state but the hidden state itself, at its own scale \eqref{eq:clamp}.
\end{remark}

\section{Numerical results} \label{sec:Numerical}
We now turn to the numerical side. We first check Theorem~\ref{thm:fixed_point} directly, measuring the residual drift of the learning rule at the planted weights and next we verify that $\bm{\bar \eta}$ is also an attractor of the CD-1 dynamics, estimating its basin of attraction. We then verify that those weights actually solve the two tasks the machine was designed for, namely simple classification (case 2a) and hard classification (case 2b).
\subsection{Numerical verification of Thm \ref{thm:fixed_point}.}
\label{sec:num_fixed_point}
\begin{figure}[t!]
    \centering
    \begin{subfigure}[b]{0.325\textwidth}
        \centering
        \includegraphics[width=\textwidth]{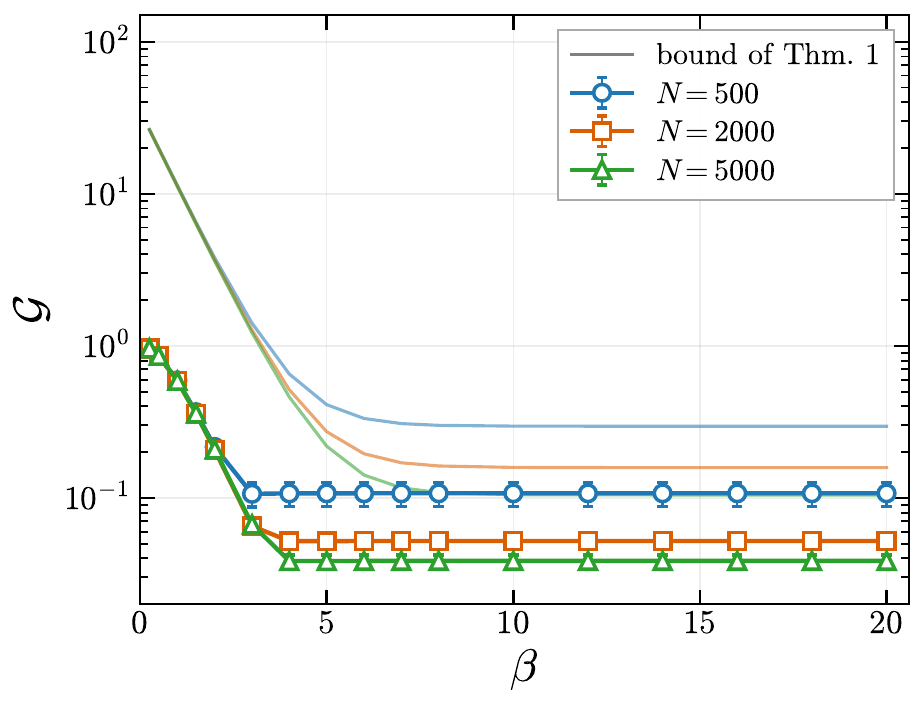}
        \caption{}
        \label{fig:panel_thermal}
    \end{subfigure}
    \hfill
    \begin{subfigure}[b]{0.325\textwidth}
        \centering
        \includegraphics[width=\textwidth]{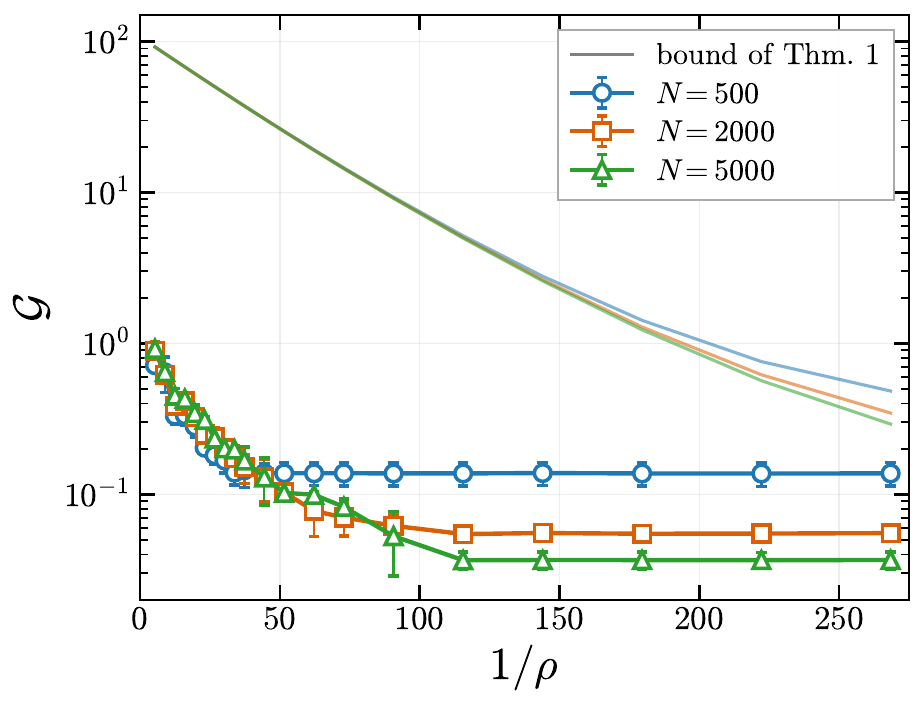}
        \caption{}
        \label{fig:panel_dataset}
    \end{subfigure}
    \hfill
    \begin{subfigure}[b]{0.325\textwidth}
        \centering
        \includegraphics[width=\textwidth]{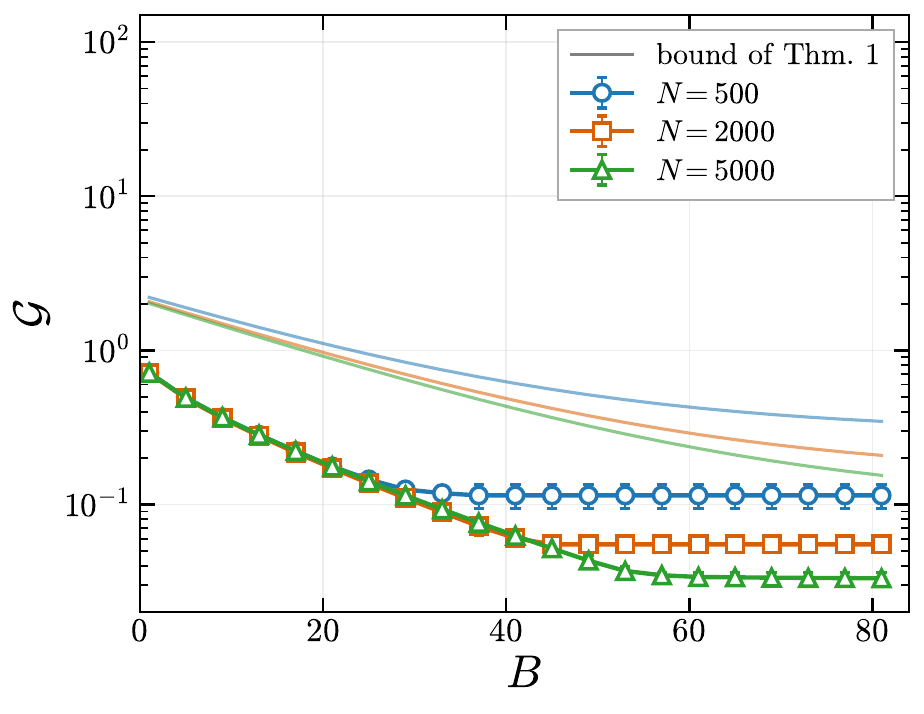}
        \caption{}
        \label{fig:panel_minibatch}
    \end{subfigure}
    \caption{Numerical verification of Theorem~\ref{thm:fixed_point}. Residual drift $\mathcal G$ of eq.~\eqref{eq:Ginf}, evaluated at the planted weights $\bm w = \bar{\bm\eta}$ for a modular RBM with $L=3$ modules, $K=20$ archetypes and inter-module coupling $\lambda = 0.2$. The average over the generation of the dataset is estimated on $20$ independent realisations, and the curves report the typical value of $\mathcal G$ over $4$ independent samples of the quenched archetypes, the error bars being the spread over the latter. The thin solid lines are the right-hand side of \eqref{eq:thm_bound} evaluated with its own constants, one per system size and in the corresponding colour.
    \emph{(a)} Thermal channel, at $r=0.8$, $M=10^{5}$ and $B=25$, so that both the dataset and the mini-batch channels are switched off: the residual decays exponentially in $\beta$, until it saturates on a plateau.
    \emph{(b)} Dataset channel, at $r=0.8$ and $\beta=40$: the residual decays exponentially in $1/\rho$ and reaches the same plateau.
    \emph{(c)} Mini-batch channel, at $r=0.3$, $M=5001$ and $\beta=40$: the residual decays exponentially in $B$ and stays below the bound . In all three panels the plateau is the finite-size channel of \eqref{eq:thm_bound}, which no other parameter can remove and which orders the three curves as $N^{-1/2}$.}
    \label{fig:fixed_point_numerics}
\end{figure}

 \begin{figure}[t!]
    \centering
    \begin{subfigure}[b]{0.43\linewidth}
        \centering
        \includegraphics[width=\linewidth]{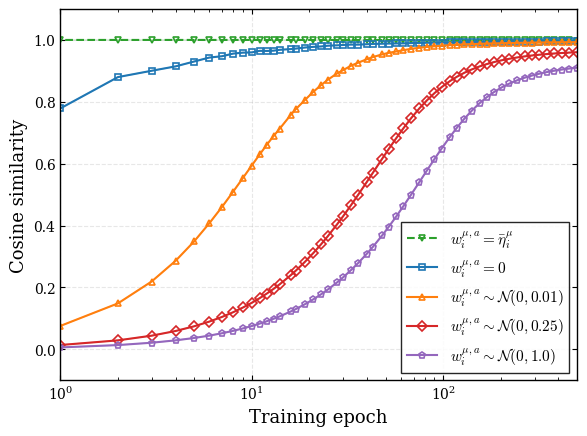}
        \caption{}
        \label{fig:init_comparison_a}
    \end{subfigure}
    \hfill
    \begin{subfigure}[b]{0.43\linewidth}
        \centering
        \includegraphics[width=\linewidth]{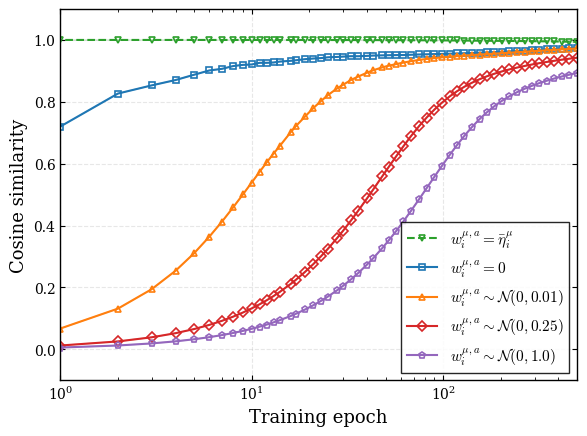}
        \caption{}
        \label{fig:init_comparison_b}
    \end{subfigure}
    \caption{Cosine similarity between the learned weights $\bm w(\textrm{epoch})$ and the initialization based on the empirical mean of the examples $\bar{\bm \eta}$, evaluated during training in a modular RBM. The network consists of $L=3$ modules, each with $N=5000$ neurons and inter-module coupling $\lambda=0.2$, storing $K=20$ archetypes from which $M=50$ examples per archetype are drawn with an entropy of $\rho=0.01$ (left) and $\rho= 0.2$ (right), batch size $B=25$. Moreover the system evolves with inverse temperature $\beta=5$. The training is performed using Algorithm \ref{alg:cd1} for $500$ epochs and with learning rate $0.03$. The green line tracks the self similarity of a network initialized on the mean of the examples $w_i^{\mu,a}=\bar{\eta}_i^\mu$. The other colored lines track the cosine similarity evolution of networks with different initialization: one with zero weights and the others with a random Gaussian with increasing variance. Regardless of the initial condition, the learning dynamics successfully drives the weights toward the initialization based on the mean of the examples. The time to reach such state scales with the variance of the initial noise.}
    \label{fig:init_comparison}
\end{figure}

Theorem~\ref{thm:fixed_point} bounds the mean update at $\bm w = \bar{\bm\eta}$, so the natural observable is the update \eqref{eq:learning_rule_w} stripped of the learning rate and of the temperature:
\begin{equation}\label{eq:Ginf}
    \mathcal G := \max_{i,\mu,a}\;
    \frac{\bigl|\E\bigl[\Delta w^{\mu,a}_i\bigr]\bigr|}{\epsilon\beta} .
\end{equation}
This is exactly the quantity that the left-hand side of \eqref{eq:thm_bound} controls: $\mathcal G \simeq 0$ certifies that $\bar{\bm\eta}$ is a genuine fixed point of the dynamics, while a large $\mathcal G$ signals that the very first gradient step would drive the weights away from it. Dividing by $\epsilon\beta$ is essential, since otherwise the update could be made arbitrarily small simply by lowering the learning rate.
At fixed dataset the three averages internal to a CD-1 step, over the mini-batch, over the sampling of the reconstructed visible layer and over the Gaussian noise of the negative phase, can all be carried out in closed form, as shown in Appendix~\ref{App:B}. The two phases are therefore free of Monte Carlo error, and the only sampling left is over the generation of the dataset and over the archetypes\footnote{We have
checked this reduction against a direct simulation of Algorithm~\ref{alg:cd1}, obtaining a Pearson correlation of $0.9999$ over all
triples $(i,\mu,a)$.}. 

Each panel of Fig.~\ref{fig:fixed_point_numerics} switches off every contributions but one. In panel~(a) the dataset is essentially noiseless and the mini-batch large, so that only the thermal channel survives: $\mathcal G$ falls exponentially in $\beta$ and then saturates.  
Panel~(b) repeats the exercise on the dataset channel, at low temperature and with the mini-batch taken to be the whole pool of examples, so that the cue is as clean as the data allow. Here $\mathcal G$ falls exponentially in $1/\rho$. 
Panel (c) probes the term opened by the mini-batch, the other two sources of noise being switched off,
so that the residual drift is set entirely by the quality of the batch. It decays exponentially in
$B$ and stays below the Theorem bound. 
Finally, the plateau reached in each panels recovers the fourth channel of \eqref{eq:thm_bound}, the one that neither $\beta$ nor $M$, nor $B$ can remove, scaling as $N^{-1/2}$. The three panels therefore confirm the four terms of \eqref{eq:thm_bound} separately and with
the predicted functional form: $\bar{\bm\eta}$ is a fixed point of CD-1 in the sense of
Definition~\ref{def:fixed_point}, up to a residual drift that is exponentially small in $\beta$,
in $1/\rho$ and in $B$, and of order $N^{-1/2}$ in the size of the network.

However, Theorem~\ref{thm:fixed_point} says nothing about whether the dynamics reaches the
configuration $\boldsymbol w = \bar{\boldsymbol \eta}$, nor about what happens when it is started elsewhere.
Figure~\ref{fig:init_comparison} settles both points, by running CD-1 at a finite
learning rate from five different initial conditions, and at two values of the dataset
entropy, following the cosine similarity between the weights and $\bar{\bm\eta}$.
Planted at $\bar{\bm\eta}$, the weights stay there for the whole run, so the
stationarity of Theorem~\ref{thm:fixed_point}, which holds in mean, survives the
fluctuations of a single CD-1 step at a finite learning rate. Started instead from configurations that have no
overlap with it, either from zero weights or from independent Gaussian entries, the
dynamics can still converge to that same point. The learning rule therefore
rebuilds the empirical mean of the examples from scratch, with a transient that
lengthens monotonically with the variance of the initial noise, from a few epochs
when the weights start at zero, where there is no noise to undo and only signal to
build, to several hundred at unit variance, where the initial weights have a norm
comparable to $\bar{\bm\eta}$ itself and are orthogonal to it. Raising the entropy of
the dataset from $\rho=0.01$ to $\rho=0.2$ slows the approach and slightly lowers the
level at which the curves settle, as the dataset channel of \eqref{eq:thm_bound}
prescribes, without altering the picture.

\subsection{Simple classification}\label{sec:Pattern_recognition}
  
\begin{figure}[t!]
\centering
\begin{subfigure}[b]{0.48\textwidth}
	\centering
	\includegraphics[width=\textwidth]{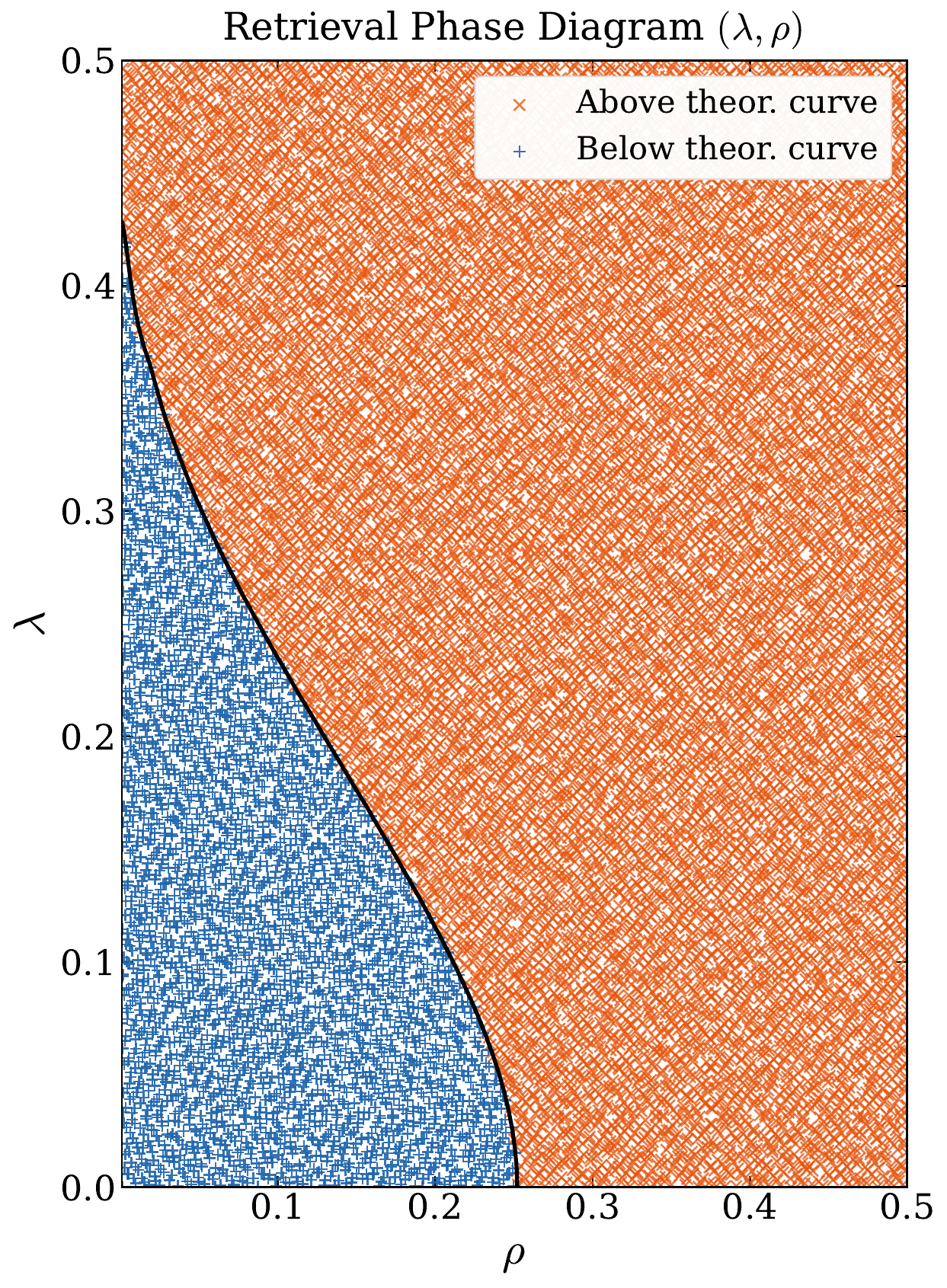}
	\caption{}
	\label{fig:stability_modular}
\end{subfigure}
\hfill
\begin{subfigure}[b]{0.48\textwidth}
	\centering
	\includegraphics[width=\textwidth]{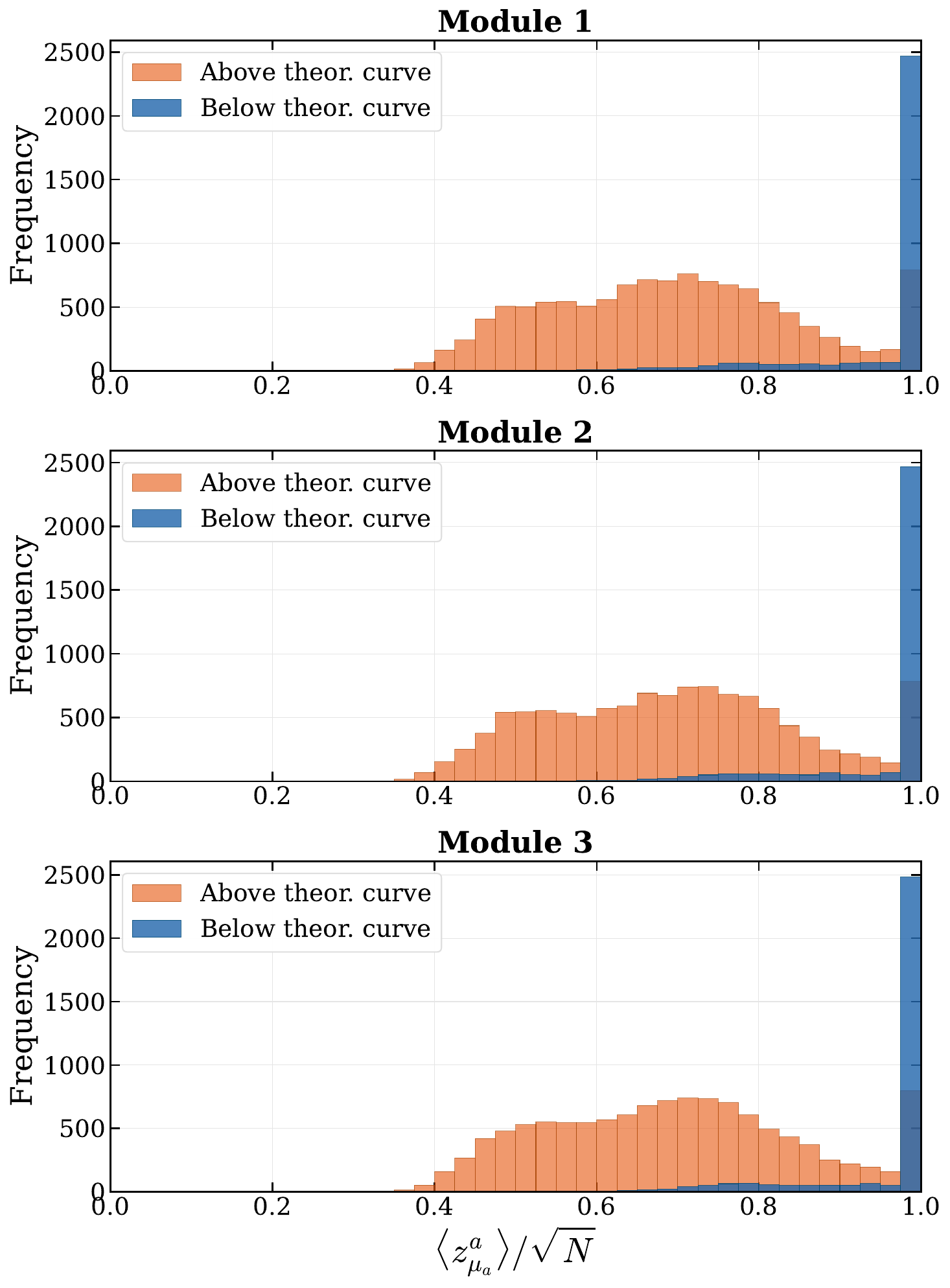}
	\caption{}
	\label{fig:histogram_retrieval}
\end{subfigure}
    \caption{Classification performance of a modular RBM
    in the $(\lambda, \rho)$ plane.
    	The network consists of $L = 3$ modules, each with $N = 5000$ neurons,
    	storing $K = 50$ archetypes from which $M = 50$ examples per archetype are
    	drawn. The system evolves at inverse temperature $\beta = 10$ for $30$ steps and each module being initialized with a test example $\hat{\boldsymbol \eta}_{\textrm{test}}$. Results are averaged over
    	$N_{\mathrm{runs}} = 50$ independent realisations per parameter point,
    	sampled via a $15000$-point Sobol sequence.
    	\textit{Left panel}: the solid black curve is the theoretical phase
    	boundary $\langle m \rangle = 0.95$ obtained from the fixed-point analysis
    	of the supervised modular HM model~\cite{AGLIARI2026131134}. Each marker
    	corresponds to a sampled $(\lambda, \rho)$ point, distinguished by whether
    	it lies below the theoretical curve (blue pluses) or above it (orange
    	crosses).
    	\textit{Right panel}: normalised histogram of the hidden-unit activations
    	$\langle z^{a}_{\mu_a} \rangle_{\mathrm{}}$, shown separately for points
    	below (blue) and above (orange) the theoretical curve. Points below the
    	boundary exhibit a sharp peak near unity, confirming that each hidden module
    	responds selectively to its corresponding archetype, whereas points above
    	the boundary display a broad distribution centred at lower values,
    	signalling a failure of the classification.}
    \label{fig:phase_diagram_rho_lambda}
\end{figure}
Having established that $\bar{\bm\eta}$ is a mean fixed point of CD-1, we set
$w_i^{\mu,a}=\bar\eta_i^{\mu}$ for any $(i,\mu,a)$ and verify that these planted weights actually solve
the classification task the machine was trained for. 
Once the $L$ archetypes to be classified are fixed, the test dataset is defined as
\begin{equation}\label{eq:S_test}
\mathcal{S}_{\mathrm{test}}:= \{(\bm{\hat \eta}^{p}_{\mathrm{test}}, \bm{\hat {z}}^{p}_{\mathrm{test}})\}_{p=1, \dots, P_{\mathrm{test}}},
\end{equation}
each item following the configuration of Case 2a and drawn independently of the training set. The labels are kept out of the dynamics: only the visible layers are initialised, each on a noisy example of a distinct archetype, and the network is let evolve by alternating $\mathcal P(\bm z\mid\bm\sigma)$ (eq.~\ref{eq:Pzgivensigma}) and $\mathcal P(\bm\sigma\mid\bm z)$ (eq.~\ref{eq:Psigmagivenz}) until convergence. The labels enter only at the end, as the ground truth telling which hidden unit each module was supposed to switch on.

The outcome is read there. By \eqref{eq:z_scale}, the activation of the unit carrying the label of module $a$ is, up to fluctuations of order $\beta^{-1/2}$,
\begin{equation}\label{eq:decision}
    \frac{z^{a}_{\mu_a}}{\sqrt N}\;=\;\SOMMA{b=1}{L}g_{ab}\,m^{b}_{\mu_a}(\bm\sigma| \bar{\bm \eta}) ,
\end{equation}
which equals one when every module has retrieved its own archetype, since then $m^{b}_{\mu_a}=\delta_{ab}$, and is therefore the very value that the clamp \eqref{eq:clamp} prescribes (see Remark~\ref{rem:clamp}). We monitor it as the decision variable of module $a$: a value close to one certifies that the item presented to it has been classified correctly, while a value appreciably below one signals that the corresponding unit has failed to switch on.\footnote{Note that this is a stricter criterion than reading the prediction as $\arg\max_{\mu} z^{a}_{\mu}$, since it demands that the correct unit not merely dominate but reach the level prescribed during training.}

The simulations are compared with the analytical results obtained in~\cite{AGLIARI2026131134}\footnote{The comparison requires the inverse temperature to be matched between the two conventions. The weights used here are the unbiased estimators \eqref{eq:etabar_def}, whose squared norm satisfies $\frac1N||\bar{\bm\eta}^\mu||^2\to1+\rho$, whereas the self-consistency equations of \cite{AGLIARI2026131134} are written for unit-norm couplings. The two Hamiltonians therefore coincide once the theoretical curves are evaluated at $\beta(1+\rho)$, which is how all the boundaries reported here are obtained.}. In particular, in the low-load regime considered here the expectation of the Mattis magnetization with respect to the Boltzmann-Gibbs distribution satisfies a self-consistency equation whose solution provides the boundary
reported in panel (a) of Fig.~\ref{fig:phase_diagram_rho_lambda}, namely, it corresponds to the level curve
of the Mattis magnetization to be equal to 0.95: below that curve the model is expected to work correctly. 

In particular, in panel (a), for each point in the diagram $(\rho, \lambda)$, we simulate a modular RBM, where we assign an element of $\mathcal S_{\mathrm{test}}$ and evaluate the correct classification by computing \eqref{eq:decision}, averaged over $N_{\mathrm{runs}}$ independent runs. Specifically, the blue and orange points in the diagram distinguish between points below the theoretical curve and points outside the theoretical curve. We therefore expect the blue points to assign a value of 1 to $\langle z^{a}_{\mu_a}\rangle/\sqrt N$, while the orange points should yield a value less than 1. Panel (b) confirms our expectations. On the one hand, we have thus verified that we can consider our neural network, trained with this initialization, capable of classifying archetypes; on the other hand, it also numerically confirms the equivalence between the supervised modular HM \cite{AGLIARI2026131134} and our Boltzmann machine.

\subsection{Hard classification} \label{sec:Dinsentanglement}
\begin{figure}[t!]
\centering
	\begin{subfigure}[b]{0.32\textwidth}
		\centering
		\includegraphics[height=4.3cm, width=\textwidth, keepaspectratio]{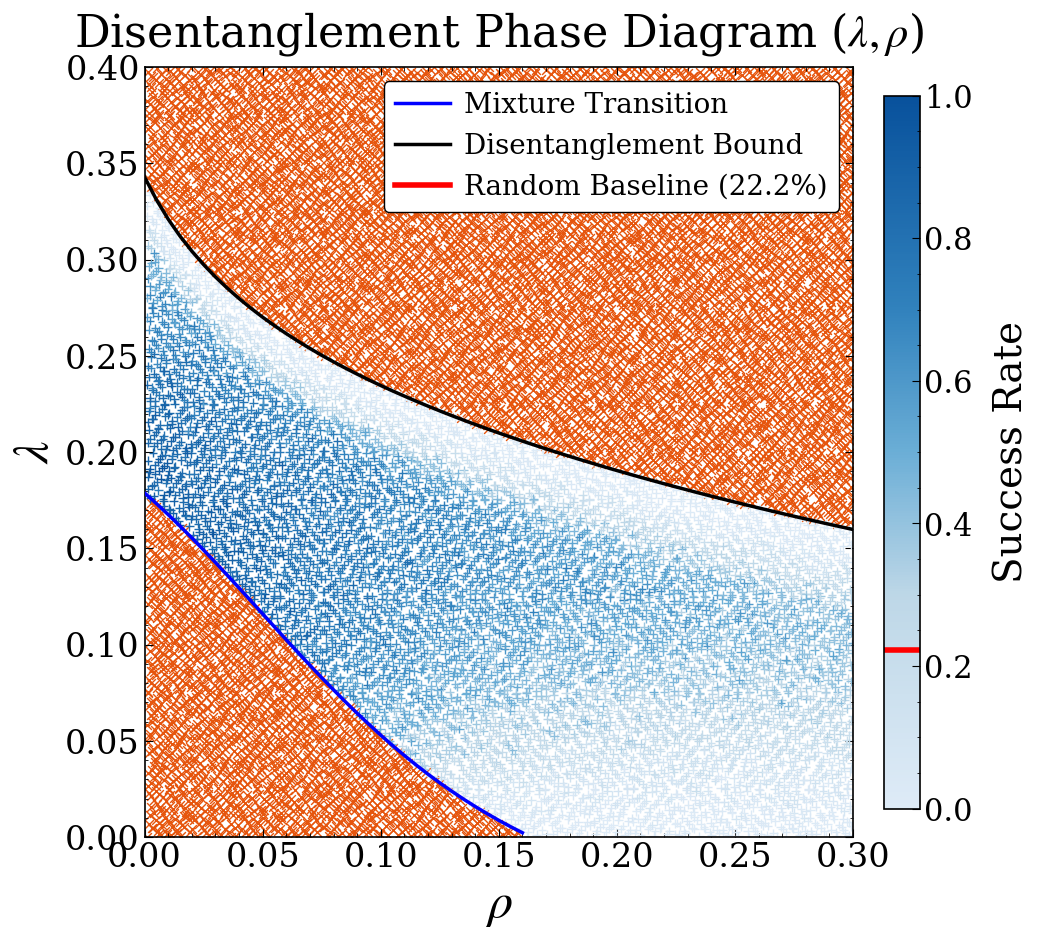}
		\caption{}
		\label{fig:phase3}
	\end{subfigure}
	\hfill
	\begin{subfigure}[b]{0.32\textwidth}
		\centering
		\includegraphics[height=4.3cm, width=\textwidth, keepaspectratio]{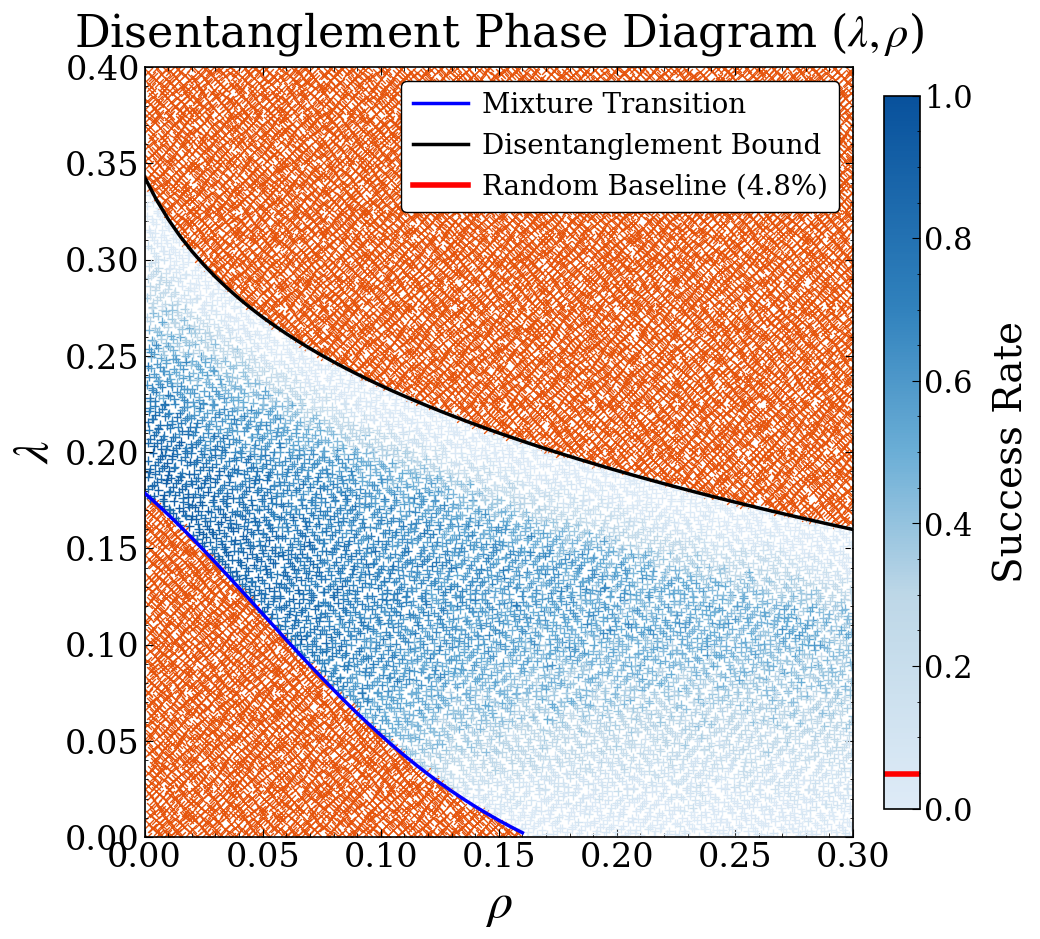}
		\caption{}
		\label{fig:phase5}
	\end{subfigure}
	\hfill
	\begin{subfigure}[b]{0.32\textwidth}
		\centering
		\includegraphics[height=4.3cm, width=\textwidth, keepaspectratio]{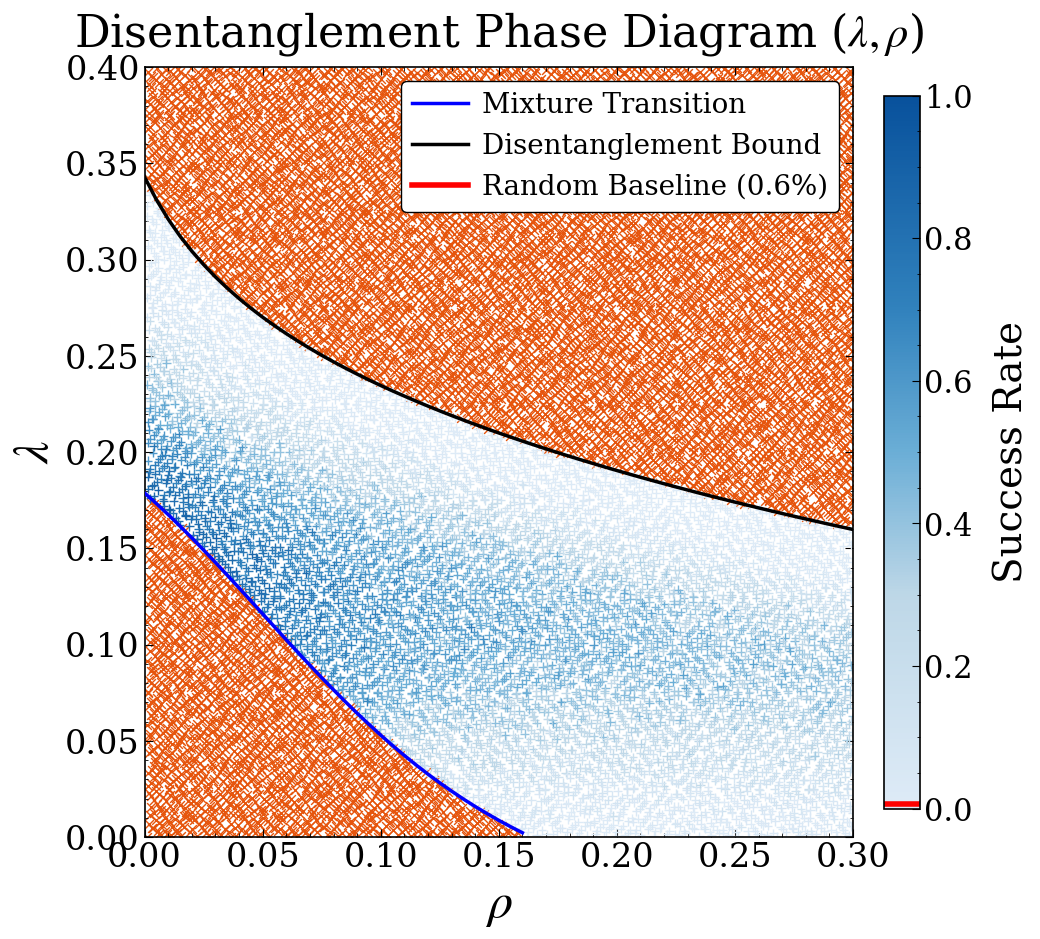}
		\caption{}
		\label{fig:phase10}
	\end{subfigure}

	\vspace{0.8em}

	\begin{subfigure}[b]{0.32\textwidth}
		\centering
		\includegraphics[height=4.3cm, width=\textwidth, keepaspectratio]{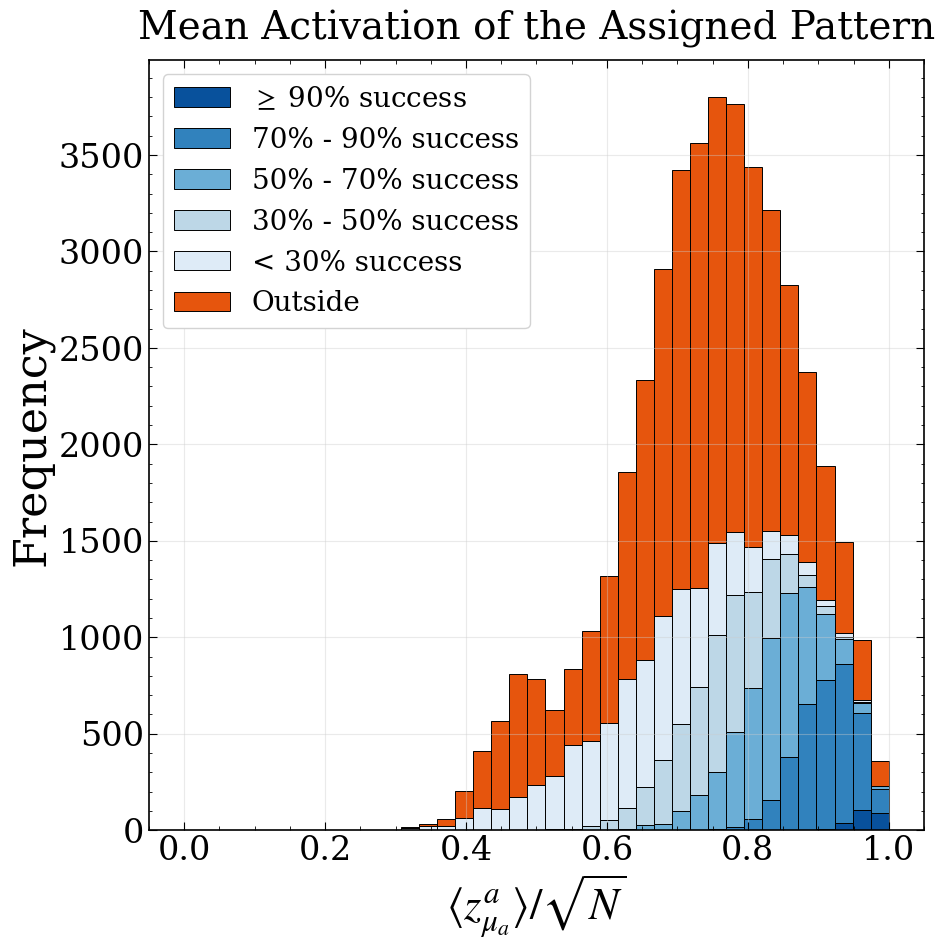}
		\caption{}
		\label{fig:isto3}
	\end{subfigure}
	\hfill
	\begin{subfigure}[b]{0.32\textwidth}
		\centering
		\includegraphics[height=4.3cm, width=\textwidth, keepaspectratio]{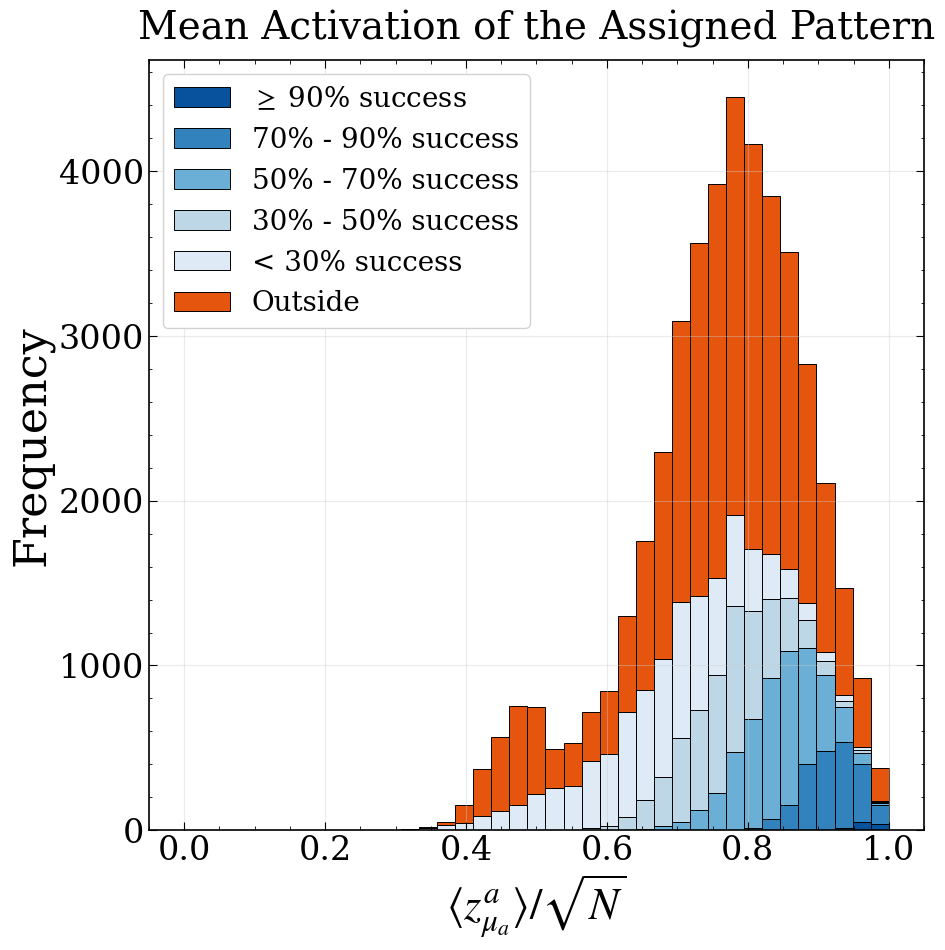}
		\caption{}
		\label{fig:isto5}
	\end{subfigure}
	\hfill
	\begin{subfigure}[b]{0.32\textwidth}
		\centering
		\includegraphics[height=4.3cm, width=\textwidth, keepaspectratio]{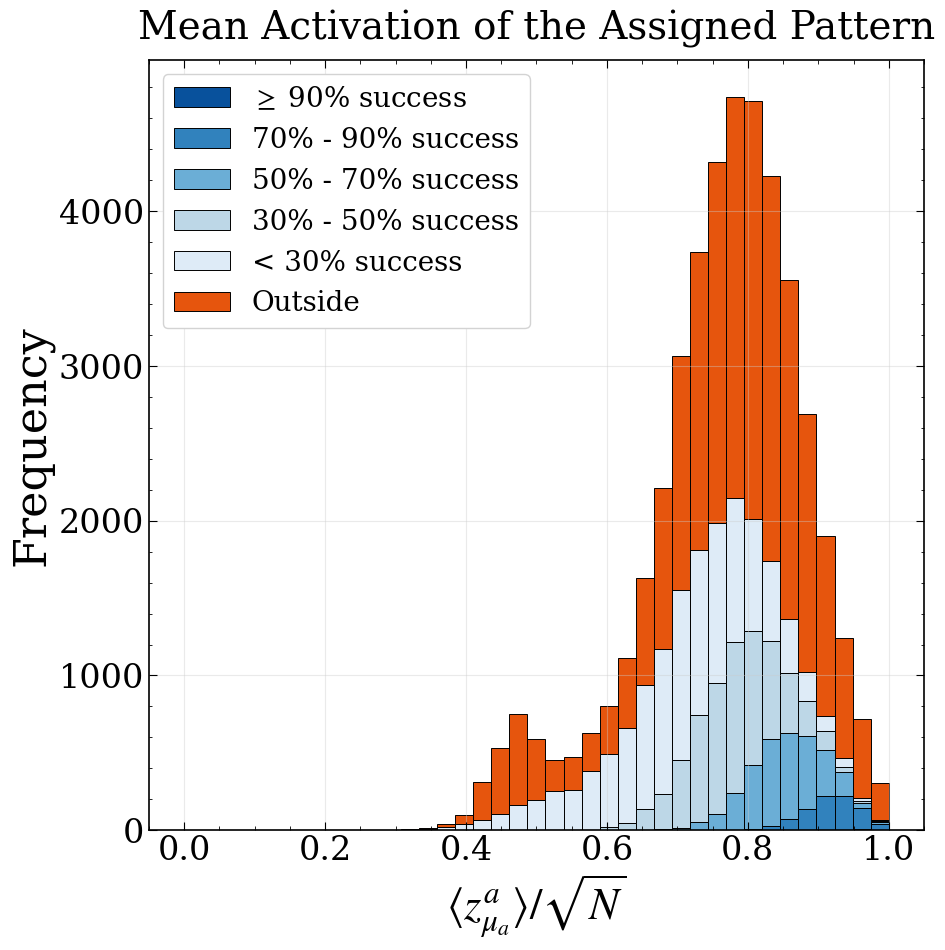}
		\caption{}
		\label{fig:isto10}
	\end{subfigure}
    \caption{Disentangling performance of a modular RBM in the ($\rho,\lambda$) plane. The network consists of $L=3$ modules, each with $N=5000$ neurons, storing $K$ archetypes from which $M=50$ examples per archetype are drawn. The system evolves at inverse temperature $\beta=5$ for 75 steps and each module is initialized with the same mixture $\hat{\boldsymbol \eta}_{\textrm{test}}^{\prime}$ obtained from examples of different archetypes. Results are averaged over $N_{\mathrm{runs}} = 50$ independent realizations per parameter point, sampled via a 15000-point Sobol sequence. The panels show results for $K=3$ (a,d), $K=5$ (b,e) and $K=10$ (c,f). \textit{Top panels (a,b,c)}: solid curves represent theoretical boundary for different system phases \cite{AGLIARI2026131134}. Each marker of the plot corresponds to a samples $(\rho,\lambda)$ point, distinguished by the fact that it is within the theoretical curves or outside. For the point inside we use a continuous blue colormap based on the success rate of the runs performed. Here, the successful threshold is set as $\theta =0.9$. The solid red line on the colorbar indicates the random baseline success rate, that is the probability $L!/K^L$ of assigning the $L$ correct archetypes to the $L$ modules when each module picks uniformly among the $K$ stored archetypes. \textit{Bottom panels (d,e,f)}: histogram of the hidden unit mean activation of the assigned pattern. Point with a high success rate (darker blue) exhibit a sharp peak at values near 1, confirming that each module successfully retrieved different pattern from the mixture. Lighter blue points exhibit peaks for lower value which occur when the disentangling task failed. Orange point represent the outside region where disentangling is unsuccessful. 
    }
\label{fig:phase_diagram_rho_lambda_dis}
\end{figure}
We now fix the internal weights of the machine $w_i^{\mu,a} = \bar \eta_i^\mu$, for every $i, \mu,a$ and see if this initialization might be meaningful even for the harder task. In particular, we define the following test dataset:

\begin{equation}
    \label{eq:harder_test_dataset}
    \mathcal{S}'_{\mathrm{test}}:= \{((\hat{\bm \eta}^p_{\mathrm{test}})', (\bm{\hat z}^p_{\mathrm{test}})')\}_{p=1,\dots,P'_{\mathrm{test}}},
\end{equation}
each item following the configuration 2b of Section \ref{sec:Model}. As in \ref{sec:Pattern_recognition}, we evaluate the outcome on the hidden layer and we compare our results with the theory of the supervised modular network \cite{AGLIARI2026131134}. 
In particular, the boundaries reported in the top panel of Fig.~\ref{fig:phase_diagram_rho_lambda_dis}
enclose the region where the mixture configurations (e.g., $\boldsymbol \sigma = \hat{\boldsymbol \eta}_{\textrm{test}}$) is unstable while the disentangled state (e.g., $\boldsymbol \sigma = (\boldsymbol \xi^{\mu_1},...,\boldsymbol \xi^{\mu_L} )$) is not, and this is
precisely the window in which the decomposition can occur. 

Specifically, for each point $(\rho,\lambda)$ of the diagram we initialise the machine with
an element of $\mathcal S'_{\mathrm{test}}$, we let the network relax, and we call the run
successful when the $L$ modules end up on $L$ distinct archetypes. We therefore expect the
successful points to fall between the two curves, and the failures to accumulate below the blue
one, where the modules stay locked on the mixture, and above the black one, where retrieval is
destroyed. Moreover, the points inside the curves are also differentiated based on the success rate over the runs performed. Indeed, a run is considered successful if the network not only exits the mixture state but also manages to allocate a different archetype to each of the $L$ modules. Formally, this one-to-one matching is formulated as an optimal assignment problem and solved using the Hungarian algorithm. Let $\pi$ be a permutation of the $L$ archetypes composing the mixture; we use the algorithm to find the optimal assignment $\pi ^ *$ that maximizes the total activation of the hidden units across all modules: 
\begin{equation}
\pi ^ *= \arg \max_{\pi \in \mathcal{M}_L} \SOMMA{a=1}{L} \dfrac{\langle z_{\pi(a)}^a \rangle}{\sqrt{N}}
\end{equation}
where $\mathcal{M}_L$ is the set of all possible one-to-one matching between the $L$ modules and the $L$ archetypes and $\dfrac{\langle z_{\pi(a)}^a \rangle}{\sqrt{N}}$ is the activation of the specific hidden unit in module $a$ that corresponds to archetype $\pi (a)$, the brackets again indicate an average on $N_{\mathrm{runs}}$ independent runs.
After extracting the optimal matching we check if the activation of every assigned hidden unit exceeds a predefined threshold $\theta$: 
\begin{equation}
    \dfrac{\langle z_{\pi(a)}^a \rangle}{\sqrt{N}} > \theta, \quad \forall a \in \{1,...,L\}
\end{equation}
satisfying this condition confirms the simultaneous retrieval of the different archetypes composing the mixture.
The bottom panels of the figure shows the mean activation of the hidden neurons. For parameter region with high success rate we observe a sharp peak near unity. This confirms the expectation that the competitive dynamics force the network to classify different archetype for each module. On the other hand for point outside the theoretical curves or with low enough success rate we observe two different peaks reflecting two types of failing scenarios. The peak for lower activation emerges when the modules fail to break the symmetry and remains trapped in the mixture state. The second peak at high values emerges in the trial in which the network manages to converge to the archetypes but fails to retrieve all of them; in this case multiple modules retrieve the same archetype failing to find the whole set composing the mixture, and causing the one-to-one check to fail.

\section{Conclusions}
We asked whether the weight configuration suggested by the HM--RBM duality
is one that learning actually reaches, or only one a theorist can plant by
hand. For simple classification the answer is a theorem: the empirical-mean
weights $\bar{\bm\eta}$ are a fixed point in mean of one-step contrastive
divergence, for any number of modules $L$, with a residual drift that splits
into different contributions, ultimately controlled by the training-dataset size and quality, and by the network architecture. For the harder classification task, $\bar{\bm\eta}$ still works
well across the region predicted by the
theory developed for the corresponding HM~\cite{AGLIARI2026131134}.
This kind of result should matter beyond this model wherever a closed-form
initialisation replaces trial and error, especially where weights are
physical and slow to reprogramme, as in memristor-based Boltzmann
machines~\cite{Ernoult-SciRep2019}. 

Extending the analytical results to the hard classification task and to structured archetypes are two natural next steps. However, probably more interesting is investigating whether $\bm x$ needs to be fixed at all. If contrastive divergence,
started from a random $\bm x$, converges on its own to the antagonistic
coupling we impose by hand, units would learn not just to classify but to
compete. Hebb's postulate is that cells firing together wire together; we
leave open whether cells asked to fire apart learn to stay apart.

\acknowledgments
The authors are grateful to Paulo Duarte Mourão for useful discussions. \\
E.A. acknowledges Sapienza Università di Roma (RM124190CB1269EB). \\
The authors acknowledge the use of the Lagrange Multi-GPU Server at the Department of Mathematics, Sapienza University of Rome, for computational resources supporting this work.

\appendix
\section{Derivation of the Conditional Distributions}\label{App:A}

In this appendix we derive explicitly the conditional distributions $\mathcal P(\bm z|\bm \sigma)$, $\mathcal P(\bm \sigma|\bm z)$, and the marginal $\mathcal P(\bm \sigma)$ for the Modular RBM defined by the cost function
\begin{equation}
    \mathcal{H}^{\mathrm{RBM}}(\bm \sigma,\bm z \mid \bm w) = -\frac{1}{\sqrt{N}} \sum_{a=1}^{L} \sum_{\mu=1}^{K} \sum_{i=1}^{N} w_i^{\mu,a} \sigma_i^a z_\mu^a + \frac{1}{2} \sum_{\mu=1}^{K} \sum_{a,b=1}^{L} z_\mu^a (g^{-1})_{ab} z_\mu^b.
\end{equation}
The joint distribution is given by the Boltzmann--Gibbs measure
\begin{equation}
    \mathcal P(\bm \sigma, \bm z) = \frac{1}{\mathcal Z^{\mathrm{RBM}}(\bm w)} e^{-\beta \mathcal{H}^{\mathrm{RBM}}(\bm \sigma, \bm z \mid \bm w)},
    \qquad
    \mathcal Z^{\mathrm{RBM}}(\bm w) = \sum_{\bm \sigma} \int {d}z\, e^{-\beta \mathcal{H}^{\mathrm{RBM}}(\bm \sigma, \bm z \mid \bm w)}.
\end{equation}
We remind the shorthand
\begin{equation}
    u_\mu^a(\bm \sigma) = \frac{1}{\sqrt{N}} \sum_{i=1}^{N} w_i^{\mu,a} \sigma_i^a,
    \qquad
    \boldsymbol{u}_\mu(\bm \sigma) := \bigl(u_\mu^1(\bm \sigma), \ldots, u_\mu^L(\bm \sigma)\bigr) \in \mathbb{R}^L,
\end{equation}
so that the Hamiltonian can be written compactly as
\begin{equation}
    \mathcal{H}^{\mathrm{RBM}}(\bm \sigma, \bm z \mid \bm w) = \sum_{\mu=1}^{K} \left[ \frac{1}{2} \bm z_\mu^{\mathsf{T}} \bm g^{-1} \bm z_\mu - \bm z_\mu^{\mathsf{T}} \boldsymbol{u}_\mu(\bm \sigma) \right].
\end{equation}

Let us start by computing $\mathcal P(\bm z| \bm \sigma)$. 

Since the exponent factorises over $\mu$, we have:
\begin{equation}
\mathcal P(\bm z|\bm \sigma) = \prod_{\mu=1}^{K} \mathcal P(z_\mu|\bm \sigma). 
\end{equation}
Fixing $\mu$, the unnormalised density reads
\begin{equation}
    \mathcal P(z_\mu|\bm \sigma) \;\propto\; \exp\!\left\{-\beta \!\left[\frac{1}{2} \bm z_\mu^{\mathsf{T}} \bm g^{-1} \bm z_\mu - \bm z_\mu^{\mathsf{T}} \boldsymbol{u}_\mu(\bm \sigma)\right]\right\}.
\end{equation}
Completing the square in $z_\mu$,
\begin{equation}
    \frac{1}{2} \bm z_\mu^{\mathsf{T}} \bm g^{-1} \bm z_\mu - \bm z_\mu^{\mathsf{T}} \boldsymbol{u}_\mu(\bm \sigma)
    = \frac{1}{2}\bigl(\bm z_\mu - \bm g\,\boldsymbol{u}_\mu(\bm \sigma)\bigr)^{\mathsf{T}} \bm g^{-1} \bigl(\bm z_\mu - \bm g\,\boldsymbol{u}_\mu(\bm \sigma)\bigr)
      - \frac{1}{2}\boldsymbol{u}_\mu(\bm \sigma)^{\mathsf{T}} \bm g\,\boldsymbol{u}_\mu(\bm \sigma),
\end{equation}
so the normalisation integral over $\mathbb{R}^L$ is a standard Gaussian integral. One obtains
\begin{equation}
    \mathcal P(\bm z|\bm \sigma) = \prod_{\mu=1}^{K} \frac{\beta^{L/2}}{(2\pi)^{L/2}\sqrt{\det g}}
    \exp\!\left\{-\frac{1}{2}\bigl(\bm z_\mu - \bm g\,\boldsymbol{u}_\mu(\bm \sigma)\bigr)^{\mathsf{T}}(\beta^{-1}\bm g)^{-1}
    \bigl(\bm z_\mu - \bm g\,\boldsymbol{u}_\mu(\bm \sigma)\bigr)\right\},
\end{equation}
i.e.\ $\mathcal P(z_\mu|\bm \sigma) \sim \mathcal{N}\!\left(\bm g\,\boldsymbol{u}_\mu(\bm \sigma),\, \beta^{-1} \bm g\right)$.

Now, in order to compute $\mathcal P(\bm \sigma| \bm z)$ we start from
\begin{equation}
    -\beta \mathcal{H}^{\mathrm{RBM}}(\bm \sigma,\bm z\mid \bm w)
    = \frac{\beta}{\sqrt{N}} \sum_{i,\mu,a} w_i^{\mu,a} \sigma_i^a z_\mu^a
      - \frac{\beta}{2}\sum_\mu z_\mu^{\mathsf{T}} g^{-1} z_\mu,
\end{equation}
the second term does not depend on $\bm \sigma$, so we can factor it out. Reminding the local field
\begin{equation}
    h_i^a(\bm z) = \frac{1}{\sqrt{N}} \sum_{\mu=1}^{K} w_i^{\mu,a} z_\mu^a,
\end{equation}
the joint density factorises over $(i,a)$:
\begin{equation}
    \mathcal P(\bm \sigma, \bm z) \propto \prod_{i=1}^{N}\prod_{a=1}^{L} e^{\beta h_i^a(z)\,\sigma_i^a}
    \cdot \exp\!\left\{-\frac{\beta}{2}\sum_\mu \bm z_\mu^{\mathsf{T}} \bm g^{-1} \bm z_\mu\right\}.
\end{equation}
Summing over all $\bm \sigma \in \{-1,+1\}^{NL}$ to obtain the normalisation $\mathcal Z^{\mathrm{RBM}}(\bm z)$ gives
\begin{equation}
    \mathcal Z^{\mathrm{RBM}}(\bm z) = \prod_{i=1}^{N}\prod_{a=1}^{L} 2\cosh\!\bigl(\beta h_i^a(z)\bigr)
    \cdot \exp\!\left\{-\frac{\beta}{2}\sum_\mu \bm z_\mu^{\mathsf{T}} \bm g^{-1} \bm z_\mu\right\},
\end{equation}
and hence
\begin{equation}
\mathcal P(\bm \sigma|\bm z) = \prod_{i=1}^{N}\prod_{a=1}^{L}
    \frac{e^{\beta \sigma_i^a h_i^a(z)}}{2\cosh\!\bigl(\beta h_i^a(z)\bigr)}.
\end{equation}

Finally, integrating out the hidden variables,in particular the normalisation
\begin{equation}
    \int_{\mathbb{R}^{KL}} \prod_\mu \mathrm{d}^L z_\mu\, e^{-\beta \mathcal H^{\mathrm{RBM}}}
    = \left(\frac{2\pi}{\beta}\right)^{KL/2} (\det \bm g)^{K/2}
      \exp\!\left\{\frac{\beta}{2}\sum_{\mu=1}^{K} \boldsymbol{u}_\mu(\bm \sigma)^{\mathsf{T}} \bm g\,\boldsymbol{u}_\mu(\bm\sigma)\right\},
\end{equation}
one obtains
\begin{equation}
    \mathcal P(\bm \sigma) \;\propto\;
    \exp\!\left\{\frac{\beta}{2N}\sum_{\mu=1}^{K}\sum_{a,b=1}^{L} g_{ab}
    \sum_{i,j=1}^{N} w_i^{\mu,a} w_j^{\mu,b} \sigma_i^a \sigma_j^b \right\}.
\end{equation}
Setting $w_i^{\mu,a} = \xi_i^\mu$ recovers exactly the Boltzmann--Gibbs weight of the LAM model (eq.~\eqref{eq:LAM_hamiltonian}), thereby confirming the equivalence established in Section~\ref{sec:Model}.
\section{Proofs}\label{App:B}
\label{app:fixed_point}

In this appendix we prove Theorem~\ref{thm:fixed_point} and Corollary~\ref{cor:fixed_point}, namely that initialising the weights of the modular RBM at the empirical averages of the training examples constitutes a fixed point in mean of the one-step contrastive divergence. Throughout, the averages are those introduced in \eqref{eq:averages}--\eqref{eq:E_def}: $\E_{\mathrm{alg}}$ denotes the average over the randomness internal to Algorithm~\ref{alg:cd1} at fixed dataset and fixed archetypes, $\E_{\mathcal S}$ the average over the generation of the dataset at fixed archetypes, $\E = \E_{\mathcal S}\E_{\mathrm{alg}}$ their composition, and $\E_{\bm\xi}$ the average over the archetypes, which play the role of quenched disorder.

To lighten the notation we abbreviate
\begin{equation}\label{eq:B_shorthand}
    \Lambda := \lambda(L-1)\in[0,1),
    \qquad
    \hat g^{a}_{\mu} := \SOMMA{b=1}{L} g_{ab}\,\delta_{\mu,\mu_b} = \dfrac{\hat z^a_\mu}{\sqrt N},
    \qquad
    \Delta^{\mu}_{i} := \bar\eta^{\mu}_{i} - \xi^{\mu}_{i},
\end{equation}
so that $\SOMMA{b=1}{L}|g_{ab}| = 1+\Lambda$ for every $a$, while $|\hat g^a_\mu|\le1$ and $\hat g^a_\mu = 0$ unless $\mu$ is one of the $L$ presented labels. Note that, $\bar{\bm\eta}$ being the unbiased estimator \eqref{eq:etabar_def}, the deviation $\Delta^\mu_i$ is now measured from the archetype itself rather than from a shrunken copy of it, and that $|\bar\eta^\mu_i|\le r^{-1}$ deterministically, since $\bar\eta^\mu_i$ is an average of $M$ signs divided by $r$. We further set
\begin{equation}\label{eq:B_shorthand2}
    h^{a}_{i} := \SOMMA{b=1}{L} g_{ab}\,w^{\mu_b,a}_{i},
    \qquad
    t^{a}_{i} := \tanh\bigl(\beta h^{a}_{i}\bigr),
    \qquad
    T^{b}_{\mu} := \dfrac1N \SOMMA{j=1}{N} w^{\mu,b}_{j}\,t^{b}_{j},
    \qquad
    e^{a}_{i} := t^{a}_{i}-\xi^{\mu_a}_{i},
\end{equation}
and we work with the prefactor-free update
\begin{equation}\label{eq:B_Dy}
    \Delta y^{\mu,a}_{i} := \dfrac{\Delta w^{\mu,a}_{i}}{\epsilon\beta}
    = \dfrac{1}{\sqrt N}\Bigl[(\sigma^a_i)'\,\hat z^{a}_{\mu} - (\sigma^a_i)''\,(z^a_\mu)''\Bigr],
\end{equation}
in terms of which the statement of Theorem~\ref{thm:fixed_point} concerns $\E[\Delta y^{\mu,a}_i]$. We finally recall that, by \eqref{eq:esempi}, $\E_{\mathcal S}[\eta^{\mu,m}_i] = r\,\xi^{\mu}_i$, whence $\E_{\mathcal S}[\bar\eta^\mu_i]=\xi^\mu_i$ and $\E_{\mathcal S}[\Delta^\mu_i]=0$.

We begin by collecting the auxiliary probabilistic tools required in the proof, then state and prove the main result.

\begin{lemma}[Hoeffding's inequality]
\label{lem:hoeffding}
Let $X_1,\dots,X_n$ be independent random variables satisfying $\mathcal P(a_k\le X_k\le b_k)=1$ for every $k$, and set $S=\SOMMA{k=1}{n}X_k$. Then, for every $t>0$,
\[
  \mathcal P\bigl(S-\E[S]\le -t\bigr)
  \;\le\;
  \exp\!\left(-\frac{2t^2}{\SOMMA{k=1}{n}(b_k-a_k)^2}\right),
  \qquad
  \mathcal P\bigl(|S-\E[S]|\ge t\bigr)
  \;\le\;
  2\exp\!\left(-\frac{2t^2}{\SOMMA{k=1}{n}(b_k-a_k)^2}\right),
\]
the second bound following from the first and from its mirror image applied to $-S$.
\end{lemma}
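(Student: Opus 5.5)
The plan is the classical Chernoff--Cramér argument, reduced to a bound on the moment generating function of a single bounded, centred variable (Hoeffding's lemma). Centre the summands, $Y_k := X_k - \E[X_k]$, so that $S-\E[S]=\sum_k Y_k$, with $\E[Y_k]=0$ and $Y_k\in[a_k-\E X_k,\,b_k-\E X_k]$, an interval still of length $b_k-a_k$. For the lower tail, fix $s>0$ and apply Markov's inequality to $e^{-s\sum_k Y_k}$:
\[
\mathcal P\Bigl(\textstyle\sum_k Y_k\le -t\Bigr)\le e^{-st}\,\E\bigl[e^{-s\sum_k Y_k}\bigr]=e^{-st}\prod_{k=1}^n \E\bigl[e^{-sY_k}\bigr],
\]
where the factorisation uses independence.

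The key step, and the only one needing real work, is the bound $\E[e^{\theta Y}]\le e^{\theta^2(b-a)^2/8}$ for any centred $Y\in[a,b]$ and any real $\theta$. I would prove it by convexity first. Writing $y=\frac{b-y}{b-a}a+\frac{y-a}{b-a}b$ gives $e^{\theta y}\le \frac{b-y}{b-a}e^{\theta a}+\frac{y-a}{b-a}e^{\theta b}$. Taking expectations and using $\E Y=0$ yields $\E e^{\theta Y}\le e^{\varphi(u)}$, where $u=\theta(b-a)$, $p=-a/(b-a)\in[0,1]$, and $\varphi(u)=-pu+\ln(1-p+pe^{u})$. One then checks $\varphi(0)=\varphi'(0)=0$ and $\varphi''(u)=q(1-q)\le 1/4$, with $q=pe^u/(1-p+pe^u)$. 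Taylor's theorem with Lagrange remainder then gives $\varphi(u)\le u^2/8$. This computation is the main (if routine) obstacle; the degenerate case $a=b$ is trivial and should be set aside at the outset.

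Inserting the lemma with $\theta=-s$ gives
\[
\mathcal P(S-\E S\le -t)\le \exp\Bigl(-st+\tfrac{s^2}{8}\textstyle\sum_k(b_k-a_k)^2\Bigr).
\]
Optimising at $s=4t/\sum_k(b_k-a_k)^2$ produces $\exp\bigl(-2t^2/\sum_k(b_k-a_k)^2\bigr)$, which is the first inequality. Applying this to the variables $-X_k\in[-b_k,-a_k]$, whose intervals have the same lengths, bounds $\mathcal P(S-\E S\ge t)$ by the same quantity. A union bound over the two tails then gives the two-sided estimate with the factor $2$.
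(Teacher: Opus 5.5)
Your proposal is correct. The Chernoff bound, Hoeffding's lemma via convexity with $\varphi''=q(1-q)\le 1/4$ giving $\varphi(u)\le u^2/8$, the choice $s=4t/\sum_k(b_k-a_k)^2$, and the mirror-image plus union-bound step all check out; the only implicit assumption is $\sum_k(b_k-a_k)^2>0$, which the statement needs anyway.

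The paper quotes this lemma as a standard tool without proof. It spells out only the passage from the one-sided to the two-sided bound via $-S$, which you reproduce exactly, so your argument is the classical proof the paper implicitly relies on.
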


\begin{lemma}\label{lem:emp_mean_dev}
Let $\Delta^\mu_i=\bar\eta^\mu_i-\xi^\mu_i$ be the deviation of the unbiased empirical mean \eqref{eq:etabar_def} from the archetype it estimates. Then $\E_{\mathcal S}[\Delta^\mu_i]=0$, $\E_{\mathcal S}[(\Delta^\mu_i)^2] = \rho$, $|\Delta^\mu_i|\le1+r^{-1}$, and for every $\tau>0$
\begin{equation}\label{eq:B_delta_tail}
  \mathcal P\bigl(|\Delta^\mu_i|\ge\tau\bigr)\ \le\ 2\exp\Bigl(-\frac{(1-r^{2})\,\tau^2}{2\rho}\Bigr).
\end{equation}
\end{lemma}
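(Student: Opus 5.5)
The plan is to reduce everything to a statement about a normalised sum of i.i.d.\ bounded variables and then apply Lemma~\ref{lem:hoeffding}. By \eqref{eq:esempi} and \eqref{eq:etabar_def}, at fixed archetypes we can write
\[
  \Delta^\mu_i=\xi^\mu_i\Bigl(\frac{1}{rM}\SOMMA{m=1}{M}\chi^{\mu,m}_i-1\Bigr).
\]
Since $|\xi^\mu_i|=1$, we have $|\Delta^\mu_i|=\bigl|\frac{1}{rM}\sum_m\chi^{\mu,m}_i-1\bigr|$. So all the claimed properties concern only the variables $\chi^{\mu,m}_i$, $m=1,\dots,M$. These are i.i.d.\ with $\E[\chi]=r$ and $\E[\chi^2]=1$, hence $\mathrm{Var}(\chi)=1-r^2$.

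First, linearity gives $\E_{\mathcal S}[\frac{1}{rM}\sum_m\chi^{\mu,m}_i]=1$, so $\E_{\mathcal S}[\Delta^\mu_i]=0$. By independence over $m$ the second moment is the variance of the rescaled mean:
\[
  \E_{\mathcal S}[(\Delta^\mu_i)^2]=\frac{M(1-r^2)}{r^2M^2}=\frac{1-r^2}{Mr^2}=\rho,
\]
in agreement with \eqref{eq:rho_def} and \eqref{eq:etabar_moments}. For the deterministic bound, note that $\bar\eta^\mu_i$ is an average of $M$ signs divided by $r$, so $|\bar\eta^\mu_i|\le r^{-1}$. The triangle inequality then gives $|\Delta^\mu_i|\le 1+r^{-1}$.

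For the tail \eqref{eq:B_delta_tail}, apply the two-sided form of Lemma~\ref{lem:hoeffding} to
\[
  S=\SOMMA{m=1}{M}X_m,\qquad X_m:=\frac{\chi^{\mu,m}_i}{rM}\in\Bigl[-\frac{1}{rM},\frac{1}{rM}\Bigr].
\]
Each range has width $2/(rM)$, so $\sum_m(b_m-a_m)^2=4/(r^2M)$, and $S-\E[S]=\xi^\mu_i\Delta^\mu_i$. Hoeffding then yields
\[
  \mathcal P\bigl(|\Delta^\mu_i|\ge\tau\bigr)\le 2\exp\Bigl(-\frac{Mr^2\tau^2}{2}\Bigr).
\]
The final step is the identity $Mr^2=(1-r^2)/\rho$, which is immediate from \eqref{eq:rho_def} and turns the exponent into the one stated in \eqref{eq:B_delta_tail}.

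The only delicate point is the boundary case $r=1$. There $\rho=0$ and the right-hand side of \eqref{eq:B_delta_tail} is formally indeterminate. I would treat it separately: when $r=1$ every $\chi^{\mu,m}_i=1$, so $\Delta^\mu_i\equiv0$ and the tail probability vanishes for every $\tau>0$. The bound then holds under the natural convention that the right-hand side is read as its limit, which is $0$ because $Mr^2\tau^2/2>0$. Apart from this case the proof is routine, and I do not expect any real obstacle.
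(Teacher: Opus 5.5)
Your proof is correct and is essentially the paper's argument: both apply Lemma~\ref{lem:hoeffding} to the $M$ independent two-valued corruption terms. You factor out $\xi^\mu_i$ and normalise each term by $rM$, while the paper works with $M\Delta^\mu_i=\sum_m(\eta^{\mu,m}_i-r\xi^\mu_i)/r$ and takes $t=M\tau$; the variance and the deterministic bound are obtained exactly as you do.

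Your extra treatment of $r=1$, which the paper does not address, is fine in substance since $\Delta^\mu_i\equiv0$ there. However, the continuous extension of the right-hand side at fixed $M$ is $2e^{-M\tau^2/2}$, not $0$. This does not affect the conclusion, because any nonnegative right-hand side makes the bound trivial in that case.
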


\begin{proof}
Fix $\mu$ and $i$ and condition on $\xi^\mu_i$. The $M$ random variables $X_m:=\bigl(\eta^{\mu,m}_i-r\,\xi^\mu_i\bigr)/r$ are independent, since the examples are generated independently, and are centred because $\E_{\mathcal S}[\eta_i^{\mu,m}]=r\,\xi_i^\mu$. Each of them takes exactly two values: $\eta^{\mu,m}_i=\xi^\mu_i$ with probability $\tfrac{1 + r}{2}$, giving $X_m=r^{-1}(1-r)\xi^\mu_i$, and $\eta^{\mu,m}_i=-\xi^\mu_i$ with probability $\tfrac{1-r}{2}$, giving $X_m=-r^{-1}(1+r)\xi^\mu_i$. The two values are separated by $r^{-1}|(1-r)+(1+r)|=2/r$, so that $b_m-a_m=2/r$ irrespective of the sign of $\xi^\mu_i$ and $\sum_m(b_m-a_m)^2=4M/r^2$. Since $M\Delta^\mu_i=\SOMMA{m=1}{M}X_m$, Lemma~\ref{lem:hoeffding} applied with $t=M\tau$ gives
\[
  \mathcal P\bigl(|M\Delta^\mu_i|\ge M\tau\bigr)
  \le2\exp\Bigl(-\frac{2M^2\tau^2 r^{2}}{4M}\Bigr)=2\exp\Bigl(-\frac{M r^{2}\tau^2}{2}\Bigr),
\]
which is \eqref{eq:B_delta_tail} once $Mr^2=(1-r^2)/\rho$ is used. The variance is $\mathrm{Var}[\eta^{\mu,m}_i]/(r^2M)=(1-r^2)/(r^2M)$, which equals $\rho$ by \eqref{eq:rho_def}, and the bound $|\Delta^\mu_i|\le1+r^{-1}$ is immediate from $|\bar\eta^\mu_i|\le r^{-1}$.
\end{proof}

The next lemma quantifies the quality of the batch \eqref{eq:majority_cue}. It is the counterpart, on the visible layer, of the unbiasedness of $\bar{\bm\eta}$ on the weights: the majority vote cannot remove the noise exactly, since it must return a binary vector, but it removes it exponentially fast in the size of the mini-batch.

\begin{lemma}[Majority vote]\label{lem:majority}
Let $B$ be odd, let $\mathcal B$ be a mini-batch of $B$ examples of the archetype $\bm\xi^\mu$, drawn without replacement from those available in $\mathcal S$, at least $B$ in number, and let $s^\mu_i(\mathcal B)$ be \eqref{eq:majority_cue}. Then
\begin{equation}\label{eq:B_rB}
  \E_{\mathcal S}\bigl[s^\mu_i(\mathcal B)\bigr]=r_B\,\xi^\mu_i,
  \qquad
  r_B:=2\,\mathcal P\Bigl(\mathrm{Bin}\bigl(B,\tfrac{1+r}{2}\bigr)>\tfrac B2\Bigr)-1 ,
\end{equation}
with $r_1=r$. Moreover $r_B$ is non-decreasing along the odd integers and
\begin{equation}\label{eq:B_rB_tail}
  0\ \le\ 1-r_B\ \le\ 2\,e^{-B r^{2}/2} .
\end{equation}
\end{lemma}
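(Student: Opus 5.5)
We prove the four claims in turn: the formula for $\E_{\mathcal S}[s^\mu_i]$ by a direct reduction to a binomial count, monotonicity by a coupling of batches of size $B$ and $B+2$, and the tail bound by Lemma~\ref{lem:hoeffding}.

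\textbf{Reduction to a binomial count.} Fix $\mu$ and $i$, and condition on $\xi^\mu_i$. Since the mini-batch is drawn without replacement among examples generated independently, the $B$ corruption variables $\chi^{\mu,m}_i$, $m\in\mathcal B$, are i.i.d. with $\mathcal P(\chi=1)=p:=\tfrac{1+r}{2}$ and $q:=1-p$. By \eqref{eq:esempi},
\[
\SOMMA{m\in\mathcal B}{}\eta^{\mu,m}_i=\xi^\mu_i\SOMMA{m\in\mathcal B}{}\chi^{\mu,m}_i .
\]
Because $B$ is odd, this sum never vanishes, so $s^\mu_i=\xi^\mu_i\,\mathrm{sign}\bigl(\sum_m\chi^{\mu,m}_i\bigr)$. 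Let $S_B$ be the number of $+1$'s among the $\chi$'s; then $S_B\sim\mathrm{Bin}(B,p)$, and the sign is $+1$ exactly when $S_B>B/2$. Hence
\[
\E_{\mathcal S}\bigl[s^\mu_i\bigr]=\xi^\mu_i\bigl(2\mathcal P(S_B>B/2)-1\bigr)=r_B\,\xi^\mu_i ,
\]
which is \eqref{eq:B_rB}. For $B=1$ we get $r_1=2p-1=r$.

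\textbf{Monotonicity.} Couple a batch of size $B+2$ with one of size $B$ by adding two fresh votes. The majority can change only when the $B$-vote margin is minimal.

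It flips from correct to wrong when $S_B=(B+1)/2$ and both new votes are $-1$. This has probability
\[
\binom{B}{\frac{B+1}{2}}p^{\frac{B+1}{2}}q^{\frac{B-1}{2}}\,q^2 .
\]
It flips from wrong to correct when $S_B=(B-1)/2$ and both new votes are $+1$. This has probability
\[
\binom{B}{\frac{B-1}{2}}p^{\frac{B-1}{2}}q^{\frac{B+1}{2}}\,p^2 .
\]
The two binomial coefficients coincide, so
\[
r_{B+2}-r_B=2\binom{B}{\frac{B-1}{2}}(pq)^{\frac{B+1}{2}}(p-q)\ \ge\ 0,
\]
since $p-q=r\ge0$. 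Thus $r_B$ is non-decreasing along the odd integers.

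\textbf{Tail bound.} The inequality $1-r_B\ge0$ is trivial, because $r_B\le1$. For the upper bound, write
\[
1-r_B=2\mathcal P(S_B<B/2)=2\mathcal P\bigl(S_B-\E S_B<-Br/2\bigr),
\]
using $\E S_B=Bp=B/2+Br/2$. The summands of $S_B$ are independent and take values in $[0,1]$, so Lemma~\ref{lem:hoeffding} with $t=Br/2$ and $\sum_k(b_k-a_k)^2=B$ gives
\[
\mathcal P\bigl(S_B-\E S_B\le-Br/2\bigr)\le\exp\bigl(-2(Br/2)^2/B\bigr)=e^{-Br^2/2}.
\]
This yields \eqref{eq:B_rB_tail}.

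The only delicate point is the monotonicity coupling, specifically identifying which configurations can flip when two votes are added. The remaining steps are direct computations.
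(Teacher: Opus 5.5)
Your proof is correct and follows the paper's argument step by step. It uses the same reduction $s^\mu_i=\xi^\mu_i\,\mathrm{sign}\bigl(\sum_m\chi^{\mu,m}_i\bigr)$ to a $\mathrm{Bin}(B,\tfrac{1+r}{2})$ count, the same one-sided Hoeffding bound with $t=Br/2$, and the same ``add two votes, only a minimal margin can flip'' argument for monotonicity. The only cosmetic difference is that you write out both flip probabilities explicitly, whereas the paper uses the ratio $p/q$ between the two atoms of the tally; both routes give $r_{B+2}-r_B=2\binom{B}{(B-1)/2}(pq)^{(B+1)/2}(p-q)\ge0$.
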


\begin{proof}
By \eqref{eq:esempi}, $\eta^{\mu,m}_i=\xi^\mu_i\chi^{\mu,m}_i$ with $\chi^{\mu,m}_i\in\{\pm1\}$, and since $\xi^\mu_i\in\{\pm1\}$ and the sign function is odd,
\[
  s^\mu_i(\mathcal B)=\mathrm{sign}\Bigl(\xi^\mu_i\SOMMA{m\in\mathcal B}{}\chi^{\mu,m}_i\Bigr)
  =\xi^\mu_i\,\mathrm{sign}\Bigl(\SOMMA{m\in\mathcal B}{}\chi^{\mu,m}_i\Bigr).
\]
The corruption variables are independent and identically distributed, so that a mini-batch drawn without replacement from the available examples carries, marginally, exactly $B$ independent copies of $\chi$; the law of the right-hand side therefore does not depend on which mini-batch is drawn. Let $W:=\#\{m\in\mathcal B:\chi^{\mu,m}_i=+1\}\sim\mathrm{Bin}(B,\tfrac{1+r}{2})$, so that $\sum_{m\in\mathcal B}\chi^{\mu,m}_i=2W-B$, which is never zero because $B$ is odd. Hence the sign equals $+1$ if and only if $W>B/2$, and
\[
  \E_{\mathcal S}\Bigl[\mathrm{sign}\Bigl(\SOMMA{m\in\mathcal B}{}\chi^{\mu,m}_i\Bigr)\Bigr]
  =\mathcal P\bigl(W>\tfrac B2\bigr)-\mathcal P\bigl(W<\tfrac B2\bigr)
  =2\,\mathcal P\bigl(W>\tfrac B2\bigr)-1=r_B ,
\]
which is \eqref{eq:B_rB}; for $B=1$ one has $\mathcal P(W=1)=\tfrac{1+r}{2}$ and thus $r_1=r$.

For \eqref{eq:B_rB_tail}, note that $\E[W]=B\tfrac{1+r}{2}$, so that $\{W<B/2\}\subseteq\{W-\E[W]\le-Br/2\}$. The one-sided form of Lemma~\ref{lem:hoeffding}, applied to the sum of $B$ independent Bernoulli variables of unit range, gives $\mathcal P(W-\E[W]\le-t)\le e^{-2t^2/B}$, and with $t=Br/2$ this is $e^{-Br^2/2}$. Since $1-r_B=2\,\mathcal P(W<B/2)$, the claim follows; non-negativity of $1-r_B$ is clear from $r_B\le1$.

Finally, write $S_B:=\sum_{m\le B}\chi^{\mu,m}_i$ and $p:=\tfrac{1+r}{2}\ge q:=\tfrac{1-r}{2}$. Adding two further votes changes the sign of the tally only when $|S_B|=1$, and precisely when either $S_B=-1$ and both new votes are $+1$, or $S_B=+1$ and both are $-1$; therefore
\[
  r_{B+2}-r_B=2\bigl[p^2\,\mathcal P(S_B=-1)-q^2\,\mathcal P(S_B=+1)\bigr].
\]
The two atoms satisfy $\mathcal P(S_B=+1)/\mathcal P(S_B=-1)=p/q$, since the two events require $(B\pm1)/2$ successes out of $B$ and the binomial coefficients coincide. Substituting, $r_{B+2}-r_B=2\,\mathcal P(S_B=-1)\,p\,(p-q)\ge0$, which is the announced monotonicity.
\end{proof}

It is convenient to record separately the empirical counterpart of \eqref{eq:B_rB}. Averaging the batch over the choice of the mini-batch at \emph{fixed} training set defines
\begin{equation}\label{eq:B_sbar}
  \bar s^{\mu}_{i} := \E_{\mathrm{alg}}\bigl[s^\mu_i(\mathcal B)\bigr]
  = \frac{1}{|\mathfrak B|}\SOMMA{\mathcal B\in\mathfrak B}{}\mathrm{sign}\Bigl(\SOMMA{m\in\mathcal B}{}\eta^{\mu,m}_i\Bigr),
\end{equation}
a deterministic function of $\mathcal S$ obeying $|\bar s^\mu_i|\le1$ and, by Lemma~\ref{lem:majority} and the tower property, $\E_{\mathcal S}[\bar s^\mu_i]=r_B\,\xi^\mu_i$.
We now remind that the positive phase here is clamped, so we have to deal with the negative phase of the algorithm. We then remember that we can write

\begin{equation}
    \mathcal P\bigl((\sigma^a_i)''=s\bigr)=\frac{e^{\beta s\,h^a_i}}{2\cosh\beta h^a_i},
  \quad h^a_i=\frac{1}{\sqrt N}\SOMMA{\mu=1}{K}w^{\mu,a}_i\hat z^a_\mu
        =\SOMMA{b=1}{L}g_{ab}\,w^{\mu_b,a}_i ;
  \label{eq:B_field}
\end{equation}
where we have used the supervised clamp:
\begin{equation}
\label{eq:B_clamp}
    \hat z^a_\mu = \sqrt{N}\biggl(\delta_{\mu,\mu_a} - \lambda \sum_{c \not =a}\delta_{\mu,\mu_c}\biggr)
\end{equation}
and
\begin{equation}
    (z^a_\mu)''=\SOMMA{b=1}{L}g_{ab}\,u^b_\mu(\bm\sigma'')+(\zeta^a_\mu)'',
  \quad u^b_\mu(\bm\sigma'')=\frac{1}{\sqrt N}\SOMMA{j=1}{N}w^{\mu,b}_j(\sigma^b_j)'' ,
  \label{eq:B_neg}
\end{equation}
with $\bm\zeta''_\mu\sim\mathcal N(\bm 0,\beta^{-1}\bm g)$ drawn independently of
$\bm\sigma''$.

Finally, the visible layer of module $a$ is clamped to the cue of the
archetype $\mu_a$ carried by the mini-batch $\mathcal B_a$, $(\sigma^a_i)'=s^{\mu_a}_i(\mathcal B_a)$, so that the prefactor-free
weight update reads
\begin{equation}\label{eq:B_Deltag}
  \Delta y^{\mu,a}_i:=\frac{\Delta w^{\mu,a}_i}{\epsilon\beta}
  =\frac{1}{\sqrt N}\Bigl[(\sigma^a_i)'\,\hat z^a_\mu
   -(\sigma^a_i)''\,(z^a_\mu)''\Bigr].
\end{equation}
Note that, by Lemma~\ref{lem:majority},
\begin{equation}\label{eq:B_pos_mean}
  \E\bigl[(\sigma^a_i)'\bigr]=r_B\,\xi^{\mu_a}_i ,
\end{equation}
so that the positive phase reproduces the archetype up to the factor $r_B$, which tends to one exponentially fast in $B$.

The next lemma performs in closed form all the averages internal to the algorithm. It is exact, no approximation being involved, and it replaces a term-by-term expansion of the two phases; it is also the identity on which the numerical verification of Section~\ref{sec:num_fixed_point} is based.

\begin{lemma}[One-step reduction]\label{lem:reduction}
For any weights $\bm w$, any realised training set $\mathcal S$ and every $i,\mu,a$,
\begin{equation}\label{eq:B_reduction}
  \E_{\mathrm{alg}}\bigl[\Delta y^{\mu,a}_{i}\bigr]
  = \bar s^{\mu_a}_{i}\,\hat g^{a}_{\mu}
  \;-\; t^{a}_{i}\SOMMA{b=1}{L} g_{ab}\,T^{b}_{\mu}
  \;-\; \frac{w^{\mu,a}_{i}\bigl(1-(t^{a}_{i})^{2}\bigr)}{N} ,
\end{equation}
with $\bar s^{\mu}_{i}$ the empirical majority-vote mean \eqref{eq:B_sbar}.
\end{lemma}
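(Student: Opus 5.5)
We split $\Delta y^{\mu,a}_i$ in \eqref{eq:B_Deltag} into its positive and negative phases and average each separately over the three internal sources of randomness: the mini-batches $\bm{\mathcal B}$, the uniforms $\bm\omega$ defining $\bm\sigma''$, and the Gaussian noise $\bm\zeta''$. The item, and hence $\hat{\bm\mu}$, is held fixed throughout; the right-hand side of \eqref{eq:B_reduction} is already written at fixed labels, and $\hat g^a_\mu$, $t^a_i$, $T^b_\mu$ all depend on them.

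\emph{Positive phase.} The factor $\hat z^a_\mu/\sqrt N=\hat g^a_\mu$ is deterministic. The factor $(\sigma^a_i)'=s^{\mu_a}_i(\mathcal B_a)$ depends only on the mini-batch. By \eqref{eq:B_sbar}, its algorithmic average is $\bar s^{\mu_a}_i$, which gives the first term.

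\emph{Negative phase.} Insert \eqref{eq:B_neg}:
\[
\tfrac{1}{\sqrt N}(\sigma^a_i)''(z^a_\mu)''=\tfrac1{\sqrt N}(\sigma^a_i)''\sum_b g_{ab}u^b_\mu(\bm\sigma'')+\tfrac1{\sqrt N}(\sigma^a_i)''(\zeta^a_\mu)''.
\]
The noise term has zero mean because $\bm\zeta''$ is centred and independent of $\bm\sigma''$. In the remaining term, expand
\[
u^b_\mu=N^{-1/2}\sum_j w^{\mu,b}_j(\sigma^b_j)'',
\]
which yields
\[
\frac1N\sum_b g_{ab}\sum_j w^{\mu,b}_j\,\E_{\mathrm{alg}}\bigl[(\sigma^a_i)''(\sigma^b_j)''\bigr].
\]
By \eqref{eq:B_field}, the field $h^a_i$ depends only on the clamped $\hat{\bm z}$, which is deterministic given the item, and not on the mini-batch. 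By \eqref{eq:Psigmagivenz}, the spins $(\sigma^b_j)''$ are therefore conditionally independent across $(j,b)$, with means $t^b_j=\tanh(\beta h^b_j)$.

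Hence for $(j,b)\neq(i,a)$ the pair correlation factorises as $t^a_it^b_j$, while the diagonal term $(j,b)=(i,a)$ equals $1$. Write $1=(t^a_i)^2+\bigl(1-(t^a_i)^2\bigr)$. Summing the factorised part over all $(j,b)$, diagonal included, gives $t^a_i\sum_b g_{ab}T^b_\mu$. The leftover diagonal correction is $g_{aa}w^{\mu,a}_i\bigl(1-(t^a_i)^2\bigr)/N$ with $g_{aa}=1$, which produces the last term. Both pieces enter with a minus sign, completing \eqref{eq:B_reduction}.

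The only delicate step is the bookkeeping of independence. Two facts have to be checked. First, the reconstruction field uses the clamped $\hat{\bm z}$ rather than any sampled hidden state, so $\bm\sigma''$ is independent of the mini-batch draw and of $\bm\zeta''$. Second, the diagonal self-correlation must be isolated correctly; this is the origin of the $O(1/N)$ term. No approximation is needed anywhere, since each average is over a finite product measure or a centred Gaussian.
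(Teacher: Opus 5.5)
Your proposal is correct and follows the paper's proof essentially step for step. You average the clamped positive phase over the mini-batch to get $\bar s^{\mu_a}_i\,\hat g^a_\mu$, discard the Gaussian term by independence and centring, and use the factorised reconstruction law, $\E_{\mathrm{alg}}[(\sigma^a_i)''(\sigma^b_j)''] = t^a_i t^b_j + [1-(t^a_i)^2]\delta_{ij}\delta_{ab}$ with $g_{aa}=1$. Your explicit remark that the item, and hence $\hat{\bm\mu}$, is held fixed is a useful clarification, since the right-hand side depends on the labels and the paper only says so later, in the proof of the theorem.
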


\begin{proof}
Consider the two phases of \eqref{eq:B_Dy} in turn.

\emph{Positive phase.} The clamp is deterministic and, by \eqref{eq:clamp} and \eqref{eq:B_shorthand}, $\hat z^a_\mu/\sqrt N = \hat g^a_\mu$, so that the first term of \eqref{eq:B_Dy} is linear in the presented cue. Averaging over the mini-batch $\mathcal B_a$ drawn from $\mathcal S$, and recalling \eqref{eq:B_sbar},
\[
  \E_{\mathrm{alg}}\bigl[(\sigma^a_i)'\bigr]
  =\frac{1}{|\mathfrak B|}\SOMMA{\mathcal B\in\mathfrak B}{}s^{\mu_a}_i(\mathcal B)=\bar s^{\mu_a}_i,
\]
which gives the first term of \eqref{eq:B_reduction}. Note that the field $h^a_i$ driving the reconstruction is built from the labels and the weights only, so it does not depend on which mini-batch was drawn, and the two phases decouple under $\E_{\mathrm{alg}}$.

\emph{Negative phase.} Inserting the negative-phase hidden state of Algorithm~\ref{alg:cd1} and collecting the two factors $N^{-1/2}$,
\[
  \frac{(\sigma^a_i)''(z^a_\mu)''}{\sqrt N}
  =\frac1N\SOMMA{b=1}{L}g_{ab}\SOMMA{j=1}{N}
     w^{\mu,b}_j\,(\sigma^a_i)''(\sigma^b_j)''
   +\frac{(\sigma^a_i)''(\zeta^a_\mu)''}{\sqrt N} .
\]
In the Gibbs chain the noise $\bm\zeta''$ is drawn \emph{after} the visible layer has been reconstructed, and independently of it; being also centred, it gives $\E_{\mathrm{alg}}[(\sigma^a_i)''(\zeta^a_\mu)'']=\E_{\mathrm{alg}}[(\sigma^a_i)'']\,\E_{\mathrm{alg}}[(\zeta^a_\mu)'']=0$.

As for the first term, by \eqref{eq:Psigmagivenz} the field $h^a_i$ is a function of the weights and of the labels only; once the dataset is fixed it is therefore a deterministic quantity, and the conditional law $\mathcal P(\bm\sigma\mid\bm z)$ factorises over the pairs $(i,a)$. The reconstructed spins are thus conditionally independent, with $\E_{\mathrm{alg}}[(\sigma^a_i)'']=t^a_i$, whence
\[
  \E_{\mathrm{alg}}\bigl[(\sigma^a_i)''(\sigma^b_j)''\bigr]
  =t^a_i\,t^b_j+\bigl[1-(t^a_i)^2\bigr]\delta_{ij}\delta_{ab},
\]
the correction being needed only for the single term $(j,b)=(i,a)$, where $\bigl[(\sigma^a_i)''\bigr]^2=1$. Substituting and using $g_{aa}=1$ gives
\[
  \E_{\mathrm{alg}}\Bigl[\frac{(\sigma^a_i)''(z^a_\mu)''}{\sqrt N}\Bigr]
  =t^{a}_{i}\SOMMA{b=1}{L}g_{ab}\,T^{b}_{\mu}
  +\frac{w^{\mu,a}_{i}\bigl(1-(t^{a}_{i})^{2}\bigr)}{N},
\]
which completes the proof.
\end{proof}

The following lemma controls the sign of the local field $h^a_i$, and hence the quality of the reconstruction. It is stated for a one-parameter family of events: the parameter $\theta$ fixes how the worst-case signal is split between the noise of the dataset and the margin left to the thermal fluctuations, and the resulting bound may be optimised over it.

\begin{lemma}\label{lem:margin}
Let $\theta\in(0,1)$ and define
\begin{equation}\label{eq:B_tau_delta}
  \tau_\theta:=\theta\,\frac{1-\Lambda}{1+\Lambda},
  \qquad
  \delta_\theta:=(1-\theta)\,(1-\Lambda),
  \qquad
  \mathcal E_i:=\Bigl\{\max_{b\le L}\bigl|\Delta^{\mu_b}_i\bigr|\le\tau_\theta\Bigr\}.
\end{equation}
Then, on the event $\mathcal E_i$ and for every module $a$,
\begin{equation}\label{eq:B_margin}
  \xi^{\mu_a}_i\,h^a_i\ \ge\ \delta_\theta\ >\ 0,
\end{equation}
so that $\mathrm{sign}(h^a_i)=\xi^{\mu_a}_i$ and $|h^a_i|\ge\delta_\theta$. Moreover
\begin{equation}\label{eq:B_pi}
  \pi_\theta:=\mathcal P\bigl(\mathcal E_i^{\,c}\bigr)\ \le\ 2L\,e^{-c_1(\theta)/\rho},
  \qquad
  c_1(\theta):=\frac{\theta^2(1-r^2)(1-\Lambda)^2}{2(1+\Lambda)^2}.
\end{equation}
\end{lemma}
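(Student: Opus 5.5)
The argument splits into a deterministic part, \eqref{eq:B_margin} on $\mathcal E_i$, and a probabilistic part, the bound \eqref{eq:B_pi} on $\mathcal P(\mathcal E_i^c)$.

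\emph{Deterministic part.} Since $w^{\mu,a}_i=\bar\eta^\mu_i=\xi^\mu_i+\Delta^\mu_i$ for every $a$, the field \eqref{eq:B_field} becomes
\[
h^a_i=\SOMMA{b=1}{L}g_{ab}\,\xi^{\mu_b}_i+\SOMMA{b=1}{L}g_{ab}\,\Delta^{\mu_b}_i .
\]
Multiplying by $\xi^{\mu_a}_i\in\{\pm1\}$ and isolating the diagonal term $g_{aa}=1$ gives $1-\lambda\sum_{b\neq a}\xi^{\mu_a}_i\xi^{\mu_b}_i\ge 1-\Lambda$. This worst case is attained when every other module's archetype agrees with $\xi^{\mu_a}$ at site $i$. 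The noise term is bounded in absolute value by $\sum_b|g_{ab}|\max_b|\Delta^{\mu_b}_i|\le(1+\Lambda)\tau_\theta=\theta(1-\Lambda)$ on $\mathcal E_i$. Therefore
\[
\xi^{\mu_a}_ih^a_i\ge(1-\Lambda)-\theta(1-\Lambda)=\delta_\theta>0,
\]
because $\theta<1$ and $\Lambda<1$. The sign identity and the bound $|h^a_i|\ge\delta_\theta$ follow immediately. Note that this holds for every $a$ at once, since the event $\mathcal E_i$ controls all $L$ deviations $\Delta^{\mu_b}_i$ simultaneously.

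\emph{Probabilistic part.} Take a union bound over the $L$ labels and apply the Lemma~\ref{lem:emp_mean_dev} tail \eqref{eq:B_delta_tail} with $\tau=\tau_\theta$:
\[
\pi_\theta\le\SOMMA{b=1}{L}\mathcal P\bigl(|\Delta^{\mu_b}_i|>\tau_\theta\bigr)\le 2L\exp\Bigl(-\frac{(1-r^2)\tau_\theta^2}{2\rho}\Bigr).
\]
Substituting $\tau_\theta^2=\theta^2(1-\Lambda)^2/(1+\Lambda)^2$ gives exactly $c_1(\theta)/\rho$. The labels $\mu_b$ are distinct, but independence is not needed because the union bound does not use it. The probability is over $\bm\chi$ at fixed archetypes, and Lemma~\ref{lem:emp_mean_dev} holds conditionally on $\xi^\mu_i$, so the bound is uniform in the archetypes.

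\emph{Main obstacle.} The only delicate point is bookkeeping: one must use the crude worst case $1-\Lambda$ for the signal, since archetypes may coincide at site $i$. This crude bound is what makes $\theta$ split the margin $1-\Lambda$ between dataset noise and the thermal margin. Taking $\theta=1/2$ later recovers the constant $c_1$ of Theorem~\ref{thm:fixed_point}. The strict versus non-strict inequality in the tail bound is harmless.
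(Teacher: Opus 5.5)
Your proposal is correct and follows the paper's proof essentially step by step. You split $\xi^{\mu_a}_i h^a_i$ into the signal $1-\lambda\sum_{c\neq a}\xi^{\mu_a}_i\xi^{\mu_c}_i\ge 1-\Lambda$ and a data-noise term bounded by $(1+\Lambda)\tau_\theta=\theta(1-\Lambda)$ on $\mathcal E_i$, and you bound $\mathcal P(\mathcal E_i^{\,c})$ by a union bound over the $L$ labels using the tail of Lemma~\ref{lem:emp_mean_dev} at $\tau=\tau_\theta$, exactly as the paper does.
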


\begin{proof}
By \eqref{eq:B_shorthand2}, at $\bm w = \bar{\bm\eta}$ the field of module $a$ at site $i$ is a combination of the \emph{planted} weights of the $L$ presented archetypes only,
\[
  h^a_i=\SOMMA{b=1}{L}g_{ab}\,\bar\eta^{\mu_b}_i
       =\SOMMA{b=1}{L}g_{ab}\Bigl(\xi^{\mu_b}_i+\Delta^{\mu_b}_i\Bigr).
\]
Multiplying by $\xi^{\mu_a}_i\in\{\pm1\}$ and using $(\xi^{\mu_a}_i)^2=1$, $g_{aa}=1$ and $g_{ac}=-\lambda$ for $c\neq a$,
\begin{equation}\label{eq:B_margin_expand}
  \xi^{\mu_a}_i\,h^a_i
  =\underbrace{1-\lambda\!\!\SOMMA{c\neq a}{}\!\xi^{\mu_a}_i\xi^{\mu_c}_i}_{\text{signal}}
  \;+\;\underbrace{\SOMMA{b=1}{L}g_{ab}\,\xi^{\mu_a}_i\Delta^{\mu_b}_i}_{\text{data noise}} .
\end{equation}
Let us discuss the two contributions separately.

\emph{Signal.} The $L-1$ products $\xi^{\mu_a}_i\xi^{\mu_c}_i$ take the values $\pm1$ with equal probability, so the signal in \eqref{eq:B_margin_expand} ranges between $1-\Lambda$ and $1+\Lambda$. The worst case is attained on the sites where all the products equal $+1$, i.e.\ on a positive fraction of the sites, and there the signal equals $1-\Lambda$. We stress that this is an $O(1)$ reduction of the signal, of the same order as the signal itself, and not a subleading cross-talk: the inhibitory entries $-\lambda$ of the clamp \eqref{eq:clamp} genuinely oppose the retrieval on about half of the sites. The signal stays strictly positive precisely when $\Lambda<1$, that is, precisely when $\bm g \succ 0$. It is worth noticing that the dataset quality does not appear here: the weights being unbiased, they generate a field of unit amplitude, and the whole effect of the noise of the data has been moved into the second term.

\emph{Data noise.} By the triangle inequality the second term of \eqref{eq:B_margin_expand} is bounded in modulus by $\sum_b|g_{ab}|\,|\Delta^{\mu_b}_i|\le (1+\Lambda)\max_b|\Delta^{\mu_b}_i|$, and on $\mathcal E_i$ this is at most $(1+\Lambda)\tau_\theta = \theta(1-\Lambda)$, i.e.\ a fraction $\theta$ of the worst-case signal. Combining,
\[
  \xi^{\mu_a}_i\,h^a_i\ \ge\ (1-\Lambda)-\theta(1-\Lambda)=(1-\theta)(1-\Lambda)=\delta_\theta,
\]
which is \eqref{eq:B_margin}.

Finally, the probability of the complementary event is estimated by a union bound over the $L$ presented archetypes and Lemma~\ref{lem:emp_mean_dev},
\[
  \mathcal P\bigl(\mathcal E_i^{\,c}\bigr)
  \le\SOMMA{b=1}{L}\mathcal P\bigl(|\Delta^{\mu_b}_i|>\tau_\theta\bigr)
  \le2L\,\exp\Bigl(-\frac{(1-r^2)\tau_\theta^2}{2\rho}\Bigr),
\]
and substituting $\tau_\theta$ from \eqref{eq:B_tau_delta},
\[
  \frac{(1-r^2)\tau_\theta^2}{2}
  =\frac{\theta^2(1-r^2)(1-\Lambda)^2}{2(1+\Lambda)^2}=c_1(\theta),
\]
which is \eqref{eq:B_pi}. We remark that the constant $c_1(\theta)$ is exactly the one obtained in the biased convention: the two effects of the rescaling, a signal enlarged by $r^{-1}$ and a deviation $\Delta$ enlarged by the same factor, compensate in the threshold $\tau_\theta$, as they must, since the event $\mathcal E_i$ concerns a relative accuracy and $\rho$ is the intrinsic measure of it.
\end{proof}

\begin{lemma}\label{lem:recon}
With $e^a_i=t^a_i-\xi^{\mu_a}_i$ as in \eqref{eq:B_shorthand2} one has $|e^a_i|\le2$ always, and
\begin{equation}\label{eq:B_e_bound}
  \E_{\mathcal S}\bigl[|e^a_i|\bigr]\ \le\ \varepsilon_\beta(\theta)+2\pi_\theta,
  \qquad
  \varepsilon_\beta(\theta):=2e^{-2\beta\delta_\theta}=2e^{-2(1-\theta)\beta (1-\Lambda)},
\end{equation}
with $\delta_\theta$ and $\pi_\theta$ as in Lemma~\ref{lem:margin}. Moreover $\mathcal P\bigl((\sigma^a_i)''\neq\xi^{\mu_a}_i\bigr)\le e^{-2\beta\delta_\theta}+\pi_\theta$.
\end{lemma}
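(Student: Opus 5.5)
The plan is to split on the event $\mathcal E_i$ of Lemma~\ref{lem:margin} and to treat the two pieces separately. The bound $|e^a_i|\le 2$ is immediate, since $|t^a_i|\le1$ and $|\xi^{\mu_a}_i|=1$. For the expectation I will write
\[
  \E_{\mathcal S}\bigl[|e^a_i|\bigr]=\E_{\mathcal S}\bigl[|e^a_i|\,\mathbb 1_{\mathcal E_i}\bigr]+\E_{\mathcal S}\bigl[|e^a_i|\,\mathbb 1_{\mathcal E_i^{\,c}}\bigr].
\]
The second term is at most $2\,\mathcal P(\mathcal E_i^{\,c})=2\pi_\theta$ by the uniform bound. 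For the first term I will use \eqref{eq:B_margin}: on $\mathcal E_i$ one has $\mathrm{sign}(h^a_i)=\xi^{\mu_a}_i$ and $|h^a_i|\ge\delta_\theta$. Because $\tanh$ is odd, this gives $|e^a_i|=|\tanh(\beta h^a_i)-\mathrm{sign}(h^a_i)|=1-\tanh(\beta|h^a_i|)$.

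The elementary identity
\[
  1-\tanh x=\frac{2e^{-2x}}{1+e^{-2x}}\le 2e^{-2x},\qquad x\ge0,
\]
together with the fact that $1-\tanh x$ is decreasing in $x$, then yields $|e^a_i|\le 2e^{-2\beta\delta_\theta}=\varepsilon_\beta(\theta)$ on $\mathcal E_i$. Adding the two pieces gives \eqref{eq:B_e_bound}. Note that $h^a_i$ depends only on the planted weights and the label tuple, so $\mathcal E_i$ is measurable with respect to the dataset; the only randomness involved is $\E_{\mathcal S}$ at fixed archetypes and labels, exactly as in Lemma~\ref{lem:margin}.

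For the flip probability I will condition on the dataset. By \eqref{eq:B_field},
\[
  \mathcal P\bigl((\sigma^a_i)''=-\mathrm{sign}(h^a_i)\mid\mathcal S\bigr)=\frac{e^{-\beta|h^a_i|}}{2\cosh\beta h^a_i}=\frac{1}{1+e^{2\beta|h^a_i|}}\le e^{-2\beta|h^a_i|}.
\]
On $\mathcal E_i$ we have $\mathrm{sign}(h^a_i)=\xi^{\mu_a}_i$, so the flip probability there is at most $e^{-2\beta\delta_\theta}$. On $\mathcal E_i^{\,c}$ I bound the conditional probability by $1$, which contributes at most $\pi_\theta$. Averaging over $\mathcal S$ with the tower property gives $e^{-2\beta\delta_\theta}+\pi_\theta$.

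I expect no real obstacle here. The only point needing care is that $h^a_i$ is exactly zero only off $\mathcal E_i$, where the crude bound is used anyway, so the sign identification is valid on $\mathcal E_i$.
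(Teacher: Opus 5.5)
Your proposal is correct and follows essentially the same route as the paper's proof: you split on $\mathcal E_i$, and on $\mathcal E_i$ you use the sign identification together with $1-\tanh x\le 2e^{-2x}$, with the crude bound $|e^a_i|\le 2$ on $\mathcal E_i^{\,c}$; for the flip probability you bound the logistic conditional by $(1+e^{2\beta|h^a_i|})^{-1}\le e^{-2\beta\delta_\theta}$ on $\mathcal E_i$ and add $\pi_\theta$. Your explicit remarks on $\mathcal S$-measurability of $\mathcal E_i$ at fixed labels and on the tower property make precise what the paper leaves implicit.
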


\begin{proof}
On $\mathcal E_i$, Lemma~\ref{lem:margin} guarantees that the field points in the direction of the archetype and that its modulus is at least $\delta_\theta$. Since $\mathrm{sign}(h^a_i)=\xi^{\mu_a}_i$ there, we may write
\[
  |e^a_i|=\bigl|\tanh(\beta h^a_i)-\mathrm{sign}(h^a_i)\bigr|
        =1-\tanh\bigl(\beta|h^a_i|\bigr)
        =\frac{2}{e^{2\beta|h^a_i|}+1}
        \ \le\ 2e^{-2\beta|h^a_i|}\ \le\ 2e^{-2\beta\delta_\theta}=\varepsilon_\beta(\theta) ,
\]
where we used $1-\tanh x=2/(e^{2x}+1)\le 2e^{-2x}$ and then $|h^a_i|\ge\delta_\theta$. On the complementary event we only use the deterministic bound $|e^a_i|\le|t^a_i|+|\xi^{\mu_a}_i|\le2$. Splitting the expectation accordingly,
\[
  \E_{\mathcal S}|e^a_i|
  \le\varepsilon_\beta(\theta)\,\mathcal P(\mathcal E_i)+2\,\mathcal P(\mathcal E_i^{\,c})
  \le\varepsilon_\beta(\theta)+2\pi_\theta ,
\]
which is \eqref{eq:B_e_bound}. For the last assertion, the logistic rule \eqref{eq:Psigmagivenz} gives $\mathcal P\bigl((\sigma^a_i)''\neq\mathrm{sign}(h^a_i)\mid h^a_i\bigr)=(1+e^{2\beta|h^a_i|})^{-1}\le e^{-2\beta\delta_\theta}$ on $\mathcal E_i$; adding $\mathcal P(\mathcal E_i^{\,c})\le\pi_\theta$ concludes.
\end{proof}

\begin{lemma}\label{lem:overlap}
Let $Q^{\mu,b}:=\frac1N\SOMMA{j=1}{N}\bar\eta^\mu_j\,\xi^{\mu_b}_j$ and let $c_{\mu,\nu}:=\frac1N\SOMMA{j=1}{N}\xi^\mu_j\xi^\nu_j$ denote the empirical overlap of two archetypes. Then, at fixed $\bm\xi$,
\begin{equation}\label{eq:B_overlap}
  |Q^{\mu,b}|\le r^{-1} \ \text{ almost surely},
  \qquad
  \E_{\mathcal S}\bigl[Q^{\mu,b}\bigr]=c_{\mu,\mu_b},
  \qquad
  c_{\mu_b,\mu_b}=1 ,
\end{equation}
while, over the archetypes, for every $\mu\neq\nu$ and every $t>0$,
\begin{equation}\label{eq:B_overlap_tail}
  \mathcal P_{\bm\xi}\bigl(|c_{\mu,\nu}|\ge t\bigr)\le 2e^{-Nt^2/2},
  \qquad
  \mathcal P_{\bm\xi}\Bigl(\max_{\mu\le K,\,b\le L}\bigl|c_{\mu,\mu_b}-\delta_{\mu,\mu_b}\bigr|\ge t\Bigr)\le 2KL\,e^{-Nt^2/2} ,
\end{equation}
and consequently $\E_{\bm\xi}|c_{\mu,\nu}|\le N^{-1/2}$ for $\mu\neq\nu$.
\end{lemma}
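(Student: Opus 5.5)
The statement to prove is Lemma~\ref{lem:overlap}, and it splits into a part at fixed archetypes, controlled by the dataset law, and a part over the archetypes, controlled by Hoeffding.

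\emph{Fixed archetypes.} For the almost-sure bound, I will use $|\bar\eta^\mu_j|\le r^{-1}$, recalled after \eqref{eq:B_shorthand}, together with $|\xi^{\mu_b}_j|=1$. Then $|Q^{\mu,b}|\le\frac1N\sum_j|\bar\eta^\mu_j|\le r^{-1}$. For the mean, I will use linearity of $\E_{\mathcal S}$ and unbiasedness, $\E_{\mathcal S}[\bar\eta^\mu_j]=\xi^\mu_j$ from \eqref{eq:etabar_moments}. The archetypes are fixed under $\E_{\mathcal S}$, so
\[
\E_{\mathcal S}\bigl[Q^{\mu,b}\bigr]=\frac1N\SOMMA{j=1}{N}\xi^\mu_j\xi^{\mu_b}_j=c_{\mu,\mu_b}.
\]
Finally, $c_{\mu_b,\mu_b}=\frac1N\sum_j(\xi^{\mu_b}_j)^2=1$.

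\emph{Over the archetypes.} Fix $\mu\neq\nu$. The products $X_j:=\xi^\mu_j\xi^\nu_j$ are i.i.d.\ symmetric signs, because the entries are independent and uniform by \eqref{eq:pattern_prob}. They are therefore centred and take values in $[-1,1]$, a range of width $2$. Set $S:=Nc_{\mu,\nu}=\sum_jX_j$ and apply Lemma~\ref{lem:hoeffding} with $t\to Nt$:
\[
\mathcal P_{\bm\xi}\bigl(|c_{\mu,\nu}|\ge t\bigr)\le 2\exp\!\Bigl(-\frac{2N^2t^2}{4N}\Bigr)=2e^{-Nt^2/2}.
\]
For the maximum, I will run a union bound over the pairs $(\mu,b)$ with $\mu\le K$ and $b\le L$, of which there are at most $KL$. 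Pairs with $\mu=\mu_b$ contribute nothing, since $c_{\mu_b,\mu_b}-1=0$. Every other pair involves two distinct archetype indices and is covered by the single-pair bound, which gives $2KL\,e^{-Nt^2/2}$.

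The one subtle point is that the labels $\mu_b$ must not depend on $\bm\xi$. They come from the item drawn from $\mathcal S$ and are label indices, not functions of the archetype values, so the union bound holds conditionally on the tuple. I will state this explicitly.

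\emph{Expectation bound.} I will not integrate the tail, because that gives a constant $\sqrt{2\pi}$ rather than $1$. Instead I will use Jensen's inequality, since the $X_j$ are independent and centred with unit second moment:
\[
\E_{\bm\xi}|c_{\mu,\nu}|\le\sqrt{\E_{\bm\xi}[c_{\mu,\nu}^2]}=\sqrt{\frac1{N^2}\SOMMA{j=1}{N}\E_{\bm\xi}[X_j^2]}=N^{-1/2}.
\]
No step poses a real obstacle. The only points needing care are the independence of the labels from $\bm\xi$ in the union bound and the choice of Jensen over tail integration to get the constant $1$.
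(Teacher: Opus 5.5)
Your proof is correct and follows the paper's argument step by step: the deterministic bound $|\bar\eta^\mu_j|\le r^{-1}$ and unbiasedness for the fixed-$\bm\xi$ part, Hoeffding with range $2$ for the pairwise tail, a union bound over the at most $KL$ off-diagonal pairs, and Jensen for the $N^{-1/2}$ mean. Your explicit remark that the label tuple $\hat{\bm\mu}$ does not depend on $\bm\xi$ is a point the paper leaves implicit, and it is the right justification for taking the union bound conditionally on the tuple.
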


\begin{proof}
The bound $|Q^{\mu,b}|\le r^{-1}$ follows from $|\bar\eta^\mu_j|\le r^{-1}$ and $|\xi^{\mu_b}_j|=1$. For the mean, $\E_{\mathcal S}[\bar\eta^\mu_j]=\xi^\mu_j$ by the definition of $\Delta^\mu_j$, whence $\E_{\mathcal S}[Q^{\mu,b}]=\frac 1N\sum_j\xi^\mu_j\xi^{\mu_b}_j=c_{\mu,\mu_b}$; in particular $c_{\mu_b,\mu_b}=1$ because $(\xi^{\mu_b}_j)^2=1$.

For \eqref{eq:B_overlap_tail}, fix $\mu\neq\nu$: the products $X_j=\xi^\mu_j\xi^\nu_j$ are independent and symmetric and take values in $[-1,1]$, so that $b_j-a_j=2$ and Lemma~\ref{lem:hoeffding} applied to $Nc_{\mu,\nu}=\sum_jX_j$ with $t\mapsto Nt$ gives $\mathcal P_{\bm\xi}(|c_{\mu,\nu}|\ge t)\le2\exp(-2N^2t^2/4N)=2e^{-Nt^2/2}$. The uniform bound follows from a union bound over the at most $KL$ pairs $(\mu,b)$ with $\mu\neq\mu_b$, the pairs with $\mu=\mu_b$ contribute $c_{\mu_b,\mu_b}-1=0$, and no independence among them is required. Finally, $\E_{\bm\xi}|c_{\mu,\nu}|\le(\E_{\bm\xi}c_{\mu,\nu}^2)^{1/2}=N^{-1/2}$ by Jensen's inequality.
\end{proof}

\begin{remark}\label{rem:quenched}
The $N^{-1/2}$ in Lemma~\ref{lem:overlap} is not an artefact of the algorithm: it is the residual overlap between two distinct archetypes at finite $N$, which the negative phase inevitably sees. It is the only contribution to Theorem~\ref{thm:fixed_point} that does not vanish exponentially, and it survives both the limits $\beta\to\infty$ and $M\to\infty$. Since it is a random function of the quenched disorder, it must be kept explicit in the bound and converted into a statement about $\bm\xi$ only at the very end, which is why Theorem~\ref{thm:fixed_point} is formulated with a confidence level $\gamma$.
\end{remark}

\subsection*{Proof of Theorem~\ref{thm:fixed_point} and of Corollary~\ref{cor:fixed_point}}

\begin{proof}[Proof of Theorem~\ref{thm:fixed_point}]
Fix $\theta\in(0,1)$ and set $\bm w=\bar{\bm\eta}$, so that $w^{\mu,b}_j=\bar\eta^\mu_j$ for every module $b$ and $T^b_\mu = \frac1N\sum_j\bar\eta^\mu_j t^b_j$; we nevertheless keep the module index visible, so that the modular sums can be followed. Taking the average $\E_{\mathcal S}$ of the exact identity \eqref{eq:B_reduction},
\begin{equation}\label{eq:B_pf_start}
  \E\bigl[\Delta y^{\mu,a}_i\bigr]
  =\underbrace{\E_{\mathcal S}\bigl[\bar s^{\mu_a}_i\bigr]\,\hat g^a_\mu}_{=:P}
  -\underbrace{\E_{\mathcal S}\Bigl[t^a_i\SOMMA{b=1}{L}g_{ab}T^b_\mu\Bigr]}_{=:T}
  -\underbrace{\frac1N\,\E_{\mathcal S}\Bigl[\bar\eta^\mu_i\bigl(1-(t^a_i)^2\bigr)\Bigr]}_{=:D} .
\end{equation}

\medskip
\noindent\textbf{Step 1: the positive phase.}
By Lemma~\ref{lem:majority} and \eqref{eq:B_sbar}, $\E_{\mathcal S}[\bar s^{\mu_a}_i]=r_B\,\xi^{\mu_a}_i$, so that
\begin{equation}\label{eq:B_pf_pos}
  P=r_B\,\xi^{\mu_a}_i\,\hat g^a_\mu=r_B\,\xi^{\mu_a}_i\SOMMA{b=1}{L}g_{ab}\,\delta_{\mu,\mu_b} .
\end{equation}
The majority vote is therefore what the positive phase contributes: an image of the archetype, faithful up to the factor $r_B$ of \eqref{eq:B_rB}. Had the visible layer been clamped to a single example, this factor would have been $r$, and no choice of the mini-batch size could have removed it.

\medskip
\noindent\textbf{Step 2: the diagonal correction $D$ is negligible.}
Writing $t^a_i=\xi^{\mu_a}_i+e^a_i$ and using $(\xi^{\mu_a}_i)^2=1$,
\[
  1-(t^a_i)^2=-2\,\xi^{\mu_a}_ie^a_i-(e^a_i)^2,
  \qquad\text{so}\qquad
  \bigl|1-(t^a_i)^2\bigr|\le2|e^a_i|+|e^a_i|^2\le4|e^a_i|,
\]
the last step using $|e^a_i|\le2$. Since $|\bar\eta^\mu_i|\le r^{-1}$, Lemma~\ref{lem:recon} gives
\begin{equation}\label{eq:B_pf_D}
  |D|\ \le\ \frac{4}{rN}\,\E_{\mathcal S}|e^a_i|\ \le\ \frac{4\bigl(\varepsilon_\beta(\theta)+2\pi_\theta\bigr)}{rN}.
\end{equation}

\medskip
\noindent\textbf{Step 3: expanding $T$, and the cancellation.}
Substituting $t^a_i=\xi^{\mu_a}_i+e^a_i$ and $t^b_j=\xi^{\mu_b}_j+e^b_j$ in $T$ and expanding the product gives four contributions,
\[
  t^a_it^b_j
  =\underbrace{\xi^{\mu_a}_i\xi^{\mu_b}_j}_{\rm(I)}
  +\underbrace{\xi^{\mu_a}_i\,e^b_j}_{\rm(II)}
  +\underbrace{e^a_i\,\xi^{\mu_b}_j}_{\rm(III)}
  +\underbrace{e^a_i\,e^b_j}_{\rm(IV)} ,
\]
which we treat separately.

\emph{(I) --- the leading term.} Recognising the overlap of Lemma~\ref{lem:overlap}, $Q^{\mu,b}=\frac1N\sum_j\bar\eta^\mu_j\xi^{\mu_b}_j$,
\[
  \frac1N\SOMMA{b}{}g_{ab}\SOMMA{j}{}\E_{\mathcal S}\bigl[\bar\eta^\mu_j\xi^{\mu_a}_i\xi^{\mu_b}_j\bigr]
  =\xi^{\mu_a}_i\SOMMA{b}{}g_{ab}\,\E_{\mathcal S}\bigl[Q^{\mu,b}\bigr]
  =\xi^{\mu_a}_i\SOMMA{b=1}{L}g_{ab}\,c_{\mu,\mu_b} ,
\]
so that, subtracting \eqref{eq:B_pf_pos}, adding and removing $\xi^{\mu_a}_i\hat g^a_\mu$ and using $c_{\mu_b,\mu_b}=1$ together with $|\hat g^a_\mu|\le1$,
\begin{equation}\label{eq:B_pf_I}
  \bigl|P-\mathrm{(I)}\bigr|
  =\Bigl|\xi^{\mu_a}_i\SOMMA{b}{}g_{ab}\bigl(\delta_{\mu,\mu_b}-c_{\mu,\mu_b}\bigr)-(1-r_B)\,\xi^{\mu_a}_i\,\hat g^a_\mu\Bigr|
  \le (1+\Lambda)\max_{b\le L}\bigl|c_{\mu,\mu_b}-\delta_{\mu,\mu_b}\bigr|+(1-r_B) .
\end{equation}
This is the cancellation the theorem rests on, and it is worth reading it once more in words: the clamp asks the hidden layer for the value $\hat z^a_\mu$, and the machine, having reconstructed the archetypes on its visible layers, returns exactly $\hat z^a_\mu$, the weights being centred on the archetypes and not on a shrunken copy of them. The two agree once multiplied by the visible activity, which the majority vote supplies with the amplitude $r_B$: the residual mismatch $1-r_B$ is the price of the fact that a cue is a binary vector, and it is the mini-batch channel of \eqref{eq:thm_bound}. This is also where the two ingredients of the protocol are seen to be inseparable. With biased weights, both sides would have carried a factor $r$ and would have cancelled; with unbiased weights, the positive phase must be cleaned as well, and the majority vote is what cleans it.

\emph{(II).} Using $|\xi^{\mu_a}_i|=1$, $|\bar\eta^\mu_j|\le r^{-1}$ and Lemma~\ref{lem:recon} at each site,
\[
  \Bigl|\frac1N\SOMMA{b}{}g_{ab}\SOMMA{j}{}\E_{\mathcal S}\bigl[\bar\eta^\mu_j\xi^{\mu_a}_ie^b_j\bigr]\Bigr|
  \le\frac{1}{rN}\SOMMA{b}{}|g_{ab}|\SOMMA{j}{}\E_{\mathcal S}\bigl|e^b_j\bigr|
  \le\frac{1+\Lambda}{r}\bigl(\varepsilon_\beta(\theta)+2\pi_\theta\bigr),
\]
since $\E_{\mathcal S}|e^b_j|\le\varepsilon_\beta(\theta)+2\pi_\theta$ uniformly in $j$ and the $N$ terms are divided by $N$.

\emph{(III).} Here $e^a_i$ does not depend on the summation index $j$, so the sum reconstructs the overlap again:
\[
  \frac1N\SOMMA{b}{}g_{ab}\SOMMA{j}{}\E_{\mathcal S}\bigl[\bar\eta^\mu_je^a_i\xi^{\mu_b}_j\bigr]
  =\E_{\mathcal S}\Bigl[e^a_i\SOMMA{b}{}g_{ab}Q^{\mu,b}\Bigr],
  \qquad
  \Bigl|\SOMMA{b}{}g_{ab}Q^{\mu,b}\Bigr|\le \frac{1+\Lambda}{r}
\]
by the deterministic bound $|Q^{\mu,b}|\le r^{-1}$ of Lemma~\ref{lem:overlap}. The contribution is therefore at most $\frac{1+\Lambda}{r}\,\E_{\mathcal S}|e^a_i|\le\frac{1+\Lambda}{r}\bigl(\varepsilon_\beta(\theta)+2\pi_\theta\bigr)$.

\emph{(IV).} Bounding one factor by $|e^a_i|\le2$ and averaging the other,
\[
  \Bigl|\frac1N\SOMMA{b}{}g_{ab}\SOMMA{j}{}\E_{\mathcal S}\bigl[\bar\eta^\mu_je^a_ie^b_j\bigr]\Bigr|
  \le\frac{2}{rN}\SOMMA{b}{}|g_{ab}|\SOMMA{j}{}\E_{\mathcal S}\bigl|e^b_j\bigr|
  \le\frac{2(1+\Lambda)}{r}\bigl(\varepsilon_\beta(\theta)+2\pi_\theta\bigr).
\]

\medskip
\noindent\textbf{Step 4: collecting the estimates.}
Inserting the four contributions and \eqref{eq:B_pf_D} into \eqref{eq:B_pf_start}, the leading terms cancel by (I) and we are left with
\begin{align*}
  \bigl|\E[\Delta y^{\mu,a}_i]\bigr|
  &\le\underbrace{(1+\Lambda)\max_{b}\bigl|c_{\mu,\mu_b}-\delta_{\mu,\mu_b}\bigr|+(1-r_B)}_{\rm(I)}
   +\underbrace{\frac{1+\Lambda}{r}\bigl(\varepsilon_\beta(\theta)+2\pi_\theta\bigr)}_{\rm(II)}
   +\underbrace{\frac{1+\Lambda}{r}\bigl(\varepsilon_\beta(\theta)+2\pi_\theta\bigr)}_{\rm(III)}\\
   &\quad+\underbrace{\frac{2(1+\Lambda)}{r}\bigl(\varepsilon_\beta(\theta)+2\pi_\theta\bigr)}_{\rm(IV)}
   +\underbrace{\frac{4\bigl(\varepsilon_\beta(\theta)+2\pi_\theta\bigr)}{rN}}_{D},
\end{align*}
that is, for $N\ge4$, so that $4/N\le1\le1+\Lambda$,
\begin{equation}\label{eq:B_pf_final}
  \bigl|\E[\Delta y^{\mu,a}_i]\bigr|
  \ \le\ \frac{5(1+\Lambda)}{r}\bigl(\varepsilon_\beta(\theta)+2\pi_\theta\bigr)
        +(1-r_B)
        +(1+\Lambda)\max_{b\le L}\bigl|c_{\mu,\mu_b}-\delta_{\mu,\mu_b}\bigr| .
\end{equation}
Substituting $\varepsilon_\beta(\theta)=2e^{-2(1-\theta)\beta (1-\Lambda)}$ from Lemma~\ref{lem:recon}, $\pi_\theta\le2Le^{-c_1(\theta)/\rho}$ from Lemma~\ref{lem:margin} and $1-r_B\le2e^{-Br^2/2}$ from Lemma~\ref{lem:majority} yields
\begin{equation}\label{eq:B_pf_theta}
  \bigl|\E[\Delta y^{\mu,a}_i]\bigr|
  \ \le\ \frac{20(1+\Lambda)}{r}\Bigl(e^{-2(1-\theta)\beta (1-\Lambda)}+L\,e^{-c_1(\theta)/\rho}\Bigr)
  +2e^{-Br^2/2}
  +(1+\Lambda)\max_{b\le L}\bigl|c_{\mu,\mu_b}-\delta_{\mu,\mu_b}\bigr| ,
\end{equation}
where we used $10\le20$ to write the two exponential channels with a single constant. Since $\theta\in(0,1)$ was arbitrary, \eqref{eq:B_pf_theta} holds for every $\theta$; the symmetric choice $\theta=1/2$ gives $2(1-\theta)=1$ and $c_1(1/2)=c_1$, that is precisely the two exponents appearing in \eqref{eq:thm_bound}, with $C=20(1+\Lambda)/r$ and $C'=1+\Lambda$.  The whole argument has been carried out at fixed $\mu_1,\dots,\mu_L$, and none of the three constants depends on that choice, the finite-size term being already controlled uniformly over the $KL$ pairs by \eqref{eq:B_overlap_tail}. The bound is therefore uniform in $\hat{\bm\mu}$ and survives the average over it contained in $\mathcal Q$.

It remains to convert the last term of \eqref{eq:B_pf_theta} into a statement about the archetypes. By the uniform tail \eqref{eq:B_overlap_tail}, for every $\gamma>0$ the event
\[
  \Bigl\{\max_{\mu\le K,\,b\le L}\bigl|c_{\mu,\mu_b}-\delta_{\mu,\mu_b}\bigr|<\gamma\Bigr\}
\]
has $\mathcal P_{\bm\xi}$-probability at least $1-2KL\,e^{-N\gamma^2/2}$, and on that event the right-hand side of \eqref{eq:B_pf_theta} may be maximised over $i,\mu,a$. Recalling $\Delta w^{\mu,a}_i=\epsilon\beta\,\Delta y^{\mu,a}_i$ gives \eqref{eq:thm_bound}.
\end{proof}

\begin{remark}\label{rem:fluctuations}
Theorem~\ref{thm:fixed_point} is a statement about the mean of a single update, not about its fluctuations, which remain of order one even where the mean vanishes: the positive phase is built on one mini-batch and the negative one on a single Gibbs sweep, and neither is averaged over. At a finite learning rate the weights therefore perform a stochastic approximation around $\bar{\bm\eta}$ rather than sitting exactly on it, with a stationary spread that shrinks with $\epsilon\beta$. This is the usual situation of stochastic gradient methods, and it is why the fixed point is to be understood in the sense of Definition~\ref{def:fixed_point}.
\end{remark}

\begin{proof}[Proof of Corollary~\ref{cor:fixed_point}]
The choice $\gamma=\sqrt{2\ln(2KLN)/N}$ makes $2KL\,e^{-N\gamma^2/2}=N^{-1}$, which gives the stated bound and confidence level. The first term of the bound vanishes as $\beta\to\infty$, the second as $M\to\infty$ (i.e.\ $\rho\to0$), the third as $B\to\infty$, and the fourth as $N\to\infty$ provided $\ln K=o(N)$; the four channels being independent, the order of the limits is immaterial. Finally, taking $\E_{\bm\xi}$ of \eqref{eq:B_pf_theta} at fixed $i,\mu,a$ and using $\E_{\bm\xi}|c_{\mu,\nu}|\le N^{-1/2}$ from Lemma~\ref{lem:overlap}, together with the fact that the diagonal pairs contribute zero, replaces the last term by $C'/\sqrt N$.
\end{proof}


\begin{thebibliography}{20}
\providecommand{\natexlab}[1]{#1}
\providecommand{\url}[1]{\texttt{#1}}
\expandafter\ifx\csname urlstyle\endcsname\relax
  \providecommand{\doi}[1]{doi: #1}\else
  \providecommand{\doi}{doi: \begingroup \urlstyle{rm}\Url}\fi

\bibitem[Zdeborov(2017)]{Lenka-2017}
Lenka Zdeborov\'a.
\newblock Machine learning: New tool in the box.
\newblock \emph{Nature Physics}, 13\penalty0 (5):\penalty0 420--421, 2017.
\newblock \doi{10.1038/nphys4053}.

\bibitem[Hopfield(1982)]{hopfield1982}
John~J. Hopfield.
\newblock Neural networks and physical systems with emergent collective computational abilities.
\newblock \emph{Proceedings of the National Academy of Sciences}, 79\penalty0 (8):\penalty0 2554--2558, 1982.
\newblock \doi{10.1073/pnas.79.8.2554}.

\bibitem[Amit et~al.(1985)Amit, Gutfreund, and Sompolinsky]{amitgutfreundsompolinsky1985}
Daniel~J. Amit, Hanoch Gutfreund, and Haim Sompolinsky.
\newblock Spin-glass models of neural networks.
\newblock \emph{Physical Review A}, 32\penalty0 (2):\penalty0 1007--1018, 1985.
\newblock \doi{10.1103/PhysRevA.32.1007}.

\bibitem[Ackley et~al.(1985)Ackley, Hinton, and Sejnowski]{ackleyhintonsejnowski1985}
David~H. Ackley, Geoffrey~E. Hinton, and Terrence~J. Sejnowski.
\newblock A learning algorithm for {Boltzmann} machines.
\newblock \emph{Cognitive Science}, 9\penalty0 (1):\penalty0 147--169, 1985.
\newblock \doi{10.1207/s15516709cog0901_7}.

\bibitem[Hinton(2002)]{hinton2002cd}
Geoffrey~E. Hinton.
\newblock Training products of experts by minimizing contrastive divergence.
\newblock \emph{Neural Computation}, 14\penalty0 (8):\penalty0 1771--1800, 2002.
\newblock \doi{10.1162/089976602760128018}.

\bibitem[Coolen et~al.(2005)Coolen, K{\"u}hn, and Sollich]{10.1093/oso/9780198530237.001.0001}
Anthony C.~C. Coolen, Reimer K{\"u}hn, and Peter Sollich.
\newblock \emph{Theory of Neural Information Processing Systems}.
\newblock Oxford University Press, 2005.
\newblock \doi{10.1093/oso/9780198530237.001.0001}.

\bibitem[Barra et~al.(2012)Barra, Bernacchia, Santucci, and Contucci]{BARRA20121}
Adriano Barra, Alberto Bernacchia, Enrica Santucci, and Pierluigi Contucci.
\newblock On the equivalence of {Hopfield} networks and {Boltzmann} machines.
\newblock \emph{Neural Networks}, 34:\penalty0 1--9, 2012.
\newblock \doi{10.1016/j.neunet.2012.06.003}.

\bibitem[Agliari et~al.(2012)Agliari, Barra, Galluzzi, Guerra, and Moauro]{Agliari_2012PRL}
Elena Agliari, Adriano Barra, Andrea Galluzzi, Francesco Guerra, and Francesco Moauro.
\newblock Multitasking associative networks.
\newblock \emph{Physical Review Letters}, 109\penalty0 (26):\penalty0 268101, 2012.
\newblock \doi{10.1103/PhysRevLett.109.268101}.

\bibitem[M{\'e}zard(2017)]{Mezard-PRE2017}
Marc M{\'e}zard.
\newblock Mean-field message-passing equations in the {Hopfield} model and its generalizations.
\newblock \emph{Physical Review E}, 95\penalty0 (2):\penalty0 022117, 2017.
\newblock \doi{10.1103/PhysRevE.95.022117}.

\bibitem[Tubiana and Monasson(2017)]{TubianaMonasson-PRL2017}
Jer\^ome Tubiana and R\'emi Monasson.
\newblock Emergence of compositional representations in restricted {Boltzmann} machines.
\newblock \emph{Physical Review Letters}, 118\penalty0 (13):\penalty0 138301, 2017.
\newblock \doi{10.1103/PhysRevLett.118.138301}.

\bibitem[Decelle and Furtlehner(2021)]{DecelleFurtlehner-2021}
Aur{\'e}lien Decelle and Cyril Furtlehner.
\newblock Restricted {Boltzmann} machine: Recent advances and mean-field theory.
\newblock \emph{Chinese Physics B}, 30\penalty0 (4):\penalty0 040202, 2021.
\newblock \doi{10.1088/1674-1056/abd160}.

\bibitem[di~Sarra et~al.(2025)di~Sarra, Bravi, and Roudi]{Roudi-EPL2025}
Giovanni di~Sarra, Barbara Bravi, and Yasser Roudi.
\newblock The unbearable lightness of restricted {Boltzmann} machines: Theoretical insights and biological applications.
\newblock \emph{Europhysics Letters}, 149\penalty0 (2):\penalty0 21001, 2025.
\newblock \doi{10.1209/0295-5075/ada636}.

\bibitem[Alemanno et~al.(2023)Alemanno, Aquaro, Kanter, Barra, and Agliari]{Alemanno_2023}
Francesco Alemanno, Miriam Aquaro, Ido Kanter, Adriano Barra, and Elena Agliari.
\newblock Supervised {Hebbian} learning.
\newblock \emph{Europhysics Letters}, 141\penalty0 (1):\penalty0 11001, 2023.
\newblock \doi{10.1209/0295-5075/aca55f}.

\bibitem[Leonelli et~al.(2021)Leonelli, Agliari, Albanese, and Barra]{LEONELLI2021314}
Francesca~E. Leonelli, Elena Agliari, Linda Albanese, and Adriano Barra.
\newblock On the effective initialisation for restricted {Boltzmann} machines via duality with {Hopfield} model.
\newblock \emph{Neural Networks}, 143:\penalty0 314--326, 2021.
\newblock \doi{10.1016/j.neunet.2021.06.017}.

\bibitem[Smart and Zilman(2021)]{SmartZilman-ICLR2021}
Matthew Smart and Anton Zilman.
\newblock On the mapping between {Hopfield} networks and restricted {Boltzmann} machines.
\newblock In \emph{International Conference on Learning Representations (ICLR)}, 2021.
\newblock arXiv:2101.11744.

\bibitem[Agliari et~al.(2026{\natexlab{a}})Agliari, Alessandrelli, Barra, Centonze, and Ricci-Tersenghi]{AGLIARI2026108181}
Elena Agliari, Andrea Alessandrelli, Adriano Barra, Martino~Salomone Centonze, and Federico Ricci-Tersenghi.
\newblock Networks of {Hebbian} networks: more is different.
\newblock \emph{Neural Networks}, 194:\penalty0 108181, 2026{\natexlab{a}}.
\newblock \doi{10.1016/j.neunet.2025.108181}.

\bibitem[Fachechi et~al.(2025)Fachechi, Agliari, Alessandrelli, and Mourao]{agliari2025multi}
Alberto Fachechi, Elena Agliari, Andrea Alessandrelli, and Paulo~Duarte Mour{\~a}o.
\newblock Multi-channel pattern reconstruction through {$L$}-directional associative memories.
\newblock In \emph{New Frontiers in Associative Memories}, 2025.

\bibitem[Agliari et~al.(2025)Agliari, Alessandrelli, Barra, Centonze, and Ricci-Tersenghi]{agliari2025generalized}
Elena Agliari, Andrea Alessandrelli, Adriano Barra, Martino~Salomone Centonze, and Federico Ricci-Tersenghi.
\newblock Generalized hetero-associative neural networks.
\newblock \emph{Journal of Statistical Mechanics: Theory and Experiment}, 2025\penalty0 (1):\penalty0 013302, 2025.
\newblock \doi{10.1088/1742-5468/ada094}.

\bibitem[Agliari et~al.(2026{\natexlab{b}})Agliari, Fachechi, and Mour{\~a}o]{AGLIARI2026131134}
Elena Agliari, Alberto Fachechi, and Paulo~Duarte Mour{\~a}o.
\newblock The beneficial role of noises for disentanglement tasks in modular {Hebbian} networks.
\newblock \emph{Physica A: Statistical Mechanics and its Applications}, 682:\penalty0 131134, 2026{\natexlab{b}}.

\bibitem[Ernoult et~al.(2019)Ernoult, Grollier, and Querlioz]{Ernoult-SciRep2019}
Maxence Ernoult, Julie Grollier, and Damien Querlioz.
\newblock Using memristors for robust local learning of hardware restricted {Boltzmann} machines.
\newblock \emph{Scientific Reports}, 9:\penalty0 1851, 2019.
\newblock \doi{10.1038/s41598-018-38181-3}.

\end{thebibliography}
\end{document}